\theoremstyle{plain}
\newtheorem{theorem}{Theorem}
\newtheorem{lemma}[theorem]{Lemma}
\newtheorem{corollary}[theorem]{Corollary}
\theoremstyle{definition}
\newtheorem{remark}{Remark}
\theoremstyle{definition}
\newcommand{\st}{\mathop{\mathrm{subject\,\,to}}}
\def\Proj{\mathbb{\Pi}}
\def\E{\mathbb{E}}
\def\P{\mathbb{P}}
\def\half{\frac{1}{2}}
\def\sign{\mathrm{sign}}
\def\spa{\mathrm{span}}
\def\hx{\hat{x}}
\def\hw{\hat{w}}
\def\hz{\hat{z}}
\def\hS{\hat{S}}
\def\htheta{\hat{\theta}}
\def\ttheta{\tilde{\theta}}
\def\htheta{\hat{\theta}}
\def\R{\mathbb{R}}
\def\cA{\mathcal{A}}
\def\cM{\mathcal{M}}
\def\cR{\mathcal{R}}
\def\cV{\mathcal{V}}
\def\one{\mathbbm{1}}
\def\TV{\mathrm{TV}}
\def\Proj{P}
\def\op{D}
\def\erf{\mathrm{erf}}
\definecolor{pr}{rgb}{0.4,0,0.65} 
\title{Approximate Recovery in Changepoint Problems, from $\ell_2$
  Estimation Error Rates}
\author{Kevin Lin$^1$ \and James Sharpnack$^2$ \and Alessandro
  Rinaldo$^1$ \and Ryan J. Tibshirani$^1$}
\date{$^1$Carnegie Mellon University, 
$^2$University of California at Davis} 
\begin{document}
\maketitle

\begin{abstract}
In the 1-dimensional multiple changepoint detection problem, we prove  
that any procedure with a fast enough $\ell_2$ error rate, in terms of
its estimation of the underlying piecewise constant mean vector,
automatically has an (approximate) changepoint screening  
property---specifically, each true jump in the underlying mean
vector has an estimated jump nearby. We also show, again assuming only
knowledge of the $\ell_2$ error rate, that a simple post-processing
step can be used to eliminate spurious estimated changepoints, and
thus delivers an (approximate) changepoint recovery
property---specifically, in addition to the screening property
described above, we are assured that each estimated jump has a
true jump nearby. As a special case, we focus on the application of
these results to the 1-dimensional fused lasso, i.e., 1-dimensional  
total variation denoising, and compare the implications with 
existing results from the literature.  We also study extensions
to related problems, such as changepoint detection over graphs.  

\bigskip
\noindent
Keywords: {\it changepoint detection}, {\it fused lasso}, {\it total
  variation denoising}, {\it approximate recovery}
\end{abstract}

\section{Introduction} 
\label{sec:intro}

Consider the 1-dimensional multiple changepoint model
\begin{equation}
\label{eq:model}
y_i = \theta_{0,i} + \epsilon_i, \quad i=1,\ldots,n,
\end{equation}
where $\epsilon_i$, $i=1,\ldots,n$ are i.i.d.\ errors, and
$\theta_{0,i}$, $i=1,\ldots,n$ is a piecewise constant mean sequence, 
having a set of changepoints
\begin{equation*}
S_0 = \big\{ i \in \{1,\ldots,n-1\} : \theta_{0,i} \not=
\theta_{0,i+1} \big\}. 
\end{equation*}
This is a well-studied problem, and there is a large body of 
literature on estimation of the piecewise constant mean
vector   
$\theta_0 \in \R^n$ in this model, as well as detection of its
changepoints $\theta_0$.  Though estimation (of $\theta_0$) and 
detection (of its changepoints) are clearly related pursuits, they are
different enough that most works on the changepoint problem are
focused on one or the other. For example, 1-dimensional total
variation denoising 
\citep{rudin1992nonlinear}, i.e., the 1-dimensional fused
lasso \citep{tibshirani2005sparsity}, has been
primarily studied from the perspective of its estimation
properties. Meanwhile, segmentation methods like binary
segmentation
\citep{vostrikova1981detecting,venkatraman1992consistency} 
and wild binary segmentation 
\citep{fryzlewicz2014wild} have been mostly studied for their
detection properties.  In this paper, we assert that the estimation
and detection problems are very closely linked, in the following
sense: 
any estimator with $\ell_2$ estimation error guarantees automatically
has certain approximate changepoint detection guarantees, and
not surprisingly, a faster $\ell_2$ estimation error rate 
here translates into a stronger statement about approximate detection.
We use this general link to establish new approximate changepoint
recovery results for the 1d fused lasso, an estimator that is 
given central focus in our work.

\subsection{Background and related work}

Given a data vector $y \in \R^n$ from a model as in \eqref{eq:model}, 
the {\it 1-dimensional fused lasso} (1d fused lasso, or simply fused
lasso) estimate is defined by
\begin{equation}
\label{eq:fl}
\htheta = \argmin_{\theta \in \R^n} \; 
\half \sum_{i=1}^n (y_i-\theta_i)^2 + 
\lambda \sum_{i=1}^{n-1} |\theta_i-\theta_{i+1}|, 
\end{equation}
where $\lambda \geq 0$ serves as a tuning parameter.  This was
proposed by \citet{tibshirani2005sparsity}\footnote{In 
  \citet{tibshirani2005sparsity}, the authors actually used an
  additional $\ell_1$ penalty on $\theta$ itself, to induce
  componentwise sparsity; here we do not consider this extension, and
  simply refer to the estimator in \eqref{eq:fl} as the fused lasso.}, 
though the same idea had been proposed in signal processing much
earlier, under the name {\it total variation (TV) denoising}, by  
\citet{rudin1992nonlinear}.  There has been plenty of statistical
theory developed for the fused lasso, e.g., 
\citet{mammen1997locally,davies2001local,rinaldo2009properties,
harchaoui2010multiple,qian2012pattern,
rojas2014change,dalalyan2014prediction}.  In particular,
\citet{mammen1997locally} and  
\citet{dalalyan2014prediction} derived $\ell_2$ error  
rates for the fused lasso, under different settings (different
assumptions on $\theta_0$).
We review these in \Fref{sec:review_estimation}, and 
compare the latter to a new related result that we establish in    
\Fref{sec:strong_james}. 
\citet{harchaoui2010multiple} and
\citet{rojas2014change} derived approximate
changepoint recovery properties for the fused lasso. We review these
in \Fref{sec:review_recovery}, and 
compare them to our own results on approximate recovery in 
\Fref{sec:screening_recovery}.         

The literature on general approaches in multiple changepoint
detection is enormous, and we do not give an extensive overview,
but we do summarize some relevant work of \citet{donoho1994ideal,  
donoho1998minimax,fryzlewicz2007unbalanced,boysen2009consistencies,
fryzlewicz2014wild, frick2014multiscale,fryzlewicz2016tail} in
Sections \ref{sec:review_estimation} and \ref{sec:review_recovery},
and revisit these results in more detail in 
\Fref{sec:screening_recovery} when we compare them to our new
approximate changepoint recovery results.
Extensions of the fused lasso such   
as trend filtering and the graph-based fused lasso have been analyzed 
by, e.g, \citet{sharpnack2012sparsistency,tibshirani2014adaptive, 
wang2016trend,hutter2016optimal},
which are discussed later in \Fref{sec:extensions}. 

\subsection{Notation}
\label{sec:notation}

For a vector $\theta \in \R^n$, we write $S(\theta)$ for the
set of its changepoint indices, i.e.,
\begin{equation*}
S(\theta) =  \big\{ i \in \{1,\ldots,n-1\} : \theta_i \not= 
\theta_{i+1} \big\}.
\end{equation*}
We abbreviate $S_0 = S(\theta_0)$ and \smash{$\hS = S(\htheta)$} for
the changepoints of the mean $\theta_0$ in \eqref{eq:model},
and the fused lasso estimate \smash{$\htheta$} in \eqref{eq:fl},
respectively.  Throughout, we will use the words ``changepoint''
and ``jump'' interchangeably. We will also make use of the following 
quantities defined in terms of $\theta_0$.  The size
of $S_0$ is denoted $s_0 = |S_0|$.  For convenience, we write
\smash{$S_0 = \{t_1,\ldots,t_{s_0} \}$}, where 
\smash{$1 \leq t_1 < \ldots< t_{s_0} < n$}, and 
by convention, \smash{$t_0=0$}, \smash{$t_{s_0+1}=n$}. 
The smallest distance  
between jumps in $\theta_0$ is denoted by  
\begin{equation}
\label{eq:Wn}
W_n = \min_{i=0,1\ldots,s_0} \, (t_{i+1} - t_i), 
\end{equation}
and the smallest distance between consecutive levels of $\theta_0$ by   
\begin{equation}
\label{eq:Hn}
H_n= \min_{i \in S_0} \; |\theta_{0,i+1} - \theta_{0,i}|.
\end{equation}
Our notation here makes the dependence of $W_n,H_n$ on $n$
explicit (of course stemming from the fact that the
mean vector $\theta_0$ itself changes with $n$, though for simplicity
we suppress this notationally.)

For a matrix $D \in \R^{m\times n}$, we write $D_S$ to extract 
rows of $D$ indexed by a subset $S \subseteq \{1,\ldots,m\}$,
and $D_{-S}$ as shorthand for \smash{$D_{-S}$}, where $-S =
\{1,\ldots,m\} \setminus S$.  Unless otherwise specified, the notation
$D \in \R^{(n-1)\times n}$ will be used to denote the difference 
operator
\begin{equation}
\label{eq:diff}
D = \left[\begin{array}{rrrrr}
-1 & 1 & 0 & \ldots &  0\\
0 & -1 & 1 & \ldots &  0\\
\vdots &  & \ddots & \ddots & \\ 
0 & 0 & \ldots & -1 & 1 
\end{array}\right].
\end{equation}
For a vector $x \in \R^n$, we
define its scaled $\ell_2$ norm \smash{$\|x\|_n=
  \|x\|_2/\sqrt{n}$}, and its discrete total variation   
\begin{equation} \label{eq:tv}
\TV(x) = \sum_{i=1}^{n-1} |x_i-x_{i+1}| = \|Dx\|_1.
\end{equation}
For two discrete sets $A,B$, we define the metrics 
\begin{equation}
\label{eq:metrics}
d(A|B) = \max_{b\in B} \, \min_{a \in A} \, |a-b|
\quad \text{and} \quad
d_H(A,B) = \max\big\{ d(A|B), d(B|A) \}. 
\end{equation}
The former metric can seen as a one-sided screening
distance from $B$ to $A$, measuring the furthest distance of an
element in $B$ to its closest element in $A$. The latter metric is 
traditionally known as the Hausdorff distance between $A$ and $B$.
Note that if $A$ is empty, then we have \smash{$d(A|B)=0$}, and if 
$B$ is empty, then \smash{$d(A|B)=\infty$}; this makes
$d_H(A,B)=\infty$ if either $A$ or $B$ are empty.

For deterministic sequences $a_n,b_n$ we write $a_n = O(b_n)$ 
to denote that $a_n/b_n$ is bounded for large enough $n$,
$a_n=\Omega(b_n)$ to denote that $b_n/a_n$ is bounded for large enough
$n$, and
$a_n=\Theta(b_n)$ to denote that both $a_n=O(b_n)$ and
$a_n=\Omega(b_n)$.  We also write $a_n=o(b_n)$ to denote that  
$a_n/b_n \to 0$, and $a_n=\omega(b_n)$ to denote that 
$b_n/a_n \to 0$.
Finally, we write $A_n = O_\P(B_n)$ for random sequences
$A_n,B_n$ to denote that $A_n/B_n$ is bounded in probability, and
$A_n=o_\P(B_n)$ to denote that $A_n/B_n \to 0$ in probability.

\subsection{Summary of results}

A summary of our contributions is as follows.  

\begin{itemize}
\item \textbf{New $\ell_2$ error analysis for the fused lasso,  
    under strong sparsity.} In \Fref{sec:strong_james}, we give a  
  new $\ell_2$ estimation error analysis for the fused lasso, in the 
  case $s_0 = O(1)$, which we refer to as the ``strong sparsity''
  case. \Fref{thm:strong_james} provides the bound 
  \begin{equation*}
    \|\htheta-\theta_0\|_n^2 = 
    O_\P \bigg(\frac{\log{n}\log\log{n}}{n}\bigg),
  \end{equation*}
  for the fused lasso estimate \smash{$\htheta$} in
  \eqref{eq:fl}. This is sharper than the previously established error
  rate of $\log^2{n}/n$, from \citet{dalalyan2014prediction}, for the
  fused lasso under strong sparsity, and quite close to the
  ``oracle'' rate of $\log{n}/n$ under strong sparsity, as we discuss
  in \Fref{rem:others_error_strong}. Our theorem also applies beyond
  the case of a constant sparsity level $s_0$, and gives an explicit
  error bound in 
  terms of $s_0$. We believe that the proof of \Fref{thm:strong_james}
  is interesting in its own right, as it leverages a new quantity that
  we call a {\it lower interpolant} to approximate the fused lasso 
  estimate in a certain sense 
  using $2s_0+2$ piecewise monotonic segments,
  which allows for finer control of the sub-Gaussian complexity.


\item \textbf{Bound on the screening distance, based on $\ell_2$ 
    error.}  In \Fref{sec:changepoint_screening}, we derive a bound
  on the sreening distance from $S_0$ to the detected changepoints 
\smash{$S(\ttheta)$} of any estimator \smash{$\ttheta$}, given a 
bound on its $\ell_2$ error rate 
\smash{$\|\ttheta-\theta_0\|_n^2=O_\P(R_n)$}.  Specifically, in 
\Fref{thm:changepoint_screening}, we show that 
\begin{equation*}
d\big(S(\ttheta)\,|\,S_0 \big) = O_\P 
\bigg( \frac{n R_n}{H_n^2} \bigg).
\end{equation*}
To emphasize, this bound on the screening distance is agnostic about
the details of the estimator \smash{$\ttheta$}, provided that its
$\ell_2$ error rate $R_n$ is known. As two principal applications, we
plug in   
the known error rate $R_n$ for the fused lasso under two different 
settings---weak and strong 
sparsity---to derive new screening results on the fused lasso in
Corollaries \ref{cor:changepoint_screening_weak} and 
\ref{cor:changepoint_screening_strong}.  Perhaps surprisingly (since
these screening bounds are not based on fine-grained analysis of the
fused lasso, but on achieved $\ell_2$ rates alone), these results
provide interesting conclusions in each of their own settings, as
we discuss in Remarks \ref{rem:screening_weak} and
\ref{rem:screening_strong}. 


\item \textbf{Bound on the Hausdorff distance, based on $\ell_2$ error  
    and a post-processing step.} In \Fref{sec:changepoint_recovery},
  we give a bound on the Hausdorff distance between $S_0$ and the 
  detected changepoints \smash{$S(\ttheta)$} of any estimator 
  \smash{$\ttheta$}, given a  
  bound \smash{$\|\ttheta-\theta_0\|_n^2=O_\P(R_n)$} and a simple 
filtering-based technique to remove spurious changepoints in
\smash{$S(\ttheta)$} that occur far away from elements of $S_0$.  
In particular, \Fref{thm:changepoint_recovery} states that the
filtered set \smash{$S_F(\ttheta)$} of changepoints satisfies
\begin{equation*}
\P\bigg( d_H\big(S_F(\ttheta), S_0 \big) \leq \frac{nR_n \nu_n}{H_n^2}
\bigg) \to 1 \quad \text{as $n \to \infty$}, 
\end{equation*}
where $\nu_n$ is any diverging sequence (i.e., diverging as
slowly as desirable).  As two applications, we consider
post-processing the changepoints from the fused lasso estimator, in
the weak and strong sparsity settings, in Corollaries
\ref{cor:changepoint_recovery_weak} and
\ref{cor:changepoint_recovery_strong}.  We compare these to
existing approximate changepoint recovery results in the literature in
Remarks \ref{rem:recovery_weak} and \ref{rem:recovery_strong}; the 
summary is that under strong sparsity, our result on the
post-processed fused lasso is comparable with the best known recovery
results, but under weak sparsity, our result is worse than the
guarantees given in
\citet{frick2014multiscale,fryzlewicz2014wild,fryzlewicz2016tail} for
other changepoint estimators. It should be reiterated that, unlike
other results in the literature which are based on detailed analyses
of specific changepoint estimators, our results are generic and based
only on $\ell_2$ error properties, making them widely applicable.
Therefore, a lack in sharpness in some cases, such as the weak
sparsity case, is perhaps not unexpected. 

\item \textbf{Practical guidelines for post-processing.} In 
  \Fref{sec:changepoint_recovery_reduced}, we present a modification
  of the aforementioned post-processing rule, which guarantees that
  the filtered set has at most \smash{$3\tilde{s}+2$}
  elements, where \smash{$\tilde{s}=|S(\ttheta)|$}. 
  In \Fref{sec:implementation}, we describe a data-driven procedure   
  to determine an appropriate threshold level
  for the filter, and we also conduct detailed empirical
  investigations of our proposals.

\item \textbf{Extension to piecewise linear segmentation, and  
    graph changepoint detection.} In 
  \Fref{sec:extensions}, we give extensions of our screening 
  results to two related settings: piecewise linear segmentation and 
  changepoint detection over graphs. For piecewise
  linear segmentation, the main screening result is in 
  \Fref{thm:knot_screening}, and its specialization to the trend
  filtering estimator is in \Fref{cor:knot_screening_weak}; for graph 
  changepoint detection, the main result is in
  \Fref{thm:graph_screening}, and its specialization to the 2d fused  
  lasso estimator is in \Fref{cor:graph_screening_2d}. 
\end{itemize}

\section{Preliminary review of existing theory} 

We review existing statistical theory for the fused lasso, first on
$\ell_2$ estimation error, and then on (approximate) changepoint 
recovery. 

\subsection{Review: $\ell_2$ estimation error} 
\label{sec:review_estimation}

We begin by describing two major results on the quantity
\begin{equation*}
\|\htheta-\theta_0\|_n^2,
\end{equation*}
the squared $\ell_2$ estimation error between the fused lasso estimate
\smash{$\htheta$} in \eqref{eq:fl} and the mean $\theta_0$ in
\eqref{eq:model}.  In somewhat of an abuse of notation, we will simply  
refer to the above quantity as the $\ell_2$ estimation error, or
estimation error for short.

The first result, from \citet{mammen1997locally}, studies what may be
called the ``weak sparsity'' case, in
which the total variation of $\theta_0$ is controlled.  Before stating
this, we recall that a random variable $Z$ is said to have a mean zero 
sub-Gaussian distribution provided that 
\begin{equation}
\label{eq:subgauss}
\E(Z)=0 \quad \text{and} \quad \P(|Z|>t) \leq M
\exp\big(-t^2/(2\sigma^2)\big) \quad \text{for $t \geq 0$},
\end{equation}
for some constants $M,\sigma>0$.  


\begin{theorem}[\textbf{Fused lasso error rate, weak sparsity setting,      
  Theorem 10 of \citealt{mammen1997locally}}]     
\label{thm:weak_sparsity}
Assume the data model in \eqref{eq:model}, with errors
$\epsilon_i$, $i=1,\ldots,n$ i.i.d.\ from a sub-Gaussian
distribution as in \eqref{eq:subgauss}.  Also assume that 
\smash{$\TV(\theta_0) \leq C_n$}, for a nondecreasing
sequence $C_n$. Then for a choice of tuning parameter
\smash{$\lambda=\Theta(n^{1/3}C_n^{-1/3})$}, the fused lasso estimate    
\smash{$\htheta$} in \eqref{eq:fl} satisfies
\begin{equation*}
\|\htheta - \theta_0\|_n^2 = O_\P (n^{-2/3} C_n^{2/3}). 
\end{equation*}
\end{theorem}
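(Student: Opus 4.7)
The plan is to combine the standard ``basic inequality'' for penalized least squares with an entropy bound on the class of total variation--bounded vectors, and then optimize over $\lambda$.

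First, the optimality of $\hat\theta$ in \eqref{eq:fl} gives, by comparing objective values at $\hat\theta$ and $\theta_0$, the basic inequality
\begin{equation*}
  \tfrac{1}{2}\|\hat\theta - \theta_0\|_2^2 \;\le\; \langle \epsilon, \hat\theta - \theta_0 \rangle + \lambda\bigl( \TV(\theta_0) - \TV(\hat\theta) \bigr).
\end{equation*}
Rearranging shows both that the squared $\ell_2$ error is bounded by the noise inner product plus $\lambda C_n$, and that $\TV(\hat\theta)$ itself is controlled once that noise inner product is bounded; in particular, it suffices to work on the event $\{\TV(\hat\theta) \le V\}$ for some $V \asymp C_n$.

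Next, the main technical step is a uniform bound on the empirical process
\begin{equation*}
  Z(r,V) \;=\; \sup\bigl\{ \langle \epsilon, \theta - \theta_0 \rangle :\; \theta \in \R^n,\; \|\theta-\theta_0\|_n \le r,\; \TV(\theta) \le V \bigr\}.
\end{equation*}
Because the errors are sub-Gaussian, this is a sub-Gaussian process in the $\|\cdot\|_2$-metric, so I would apply Dudley's chaining inequality. The critical ingredient is the metric entropy estimate
\begin{equation*}
  \log N\bigl(\delta,\; \{\theta \in \R^n : \TV(\theta) \le V\},\; \|\cdot\|_n\bigr) \;\lesssim\; V/\delta,
\end{equation*}
the finite-$n$ analogue of the classical Birman--Solomjak entropy bound for bounded variation functions. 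Translating to $\|\cdot\|_2$ (a factor of $\sqrt{n}$) and integrating Dudley's integral up to the diameter $\sqrt{n}\,r$ produces the bound $Z(r,V) = O_\P\bigl(\sqrt{V n r}\bigr)$ plus a subdominant tail contribution.

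Finally, I would insert this estimate back into the basic inequality. With $V \asymp C_n$, it becomes, schematically,
\begin{equation*}
  n\,\|\hat\theta-\theta_0\|_n^2 \;\lesssim\; \sqrt{C_n \cdot n \cdot \|\hat\theta-\theta_0\|_n} + \lambda C_n,
\end{equation*}
and solving for $\|\hat\theta-\theta_0\|_n$ yields $\|\hat\theta-\theta_0\|_n^2 = O_\P(n^{-2/3}C_n^{2/3})$ provided $\lambda C_n$ is at most of the same order, which forces $\lambda \lesssim n^{1/3} C_n^{-1/3}$; matching the two terms gives the stated choice $\lambda = \Theta(n^{1/3}C_n^{-1/3})$. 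The main obstacle I expect is that the TV-bounded class is not uniformly bounded in $\ell_\infty$, so the entropy bound and Dudley's integral must be combined with a peeling (dyadic slicing) argument over both $r$ and $V$ simultaneously, in order to break the circularity between $\|\hat\theta-\theta_0\|_n$ and $\TV(\hat\theta)$ and to convert the in-expectation bound into the stated in-probability statement; this localization step is where the delicate part of the Mammen--van de Geer analysis sits.
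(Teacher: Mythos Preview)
The paper does not actually prove this theorem; it is stated as a review of prior work (Theorem~10 of Mammen and van de Geer, 1997), and the reader is referred to \citet{tibshirani2014adaptive} for an exposition in the present notation. So there is no ``paper's own proof'' to compare against.

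That said, your outline is the standard route and is essentially correct: basic inequality, metric entropy bound $\log N(\delta,\{\TV\le V\},\|\cdot\|_n)\lesssim V/\delta$ for the discrete TV ball, Dudley/chaining to control the localized empirical process at rate $\sqrt{Vnr}$, and then solving the resulting self-bounding inequality. Your caveat about needing peeling over both the radius and the TV level (because $\TV(\hat\theta)$ is not known a priori and the class is not $\ell_\infty$-bounded) is exactly the right place to be careful; this is handled in the cited references either by a simultaneous peeling argument or by invoking a ratio-type empirical process bound such as Lemma~3.5 of \citet{van1990estimating} (which the present paper does use, for its strong-sparsity result). One small point: the lower bound $\lambda \gtrsim n^{1/3}C_n^{-1/3}$ is also needed, not just the upper bound, since $\lambda$ must be large enough to force $\TV(\hat\theta)=O_\P(C_n)$ in the first place; your sketch hints at this but does not quite close that loop.
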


\begin{remark}[\textbf{Consistency, optimality}]
This shows that the fused lasso estimator is consistent when
$C_n=o(n)$. When $C_n=O(1)$, its estimation error rate is
$n^{-2/3}$, which is in fact the minimax optimal rate as $\theta_0$
varies over the class of signals with bounded total variation, i.e., 
$\theta_0 \in \{\theta \in \R^n: \TV(\theta) \leq C\}$ for a constant
$C>0$ \citep{donoho1998minimax}.  For explanations of  
the above theorem and this minimax result, in
notation that is more consistent with that of the current paper, see 
\citet{tibshirani2014adaptive}.   
\end{remark}

The second result, from \citet{dalalyan2014prediction}, studies what
may be called the ``strong sparsity'' case, in which the number of
changepoints $s_0$ in $\theta_0$ is controlled.

\begin{theorem}[\textbf{Fused lasso error rate, strong sparsity
    setting, Proposition 4 of \citealt{dalalyan2014prediction}}]  
\label{thm:strong_sparsity}
Assume the data model in \eqref{eq:model}, with errors
$\epsilon_i$, $i=1,\ldots,n$ drawn i.i.d.\ from $N(0,\sigma^2)$. 
Then for a choice of 
tuning parameter \smash{$\lambda =\sqrt{ 2n \log(n/\delta)}$}, 
the fused lasso estimate \smash{$\htheta$} in \eqref{eq:fl} satisfies
\begin{equation*}
\|\htheta - \theta_0\|_n^2 \leq c 
\frac{ s_0 \log(n/\delta) }{n} 
\bigg( \log{n} + \frac{n}{W_n} \bigg),
\end{equation*}
with probability at least $1-2\delta$, for all $\delta>0$ and all $n
\geq N$, where $c,N>0$ are constants, and recall $W_n$ is the minimum
distance between jumps in $\theta_0$, as in \eqref{eq:Wn}. 
\end{theorem}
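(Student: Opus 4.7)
The plan is the standard route for $\ell_1$-penalized least-squares problems: apply the KKT basic inequality, then bound the right-hand side by an orthogonal decomposition of the error tailored to the block structure of $\theta_0$, and finally absorb cross terms back into $\|\htheta-\theta_0\|_2^2$. Concretely, optimality of $\htheta$ in \eqref{eq:fl} gives
\begin{equation*}
\tfrac{1}{2}\|\htheta-\theta_0\|_2^2 \le \epsilon^\top(\htheta-\theta_0) + \lambda\bigl(\|D\theta_0\|_1 - \|D\htheta\|_1\bigr),
\end{equation*}
and the coordinate-wise reverse triangle inequality (using that $(D\theta_0)_{-S_0}=0$) bounds the penalty difference by $\lambda\|(D(\htheta-\theta_0))_{S_0}\|_1 - \lambda\|(D\htheta)_{-S_0}\|_1$.

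For the noise term, I would introduce the $(s_0+1)$-dimensional subspace $V\subset\R^n$ of vectors that are constant on each block $B_j=\{t_{j-1}+1,\dots,t_j\}$ of $\theta_0$, with orthogonal projection $P$, and write $\htheta-\theta_0 = w+z$ with $w = P(\htheta-\theta_0)\in V$ and $z\in V^\perp$. Then $\epsilon^\top w$ is handled by Cauchy--Schwarz together with the $\chi^2_{s_0+1}$ tail bound $\|P\epsilon\|_2^2 \lesssim \sigma^2(s_0+\log(1/\delta))$, whose contribution is absorbed by Young's inequality at the cost of an additive $O(\sigma^2 s_0 \log(n/\delta))$. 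For $\epsilon^\top z$, discrete summation by parts on each block expresses it as $-\sum_j \sum_k S_k^{(j)}(z_{k+1}-z_k)$, where $S_k^{(j)}=\sum_{i\le k,\,i\in B_j}(\epsilon_i - \bar\epsilon_{B_j})$ is Gaussian with variance at most $\sigma^2 |B_j|/4$; a union bound over the $\le n$ such linear functionals gives
\begin{equation*}
|\epsilon^\top z| \;\le\; C\sigma\sqrt{n\log(n/\delta)}\,\|(D\htheta)_{-S_0}\|_1,
\end{equation*}
and the prescribed $\lambda = \sqrt{2n\log(n/\delta)}$ is exactly calibrated to absorb this against the $-\lambda\|(D\htheta)_{-S_0}\|_1$ from the penalty difference.

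It remains to control $\lambda\|(D(\htheta-\theta_0))_{S_0}\|_1 \le \lambda\sqrt{s_0}\,\|(D(\htheta-\theta_0))_{S_0}\|_2$. The key Poincar\'e-type inequality is $\|(Dw)_{S_0}\|_2 \le 2 W_n^{-1/2}\|w\|_2$, which holds because $w$ is block-constant on blocks of size at least $W_n$; together with Young's inequality, this absorbs $\tfrac{1}{4}\|\htheta-\theta_0\|_2^2$ on the left and leaves an additive $O(\lambda^2 s_0/W_n) = O(s_0 n\log(n/\delta)/W_n)$ on the right, which is the origin of the $n/W_n$ factor in the theorem. The residual $(Dz)_{S_0}$ contribution is handled by the within-block total-variation bound $\|z_{B_j}\|_\infty \le \|(D\htheta)_{-S_0}|_{B_j}\|_1$, which telescopes to an extra multiple of $\|(D\htheta)_{-S_0}\|_1$ already absorbed by $\lambda$. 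Dividing by $n$ gives the stated rate. The main obstacle is precisely this last step: the constants in the Poincar\'e inequality and in the $(Dz)_{S_0}$ telescoping must be sharp enough that, after all Young's-inequality absorptions, the net coefficient of $\|(D\htheta)_{-S_0}\|_1$ on the right stays non-positive---this is the constraint that pins down the choice $\lambda = \sqrt{2n\log(n/\delta)}$ and the precise constants appearing in the bound.
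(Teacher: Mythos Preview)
First, note that the paper does not actually prove this theorem: it is quoted from \citet{dalalyan2014prediction} as part of the literature review, so there is no ``paper's own proof'' to compare against. That said, your sketch has a genuine gap that is worth naming.

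The problem is in your last step, handling $\lambda\|(D(\htheta-\theta_0))_{S_0}\|_1$ via the split $\htheta-\theta_0=w+z$. Your bound $\|z_{B_j}\|_\infty \le \|(D\htheta)_{-S_0}|_{B_j}\|_1$ gives
\[
\lambda\|(Dz)_{S_0}\|_1 \;\le\; \lambda\sum_{i\in S_0}\bigl(|z_{t_i+1}|+|z_{t_i}|\bigr) \;\le\; 2\lambda\,\|(D\htheta)_{-S_0}\|_1,
\]
which you then claim is ``already absorbed by $\lambda$.'' But the only negative contribution is $-\lambda\|(D\htheta)_{-S_0}\|_1$ from the penalty difference, and the noise term $\epsilon^\top z$ already consumes roughly half of that via your summation-by-parts bound. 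The net coefficient of $\|(D\htheta)_{-S_0}\|_1$ on the right-hand side is therefore at least $+\lambda$, not $\le 0$, and the inequality does not close. No adjustment of constants fixes this: the factor of $2$ in $2\lambda$ is intrinsic to the crude endpoint bound $|(Dz)_{t_i}|\le|z_{t_i+1}|+|z_{t_i}|$.

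What is missing is a sharper relationship between $\|D_{S_0}(\htheta-\theta_0)\|_1$, $\|D_{-S_0}(\htheta-\theta_0)\|_1$, and $\|\htheta-\theta_0\|_2$ that holds \emph{simultaneously}, rather than term by term. In \citet{dalalyan2014prediction} this comes from a compatibility-constant computation for the 1d difference operator (which is also where the extra $\log n$ factor in the final bound originates---note that your argument, even if it closed, would not produce that factor). The present paper, in proving its own \Fref{thm:strong_james}, sidesteps the same difficulty with the lower-interpolant construction of \Fref{lem:interp}: the key inequality \eqref{eq:Zcond2} there gives $\|D_{S_0}\hat x\|_1 \le \|D_{-S_0}\hat z\|_1 + 4\sqrt{s_0/W_n}\|\hat z\|_2$, so the $S_0$-part is bounded by a piece that is \emph{exactly} cancelled by part of $\|D_{-S_0}\hat x\|_1$, plus a small $\ell_2$-controlled remainder. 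Either route would repair your argument; the naive endpoint bound does not.
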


\begin{remark}[\textbf{The roles of $s_0,W_n$}]
When the number of jumps $s_0$ in $\theta_0$ grows quickly enough with 
$n$, the error rate in \Fref{thm:strong_sparsity} will become
worse than that in \Fref{thm:weak_sparsity}.  Given $s_0$ jumps,
in the best case, the minimum gap $W_n$ between jumps scales as 
$W_n=\Theta(n/s_0)$, which delivers a rate of 
\smash{$s_0 \log^2{n}/n + s_0^2 \log{n}/n$} in 
\Fref{thm:strong_sparsity}.  When $s_0$ scales faster than 
\smash{$n^{1/6}(\log{n})^{-1/2}$}, we can see that this is 
slower than the \smash{$n^{-2/3}$} rate delivered by
\Fref{thm:weak_sparsity} (assuming $C_n=O(1)$).  

Of course, \Fref{thm:strong_sparsity} is most useful when $s_0=O(1)$.
When this is true, and additionally $W_n=\Theta(n)$, we see that the
theorem implies that the fused lasso has $\ell_2$ error
\smash{$\|\htheta - \theta_0\|_n^2  =  O_\P(\log^2{n}/n)$}. This is a
very fast rate, nearly equal to the ``parametric rate'' of $1/n$ 
associated with estimating a finite-dimensional parameter.  
\end{remark}

\begin{remark}[\textbf{Alternative fused lasso error rate, strong
    sparsity setting}]
\label{rem:harchaoui_error}
When $s_0=O(1)$,
Proposition 2 in \citet{harchaoui2010multiple} proves that
the fused lasso has estimation error 
\smash{$\|\htheta - \theta_0\|_n^2  = O(\log{n}/n)$} with probability
approaching 1, under a choice
\smash{$\lambda=\Theta(\sqrt{\log{n}/n^3})$}. But the authors
must also assume that the number of changepoints in the fused lasso 
estimate \smash{$\htheta$}, which we might denote as
\smash{$\hat{s}=|\hS|$}, is bounded with probability tending to 1.  
This seems to be an unrealistic assumption, given the 
required scaling for $\lambda$.  Theoretically, we remark that such a 
small choice of $\lambda$, on the order of
\smash{$\sqrt{\log{n}/n^3}$}, does not match the much larger choices
dictated by Theorems \ref{thm:strong_sparsity} and
\ref{thm:strong_james}, both on the order of approximately 
\smash{$\sqrt{n}$}. Empirically, when
$\lambda$ scales as \smash{$\sqrt{\log{n}/n^3}$}, we find that the
number of estimated changepoints in \smash{$\htheta$} often grows very
large, even when $\theta_0$ has few jumps and the signal-to-noise
ratio is quite high.   
\end{remark}

\begin{remark}[\textbf{Comparable error rates of other estimators}] 
\label{rem:others_error}
Various other estimators obtain comparable estimation error rates to 
those descibed above for the fused lasso. The Potts estimator, defined
by replacing the $\ell_1$ penalty \smash{$\sum_{i=1}^{n-1}
  |\theta_i-\theta_{i+1}|$} in \eqref{eq:fl} with the
$\ell_0$ penalty \smash{$\sum_{i=1}^{n-1} 1\{\theta_i \not=
  \theta_{i+1}\}$}, and denoted say by 
\smash{$\htheta^{\mathrm{Potts}}$}, has been shown to satisfy 
\smash{$\|\htheta^{\mathrm{Potts}} - 
\theta_0\|_n^2 = O((\log{n}/n)^{2/3})$} a.s.\ when $\TV(\theta_0) =
O(1)$, and \smash{$\|\htheta^{\mathrm{Potts}} - 
\theta_0\|_n^2 = O(\log{n}/n)$} a.s.\ when $s_0=O(1)$,
by \citet{boysen2009consistencies}.  Wavelet 
denoising (under weak conditions on the wavelet 
basis), denoted by \smash{$\htheta^{\mathrm{wav}}$}, has been shown    
to satisfy \smash{$\E\|\htheta^{\mathrm{wav}} - 
\theta_0\|_n^2 = O(n^{-2/3})$} 
when $\TV(\theta_0) = O(1)$, 
by \citet{donoho1998minimax}, and 
\smash{$\E\|\htheta^{\mathrm{wav}} -  
\theta_0\|_n^2 = O(\log^2{n}/n)$} when $s_0=O(1)$,
by \citet{donoho1994ideal}. 
Combining unbalanced Haar (UH) wavelets  
with a basis selection method, \citet{fryzlewicz2007unbalanced}  
gave an estimator \smash{$\htheta^{\mathrm{UH}}$} with
\smash{$\E\|\htheta^{\mathrm{UH}} - 
\theta_0\|_n^2 = O(\log^2{n}/n^{2/3})$} when $\TV(\theta_0) = O(1)$,  
and \smash{$\E\|\htheta^{\mathrm{UH}} -  
\theta_0\|_n^2 = O(\log^2{n}/n)$} when $s_0=O(1)$.  Though they are
not written in this form, the results in
\citet{fryzlewicz2016tail} imply that his ``tail-greedy'' unbalanced
Haar (TGUH) estimator, \smash{$\htheta^{\mathrm{TGUH}}$}, satisfies 
\smash{$\|\htheta^{\mathrm{TGUH}} -  
\theta_0\|_n^2 = O(\log^2{n}/n)$} with probability tending to 1, 
when $s_0=O(1)$.  
\end{remark}

\subsection{Review: changepoint recovery}
\label{sec:review_recovery}

Next, we review the relevant results on the quantities
\[
d(\hS \,|\, S_0\big) \quad \text{or} \quad d_H(\hS, S_0).
\]
The former is the screening distance  from $S_0$ to the set of
changepoints \smash{$\hS=S(\htheta)$} in the fused lasso estimate
\smash{$\htheta$} in \eqref{eq:fl}; the latter is the Hausdorff
distance between $S_0$ and \smash{$\hS$}; recall, both metrics were 
defined in \eqref{eq:metrics}. We use the term ``approximate 
screening'' to mean that   
\smash{$d(\hS \,|\,  S_0)$} is controlled, and ``approximate
recovery'' to mean that \smash{$d_H(\hS, S_0)$} is controlled, though
often times we will drop the word ``approximate'' from either term,
for brevity. Below we summarize two results from
\citet{harchaoui2010multiple}. 

\begin{theorem}[\textbf{Fused lasso approximate screening and 
  recovery results, strong sparsity setting, 
  Propositions 3 and 4 of \citealt{harchaoui2010multiple}}] 
\label{thm:harchaoui_recovery}
Assume the data model in \eqref{eq:model}, where the errors
$\epsilon_i$, $i=1,\ldots,n$ are i.i.d. from a sub-Gaussian 
distribution as in \eqref{eq:subgauss}. Assume also that
(i) $s_0 = O(1)$, (ii) \smash{$\P( \hat{s} \geq s_0) \to 1$}, and
that $r_n$ is a sequence satisfying
(iii) $r_n \leq W_n$, (iv)
\smash{$r_n=\omega(\max\{\log{n}/H_n^2, \lambda/H_n\})$}, where,
recall, $W_n$ is the minimum distance between changepoints in
$\theta_0$, as defined in \eqref{eq:Wn}, and $H_n$ is the minimum gap
betwen levels of $\theta_0$, as defined in \eqref{eq:Hn}.
Then the fused lasso estimate \smash{$\hat{\theta}$} in \eqref{eq:fl}
with tuning parameter $\lambda$ satisfies
\[
\P\Big( d(\hS \,|\, S_0 ) \leq r_n \Big) \to 1 
\quad \text{as} \quad n \to \infty.
\]
Under assumptions (i), (ii') \smash{$\P( \hat{s} = s_0) \to 1$},
(iii), and (iv') \smash{$r_n=\omega(\max\{\log{n}/H_n^2,
  \log{(n^5/\lambda^2)}/H_n^2\})$}, we instead have
\[
\P\Big(d_H(\hS, S_0) \leq r_n\Big) \to 1 
\quad \text{as} \quad n \to \infty.
\]
\end{theorem}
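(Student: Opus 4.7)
The plan is to exploit the KKT conditions for the fused lasso problem \eqref{eq:fl}, which read $y - \htheta = \lambda D^\top \hat{u}$, with $\hat{u} \in \R^{n-1}$ satisfying $\|\hat{u}\|_\infty \leq 1$ and $|\hat{u}_i| = 1$ for each $i \in \hS$. Adopting the convention $\hat{u}_0 = \hat{u}_n = 0$ and telescoping the stationarity equation over any interval yields the \emph{master inequality}
\[
\left|\sum_{i=a}^b (y_i - \htheta_i)\right| \;\leq\; 2\lambda, \qquad 1 \leq a \leq b \leq n.
\]
When $\htheta$ is constant, equal to $c$, on $[a,b]$, this specializes to $|c - (b-a+1)^{-1}\sum_{i=a}^b y_i| \leq 2\lambda/(b-a+1)$.

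For the screening statement I would argue by contradiction: suppose some $t_k \in S_0$ has no element of $\hS$ within distance $r_n$. Since $r_n \leq W_n$, the window $[t_k - r_n + 1,\, t_k + r_n]$ lies in a single block of $\htheta$ (say with value $c$) and contains exactly the single jump of $\theta_0$ at $t_k$. Applying the master inequality separately to the two halves $[t_k - r_n + 1, t_k]$ and $[t_k + 1, t_k + r_n]$---on which $\theta_0$ equals $\theta_{0,t_k}$ and $\theta_{0,t_k+1}$, differing by at least $H_n$---substituting $y_i = \theta_{0,i} + \epsilon_i$, and subtracting, I get
\[
r_n H_n \;\leq\; 4\lambda \;+\; \left|\sum_{i=t_k - r_n + 1}^{t_k} \epsilon_i\right| + \left|\sum_{i=t_k + 1}^{t_k + r_n} \epsilon_i\right|.
\]
A uniform sub-Gaussian maximal inequality over the $O(n^2)$ integer subintervals of $\{1,\ldots,n\}$ bounds each partial sum by $O_\P(\sqrt{r_n \log n})$, forcing $r_n = O(\lambda/H_n + \log n/H_n^2)$ and contradicting (iv). A union bound over the $s_0 = O(1)$ true jumps then completes the screening argument.

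For recovery I would first invoke screening to obtain, with probability tending to one, some $\hat{t} \in \hS$ within $r_n$ of every $t_k$. Combined with (ii'), which fixes $|\hS| = s_0$, the remaining task is to rule out a ``spurious'' $\hat{t}_j \in \hS$ sitting at distance greater than $r_n$ from every $t_k \in S_0$; indeed, if no such $\hat{t}_j$ exists then $d(S_0\,|\,\hS) \leq r_n$ as well. To exclude a spurious $\hat{t}_j$, I would analyze the KKT system in a neighborhood of $\hat{t}_j$: since $\theta_0 \equiv v$ on $[\hat{t}_j - r_n,\,\hat{t}_j + r_n]$, the master inequality forces the adjacent fitted values $c_\pm$ to lie within $O(\lambda/r + \sqrt{\log n / r})$ of $v$, where $r$ is the length of the shorter adjacent block of $\htheta$ restricted to the constant-$v$ region, while the dual constraint $|\hat{u}_{\hat{t}_j}| = 1$ together with the block evolution $\hat{u}_b - \hat{u}_a = \lambda^{-1}\sum_{i=a+1}^b (\htheta_i - y_i)$ encodes a random-walk-like excursion of $\hat{u}$ to $\pm 1$ in a region driven only by the noise. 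Condition (iv') controls the probability of such an excursion uniformly over all admissible block configurations, with the $\log(n^5/\lambda^2)$ factor emerging from this union bound.

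The main obstacle is the recovery half. Screening falls out cleanly from KKT plus a single sub-Gaussian maximal inequality, but ruling out spurious $\hat{t}_j$ is more delicate: the master inequality alone controls only block averages, whereas the jump $c_+ - c_-$ across a spurious changepoint is a second-order quantity that must be pinned down by combining the two-sided bound $\|\hat{u}\|_\infty \leq 1$ with the active constraint $|\hat{u}_{\hat{t}_j}| = 1$ and then taking a careful maximum over locations and block sizes. Identifying the precise union bound that yields the factor $\log(n^5/\lambda^2)$, rather than a cruder $\log n$, is the technical heart of this half of the proof.
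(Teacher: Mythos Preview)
The paper does not prove this statement at all: \Fref{thm:harchaoui_recovery} is a restatement of Propositions~3 and~4 of \citet{harchaoui2010multiple}, placed in the ``Preliminary review of existing theory'' section (Section~\ref{sec:review_recovery}) purely as background. No proof or proof sketch appears anywhere in the paper or its appendices; the authors simply cite the result and then critique its assumptions in \Fref{rem:stringency}. So there is nothing in the paper to compare your proposal against.

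That said, your proposal is a faithful reconstruction of the strategy in the original \citet{harchaoui2010multiple} paper. The screening half is essentially complete: the KKT identity $y-\htheta=\lambda D^\top\hat u$ with $\|\hat u\|_\infty\le 1$ does yield the interval inequality $|\sum_{i=a}^b(y_i-\htheta_i)|\le 2\lambda$, and your contradiction argument on the two half-windows around a missed $t_k$, combined with a uniform sub-Gaussian maximal bound over $O(n^2)$ partial sums, reproduces exactly the rate in condition~(iv). One remark: your argument for screening never uses assumption~(ii), and indeed the KKT route does not need it---the condition $\P(\hat s\ge s_0)\to 1$ in the statement appears to be an artifact of how \citet{harchaoui2010multiple} organized their proof rather than a genuine requirement for the screening conclusion.

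For the recovery half you correctly identify the difficulty and the right ingredients (the active dual constraint $|\hat u_{\hat t_j}|=1$ at a spurious jump, the random-walk excursion of $\hat u$ on a noise-only stretch, and a union bound over admissible block configurations), but you stop short of executing the calculation that produces the $\log(n^5/\lambda^2)$ factor in~(iv'). That is indeed the technical heart of the original proof, and your sketch is accurate but incomplete on this point.
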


\begin{remark}[\textbf{Stringency of conditions}]
\label{rem:stringency}
Assumption (ii') in the above result, needed for the bound on the 
Hausdorff distance, states that the number of estimated changepoints 
\smash{$\hat{s}$} in \smash{$\htheta$} equals the number of
changepoints $s_0$ in $\theta_0$ with probability tending to 1, which
is of course a very strong assumption.  Assumption (ii), needed for
the bound on  
the one-sided screening distance, states that \smash{$\hat{s} \geq
  s_0$}, which is itself fairly strong, though believable if $\lambda$ 
is chosen to be small enough.  

The tuning parameter $\lambda$ in fact plays an important role in the
achieved rates $r_n$ in \Fref{thm:harchaoui_recovery}.  In order to
satisfy condition (iv) on $r_n$, a choice of $\lambda=\log{n}/H_n$
gives the tightest possible scaling for $r_n$. Then we require $r_n$ 
to grow faster than $\log{n}$, e.g., at the rate
\smash{$\log^2{n}$}.  This choice is basically the same as that 
discussed in \citet{harchaoui2010multiple}; with this choice, their
results show that the screening distance achieved by the fused lasso
estimator is at most \smash{$\log^2{n}$}, with probability tending to
1. However, the choice of $\lambda$ here is worrisome---it is
considerably smaller than the choices known to 
achieve reasonable error rates, specifically in the strong
sparsity setting, with $s_0=O(1)$, where we expect $\lambda$ to scale
at something like a \smash{$\sqrt{n}$} rate (see Theorems 
\ref{thm:strong_sparsity} and \ref{thm:strong_james}).  With such a
small choice of \smash{$\lambda=\log n/H_n$}, there would likely be a
very large number of
estimated changepoints in \smash{$\htheta$}, rendering a 
quantity like the screening distance uninteresting.\footnote{Of
course, with $\lambda=0$, the screening distance achieved by the
fused lasso is trivially zero. Hence, when studying screening
distance, it is implicitly understood that some other aspect of
\smash{$\htheta$} must be kept in balance. In our work, we study
screening distances while maintaining that \smash{$\htheta$}
must exhibit good estimation performance, as measured by its
$\ell_2$ error rate.}  
The same critique could made be about 
condition (iv'), needed for the bound on the Hausdorff distance.
In particular, condition (ii') seems unrealistic
unless $\lambda$ is chosen to be much larger.  

For the reasons just described, we will consider a larger
scaling of $\lambda=\Theta(\sqrt{n})$, when comparing 
\Fref{thm:harchaoui_recovery} to our new results on changepoint
screening and recovery in Sections \ref{sec:changepoint_screening} 
and \ref{sec:changepoint_recovery}.
\end{remark}

\begin{remark}[\textbf{Other fused lasso recovery results}]
Several other results have appeared in the literature regarding
changepoint recovery for the fused lasso.
\citet{rinaldo2009properties} studied exact recovery
of changepoints (in which the achieved Hausdorff
distance would be zero).  There is an error in the proof of his
Theorem 2.3, which invalidates the 
result.\footnote{See the correction note posted at   
\url{http://www.stat.cmu.edu/\~arinaldo/Fused\_Correction.pdf}.} 
\citet{qian2012pattern} studied a modification of the fused lasso
defined by transforming the fused lasso problem \eqref{eq:fl} into a
lasso problem with 
particular design matrix $X$, and then applying a step that
``preconditions'' $y$ and $X$.  The authors concluded that exact  
recovery is possible with probability at tending to 1, as long as the
minimum signal gap and tuning parameter satisfy 
\smash{$H_n \geq \lambda = \omega(\sqrt{\log{n}})$}.  This is a
very strong requirement on the scaling of the signal gap
$H_n$; \citet{sharpnack2012sparsistency} showed that, when
\smash{$H_n \geq \omega(\sqrt{\log{n}})$}, 
even simple pairwise  
thresholding (i.e., thresholding based on the observed absolute
differences $|y_i-y_{i+1}|$, $i=1,\ldots,n-1$) achieves exact 
recovery.  Most recently, \citet{rojas2014change} established an 
impossibility result for the fused lasso estimator when 
$\theta_0$ exhibits a ``staircase'' pattern, which means that
\smash{$D_{S_0}\theta_0$} has two consecutive positive or negative
values; specifically, these authors proved that for such a staircase
pattern, the quantity \smash{$d_H(\hS, S_0)/n$} remains bounded away
from zero with nonzero asymptotic probability.  For non-staircase
patterns in $\theta_0$, the authors also showed, under certain
assumptions, that \smash{$d_H(\hS, S_0)/n$} converges to zero in
probability. 
\end{remark}

\begin{remark}[\textbf{Comparable recovery properties of other
    estimators}]  
\label{rem:others_recovery}
It is worth describing relevant changepoint recovery properties of
various methods in the literature. 
\citet{boysen2009consistencies} showed that the Potts estimator,
denoted by \smash{$\htheta^{\mathrm{Potts}}$}, satisfies 
\smash{$d_H(S(\htheta^{\text{Potts}}), S_0) = O(\log n)$} a.s.,
when $s_0=O(1)$. \citet{frick2014multiscale} proposed 
a simultaneous multiscale changepoint estimator (SMUCE), using  
an $\ell_0$-penalized optimization problem (like the Potts estimator), 
and under weak assumptions on $W_n,H_n$, proved that
their estimator \smash{$\htheta^{\mathrm{SMUCE}}$} satisfies 
\smash{$d_H(S(\htheta^\mathrm{SMUCE}), S_0) = O(\log{n}/H_n^2)$} 
with probability tending to 1.
There is quite a large body of literature on binary 
segmentation (BS). To the best of our knowledge,   
the sharpest analysis for BS is in
\citet{fryzlewicz2014wild}, who also proposed and analyzed a ``wild''  
(i.e., randomized) variant of the method (WBS). 
Denoting these two estimators by \smash{$\htheta^{\mathrm{BS}}$} and   
\smash{$\htheta^{\mathrm{WBS}}$}, \citet{fryzlewicz2014wild}
established that
\smash{$d_H(S(\htheta^\text{BS}), S_0) = O(n^2\log{n}/
 (W_n^2H_n^2))$} and 
\smash{$d_H(S(\htheta^\text{BS}), S_0) = O(\log{n}/H_n^2)$}, both   
with probability tending to 1, and both under certain restrictions
on $W_n,H_n$, these restrictions being stronger for BS than for WBS. 
Very recently, \citet{fryzlewicz2016tail} proved that his tail-greedy 
unbiased Haar estimator, denoted by \smash{$\htheta^{\mathrm{TGUH}}$}, 
satisfies \smash{$d_H(S(\htheta^\text{TGUH}), S_0) = O(\log^2{n})$}
with probability tending to 1, under weak conditions on $W_n,H_n$.
All of these results will be revisited in greater detail in Remarks
\ref{rem:recovery_weak} and \ref{rem:recovery_strong}. 


Lastly, it should be noted that many of the methods
described here also come  
with a guarantee (under possibly additional conditions) that they
correctly identify the number of changepoints 
$s_0$ in $\theta_0$, with probability tending to 1.  We refer the
reader to the references above, for details.
\end{remark}

\section{Error analysis under strong sparsity} 
\label{sec:strong_james}

In this section, we derive a new $\ell_2$ estimation error bound for
the fused lasso in the strong sparsity case, improving on
the result of \citet{dalalyan2014prediction} stated in
\Fref{thm:strong_sparsity}. Our proof is based on the 
concept of a {\em lower interpolant}, which as far as we can tell
is a new idea that may be of interest in its own right.   
We first state our error bound.

\begin{theorem}[\textbf{Fused lasso error rate, strong sparsity
    setting}]   
\label{thm:strong_james}
Assume the data model in \eqref{eq:model}, with errors
$\epsilon_i$, $i=1,\ldots,n$ i.i.d.\ from a sub-Gaussian
distribution as in \eqref{eq:subgauss}.  Then for a choice of tuning
parameter \smash{$\lambda=(nW_n)^{1/4}$}, the the fused lasso estimate  
\smash{$\htheta$} in \eqref{eq:fl} satisfies
\begin{equation*}
\|\htheta - \theta_0\|_n^2 \leq \gamma^2  c \frac{s_0}{n}
\Bigg( (\log{s_0}+\log\log{n}) \log{n} + \sqrt{\frac{n}{W_n}} \Bigg), 
\end{equation*}
with probability at least $1-\exp(-C\gamma)$, for all
$\gamma > 1$ and $n \geq N$, where $c,C,N>0$ are constants.
\end{theorem}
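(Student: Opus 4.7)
The plan is to work from the standard basic inequality for the fused lasso. Since $\htheta$ minimizes the penalized objective, one has
$$\tfrac{1}{2}\|\htheta - \theta_0\|_2^2 \leq \langle \epsilon, \htheta - \theta_0 \rangle + \lambda\bigl(\|D\theta_0\|_1 - \|D\htheta\|_1\bigr).$$
The penalty difference can be handled in the usual lasso-style way by splitting $\ell_1$ norms according to $S_0$ and its complement, which bounds it by $\|(D(\htheta-\theta_0))_{S_0}\|_1 - \|(D\htheta)_{-S_0}\|_1$. The central task is then to control the noise inner product $\langle \epsilon, \htheta - \theta_0\rangle$ in a way that exploits both the smallness of $s_0$ and the gap $W_n$; a direct chaining bound treating $\htheta - \theta_0$ as an arbitrary total-variation-bounded vector is too crude and only recovers the $\log^2 n/n$ rate of Theorem~\ref{thm:strong_sparsity}.

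The new idea is a \emph{lower interpolant} $\ttheta$: a piecewise monotonic vector that pinches $\htheta - \theta_0$ from below at (or very near) the $s_0$ true changepoints $t_1,\ldots,t_{s_0}$ and the two boundary points, producing at most $2s_0+2$ monotonic segments (two per inter-jump interval, a rising piece and a falling piece). The payoff is that each segment lives in a monotone cone, on which the localized sub-Gaussian width is sharp: for i.i.d.\ sub-Gaussian noise and a segment of length $m$, one has $\sup\{\langle \epsilon, u\rangle : u \text{ monotone},\,\|u\|_2 \leq R\} = O(R\sqrt{\log m})$ with high probability, via Dudley-type chaining on the monotone cone. A union bound over the $2s_0+2$ segments, combined with dyadic peeling over $R = \|\ttheta\|_2$ (this peeling is what ultimately produces $\log\log n$ in place of a second $\log n$ factor), yields something like $\langle \epsilon, \ttheta\rangle \lesssim \sqrt{s_0(\log s_0 + \log\log n)\log n}\cdot \|\ttheta\|_2$ with the claimed probability. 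The residual $\langle \epsilon, \htheta - \theta_0 - \ttheta\rangle$ is then absorbed by Cauchy--Schwarz together with a construction-dependent bound on $\|\htheta - \theta_0 - \ttheta\|_2^2$; it is here that the $\sqrt{n/W_n}$ term enters, since the interpolation gap is governed by the minimum inter-jump distance and the chosen scale of $\lambda$.

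Assembling these ingredients, the basic inequality becomes a quadratic inequality in $\|\htheta-\theta_0\|_2$ whose coefficients scale like $\sqrt{s_0\log n \log\log n}$ and $\lambda$; solving it and balancing the two resulting terms dictates the choice $\lambda = (nW_n)^{1/4}$ and delivers the stated rate. The main obstacle, in my view, is the construction and analysis of the lower interpolant itself. One must define it algorithmically (a natural candidate uses successive running minima and maxima of $\htheta-\theta_0$ between consecutive $t_i$), verify that it is piecewise monotone with exactly $2s_0+2$ pieces, and show simultaneously that (i) the approximation error $\|\htheta-\theta_0-\ttheta\|_2$ is controlled via the fused-lasso KKT conditions and scales appropriately with $n/W_n$, and (ii) the monotone-cone chaining applies cleanly on each piece. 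Tying these two controls together via self-bounding is the delicate step that produces both the sharp $\log n\log\log n$ factor and the explicit $s_0$-dependence.
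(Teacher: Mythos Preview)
Your high-level picture is right: the lower interpolant with $2s_0+2$ monotone pieces is indeed the heart of the argument, and the monotone-cone empirical-process bound is what replaces one $\log n$ by $\log s_0+\log\log n$. But several of the mechanisms you describe are not quite the ones that make the proof go through.

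First, the interpolant is not applied to $\htheta-\theta_0$ directly. The paper first performs an orthogonal decomposition $\htheta-\theta_0=\hat\delta+\hx$, where $\hat\delta=P_0(\htheta-\theta_0)$ lies in the $(s_0+1)$-dimensional span of the block indicators and $\hx=P_1\htheta$ has \emph{zero mean on each block}. Two separate basic inequalities are used (comparing the objective at $\htheta$ with $\theta_0+\hx$, and with $\theta_0+\hat\delta$), and the interpolant is built for $\hx$, not for $\htheta-\theta_0$. The block-mean-zero property of $\hx$ is what makes the interpolant's key inequality $\|D_{S_0}z\|_1\le\|D_{-S_0}z\|_1+4\sqrt{s_0/W_n}\,\|z\|_2$ useful in the second basic inequality; this is where $W_n$ first enters.

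Second, the $\log\log n$ does not come from peeling over $\|\ttheta\|_2$. In the paper's Lemma~\ref{lem:GCinterp}, the bound on $\sup_{z\in\cM}|\epsilon^\top z|/\|z\|_2$ is obtained by integration by parts against partial sums of $\epsilon$, reducing the problem to a uniform bound on $|F(t)-F(t_i)|/\sqrt{|t-t_i|}$; the covering number of the relevant family of scaled indicators is of order $(s_0+1)\log n / \log(1/(1-r^2/2))$, and the Dudley integral of the square root of its logarithm produces $\sqrt{\log s_0+\log\log n}$. No peeling is involved.

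Third, and this is the real gap: the residual $\hw=\hx-\hz$ is \emph{not} small in $\ell_2$, so Cauchy--Schwarz is fatal here. The interpolant construction only guarantees $\|\hw\|_2\le\|\hx\|_2$, which by itself gives nothing. What the paper does instead (Lemma~\ref{lem:GCresid}) is bound $|\epsilon^\top \hw|\lesssim (ns_0)^{1/4}\|D_{-S_0}\hw\|_1^{1/2}\|\hw\|_2^{1/2}$ via total-variation entropy on each block; this term is then played off against the \emph{negative} penalty contribution $-\lambda\|D_{-S_0}\hw\|_1$ that survives in the basic inequality (thanks to the TV-splitting property $\|D_{-S_0}\hx\|_1=\|D_{-S_0}\hz\|_1+\|D_{-S_0}\hw\|_1$). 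The resulting trade-off yields a term of order $\sqrt{ns_0}/\lambda$, which together with the $\lambda\sqrt{s_0/W_n}$ term from the interpolant is what forces $\lambda=(nW_n)^{1/4}$ and produces $\sqrt{n/W_n}$. Your ``construction-dependent bound on $\|\htheta-\theta_0-\ttheta\|_2^2$'' does not exist in the form you need; the residual is controlled through its total variation, not its $\ell_2$ norm.
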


\begin{remark}[\textbf{The roles of $s_0,W_n$}] When $s_0$
grows quickly enough with $n$, the error rate provided by the above 
theorem will become worse than the weak sparsity rate in    
\Fref{thm:weak_sparsity}. Given $s_0$
evenly spaced jumps, so that  
\smash{$W_n=\Theta(n/s_0)$}, the rate in \Fref{thm:strong_james} is
\smash{$(\log{s_0}+\log\log{n}) s_0 \log{n}/n + s_0^{3/2}/n$};
when $s_0$ grows faster than \smash{$n^{2/9}$}, this is slower than
the \smash{$n^{-2/3}$} rate in \Fref{thm:weak_sparsity} (assuming 
$C_n=O(1)$).  
\Fref{thm:strong_james} gives the fastest rate when
$s_0=O(1)$, $W_n=\Theta(n)$, this being
\smash{$(\log{n}\log\log{n})/n)$}, an improvement over the 
rate in \Fref{thm:strong_sparsity}.  This comparison, and the
comparison to other results in the literature, will be drawn out in
more detail in the last remark of this section. 
\end{remark}

\begin{remark}[\textbf{Expectation bound}]
An expectation bound follows more of less directly from the high
probability bound in \Fref{thm:strong_james}.  Define
the random vairable 
\begin{equation*}
M = \frac{\| \htheta - \theta_0 \|_2^2}{c^2 s_0
  ((\log{s_0}+\log\log{n})\log{n}+\sqrt{n/W_n})}, 
\end{equation*}
which we know has the tail bound \smash{$\P(U > u) \leq 
\exp(-C\sqrt{u})$} for $u > 1$, and observe that 
\begin{equation*}
\E(U) = \int_0^\infty \P(U > u) \, du \leq 
1 + \int_1^\infty \exp(-C\sqrt{u}) \, du.
\end{equation*}
The right-hand side is a finite constant, and this gives the
result
\begin{equation*}
\E\|\htheta-\theta_0\|_n^2 \leq c \frac{s_0}{n}
\Bigg( (\log{s_0}+\log\log{n})\log{n} + \sqrt{\frac{n}{W_n}} \Bigg), 
\end{equation*}
where the constant $c>0$ is adjusted to be larger, as needed.  
\end{remark}

Here is an overview of the proof of \Fref{thm:strong_james}. 
The details are deferred until Appendix \ref{app:strong_james},
and the proofs of the lemmas stated below are given in Appendix 
\ref{app:lemma_james}. We consider a decomposition
\[
\htheta - \theta_0 = P_0 (\htheta - \theta_0) + P_1 \htheta,
\]
where $P_0$ is the projection matrix onto the piecewise constant
structure inherent to the mean $\theta_0$, and $P_1=I-P_1$.  To give
more detail, recall that we write $S_0=\{t_1,\ldots,t_{s_0}\}$ for
the changepoints in $\theta_0$, ordered as in $t_1<\ldots<t_{s_0}$, 
and we write $t_0 = 0$ and $t_{s_0+1} =
n$ for convenience.  Furthermore, define
$B_j=\{t_j+1,\ldots,t_{j+1}\}$, and write $\one_{B_j} \in \R^n$ for 
the indicator of block $B_j$, for 
$j=0,\ldots,s_0$.  With this notational setup, we may now define  
$P_0$ as the projection onto the $(s_0+1)$-dimensional linear subspace 
$\cR = \spa \{\one_{B_0}, \ldots, \one_{B_{s_0}}\}$.  It is common 
practice (e.g., see \citet{vandegeer2000empirical})
to bound the estimation error by bounding the empirical process term  
\smash{$\epsilon^\top (\htheta-\theta_0)$}, where $\epsilon \in \R^n$  
is the vector of errors in the data model \eqref{eq:model}.  Using the
decomposition above, this becomes
\[
\epsilon^\top (\htheta - \theta_0) = \epsilon^\top 
\hat\delta + \epsilon^\top \hx,
\]
where we define \smash{$\hat\delta = P_0 (\htheta - \theta_0)$} and 
\smash{$\hx = P_1 \htheta$}.  
The parameter \smash{$\hat\delta$} lies in an
$(s_0+1)$-dimensional space, which makes bounding
\smash{$\epsilon^\top \hat\delta$} 
relatively easy.  Bounding the term \smash{$\epsilon^\top \hx$}
requires a much more intricate argument, which is spelled out in the 
following lemmas.
\Fref{lem:interp} is a deterministic result ensuring the
existence of what we call the {\em lower interpolant}
\smash{$\hz$} to the vector \smash{$\hx$}.  This interpolant 
approximates \smash{$\hx$} using roughly
$2s_0+2$ monotonic pieces, and its empirical process term
\smash{$\epsilon^\top \hz$} can be finely controlled, as shown in
\Fref{lem:GCinterp}.  The residual from the interpolant
approximation, denoted \smash{$\hw = \hx - \hz$}, has an
empirical process term \smash{$\epsilon^\top \hw$} that is more
crudely controlled, in \Fref{lem:GCresid}.  Put together, as in
\smash{$\epsilon^\top \hx = \epsilon^\top \hz +
\epsilon^\top \hw$}, gives the final control on
\smash{$\epsilon^\top \hx$}. 

Before stating \Fref{lem:interp}, we define the class of vectors
containing the lower interpolant.
Given any collection of changepoints $t_1<\ldots<t_{s_0}$ (and
$t_0 = 0$, $t_{s_0+1} = n$), let $\cM$ be the set of ``piecewise
monotonic'' vectors $z \in \R^n$, with the following properties, 
for each $i = 0, \ldots, s_0$:
\begin{enumerate}
\item[(i)] there exists a point $t'_i$ such that $t_i+1 \leq t'_i \leq
  t_{i+1}$, and the absolute value $|z_j|$ is nonincreasing over
  the segment $j \in \{t_i+1,\ldots,t'_i \}$, and nondecreasing over
  the segment $j \in \{t'_i,\ldots,t_{i+1}\}$; 
\item[(ii)] the signs remain constant on the monotone pieces,
\begin{align*}
&\sign(z_{t_i}) \cdot \sign(z_j) \geq 0, \quad 
j = t_i+1,\ldots,t'_i, \\
&\sign(z_{t_{i+1}}) \cdot \sign(z_j) \geq 0, \quad
j = t'_i+1,\ldots,t_{i+1}.
\end{align*}
\end{enumerate}

Now we state our lemma that characterizes the lower
interpolant.

\begin{lemma}
\label{lem:interp}
Given changepoints $t_0<\ldots<t_{s_0+1}$, and any $x \in \R^n$, there
exists a vector $z \in \cM$ (not necessarily unique), such that the
following statements hold:
\begin{align}
\label{eq:Zcond1}
&\| D_{-S_0}  x \|_1 = \| D_{-S_0}  z \|_1 + \|D_{-S_0} ( x -  z) \|_1, \\
 \label{eq:Zcond2}
&\| D_{S_0}  x \|_1 = \| D_{S_0}  z \|_1 \le \| D_{-S_0}  z \|_1 +
\frac{4\sqrt{s_0}}{\sqrt{W_n}} \|  z\|_2, \\
\label{eq:Zcond3}
&\| z \|_2 \le \|  x \|_2 
\quad \text{and} \quad 
\| x -  z \|_2 \le \|  x \|_2,
\end{align}
where $D \in \R^{(n-1)\times n}$ is the difference matrix in
\eqref{eq:diff}. We call a vector $z$ with these properties a 
{\em lower interpolant} to $x$.
\end{lemma}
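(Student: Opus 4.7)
My plan is to construct the lower interpolant $z$ explicitly, block by block, then verify each of the three conditions. The construction should ensure $z \in \cM$ (piecewise monotonic in absolute value with the required sign structure) while matching $x$'s behavior at changepoint boundaries and damping $|x|$ toward zero in block interiors. Concretely, on each block $B_i = \{t_i+1,\ldots,t_{i+1}\}$, I would pick a ``valley'' index $t_i' \in B_i$ where a running minimum of $|x|$ from the left meets a running minimum from the right. On the left half $\{t_i+1,\ldots,t_i'\}$ I would set $|z_j|$ equal to the left-to-right running minimum of $|x|$, with sign assigned to match $\sign(z_{t_i})$ (the right endpoint of the previous block, per condition (ii)); on the right half, symmetrically, $|z_j|$ is the right-to-left running minimum, with sign matching $\sign(z_{t_{i+1}})$. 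Boundary values are chosen so that $z_{t_j+1}$ and $z_{t_j}$ carry the full jump magnitude $|x_{t_j+1}-x_{t_j}|$ at each changepoint $t_j \in S_0$, zeroing out one side in case of sign conflict. By construction, $z \in \cM$ and $|z_j| \leq |x_j|$ at every coordinate.

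Condition \eqref{eq:Zcond3} then follows easily. The bound $\|z\|_2 \leq \|x\|_2$ is immediate from pointwise $|z_j|\leq|x_j|$. For $\|x-z\|_2 \leq \|x\|_2$, wherever $z_j \neq 0$ it shares the sign of $x_j$, giving $|x_j - z_j|=|x_j|-|z_j| \leq |x_j|$; elsewhere $|x_j - z_j|=|x_j|$, and squaring and summing yields the claim. Condition \eqref{eq:Zcond1} is verified edge by edge: for $j \notin S_0$, the running-minimum construction makes $z_{j+1}-z_j$ and $x_{j+1}-x_j$ move in the same direction, with $|z_{j+1}-z_j| \leq |x_{j+1}-x_j|$, so that $|x_{j+1}-x_j| = |z_{j+1}-z_j| + |(x-z)_{j+1}-(x-z)_j|$. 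Summing over within-block edges gives \eqref{eq:Zcond1}.

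Condition \eqref{eq:Zcond2} is the heart of the proof. The equality $\|D_{S_0}z\|_1 = \|D_{S_0}x\|_1$ is enforced by construction at each jump boundary. For the inequality, I would apply the triangle inequality to bound each jump as $|z_{t_i+1}-z_{t_i}| \leq |z_{t_i+1}|+|z_{t_i}|$ and then split each boundary value into (a) its ``descent part'' from the block-boundary value down to the valley $|z_{t_i'}|$, and (b) the valley value $|z_{t_i'}|$ itself. Part (a) is absorbed into $\|D_{-S_0}z\|_1$, since the V-shape makes the descent appear in the within-block total variation. Part (b) is controlled using that $|z_j| \geq |z_{t_i'}|$ for every $j$ in the block (by the V-shape), and the block length is $\geq W_n$, so $|z_{t_i'}|^2 \cdot W_n \leq \|z|_{B_i}\|_2^2$; summing over the $s_0$ changepoints and applying Cauchy--Schwarz produces the $\sqrt{s_0/W_n}\,\|z\|_2$ term, with the constant $4$ accounting for two endpoints per jump and two halves per block.

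The main obstacle will be Step 4. The V-shape alone does not directly bound boundary values of $z$, so the argument must split each endpoint contribution into a ``descent'' piece chargeable to within-block TV and a ``floor'' piece chargeable via $\ell_2$ averaging against $W_n$, and then control these in parallel across all $s_0$ blocks without double-counting. Careful case analysis at sign transitions within $\cM$ and at block interfaces is needed to pin down the constant $4$ and ensure the aggregated floor contributions accumulate with a Cauchy--Schwarz factor $\sqrt{s_0}$ rather than $s_0$.
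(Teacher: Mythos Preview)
Your overall architecture matches the paper's proof exactly: build $z$ block by block via running minima from both endpoints toward a valley, then verify \eqref{eq:Zcond3} from pointwise dominance, \eqref{eq:Zcond1} from sign--and--magnitude dominance on within-block differences, and \eqref{eq:Zcond2} by the triangle inequality plus the ``descent $+$ valley floor'' split together with Cauchy--Schwarz. The paper's argument for \eqref{eq:Zcond2} is precisely the one you outline.

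There is, however, a concrete gap in your construction. You take the running minimum of $|x|$ and then attach a fixed sign $g$ per half-block. This does not guarantee that $z_j$ and $x_j$ share a sign whenever $z_j\neq 0$, which is the hinge on which both \eqref{eq:Zcond3} and \eqref{eq:Zcond1} rest. A three-point example on a left half with $g=+1$: take $x=(2,\,-1,\,0.5)$. Your rule gives $|z|=(2,1,0.5)$ and hence $z=(2,1,0.5)$. Then $|x_2-z_2|=|-1-1|=2>1=|x_2|$, so $\|x-z\|_2\leq\|x\|_2$ fails; and at the second within-block edge, $(Dz)_2=-0.5$ while $(Dx)_2=+1.5$, so the edgewise additivity $|(Dx)_j|=|(Dz)_j|+|(D(x-z))_j|$ fails as well. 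The paper avoids this by taking the running minimum of $(g\,x)_+$ rather than of $|x|$: as soon as $x$ flips sign relative to $g$, the running minimum drops to $0$ and stays there, which preserves $\sign(z_j)\cdot\sign(x_j)\geq 0$ and $\sign((Dz)_j)\cdot\sign((Dx)_j)\geq 0$ everywhere on the half-block. With that single change your sketch goes through.

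A smaller point: the paper does not need your ``carry the full jump magnitude, zeroing out one side'' device. It simply takes $g$ on the left (resp.\ right) half of block $i$ to be the sign of $x$ at that block's own left (resp.\ right) endpoint, so that $z_{t_i+1}=x_{t_i+1}$ and $z_{t_{i+1}}=x_{t_{i+1}}$ exactly; the equality $\|D_{S_0}z\|_1=\|D_{S_0}x\|_1$ is then immediate. Your choice of $g=\sign(z_{t_i})$ (the previous block's endpoint) would generally prevent $z_{t_i+1}=x_{t_i+1}$ and forces an ad hoc repair; matching $x$ at both block endpoints is cleaner and is what the paper does.
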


Loosely speaking, the lower interpolant \smash{$\hz$} can be 
visualized by taking a string that lies initially on top of
\smash{$\hx$}, is nailed down at the changepoints $t_0, \ldots
t_{s_0+1}$, and then pulled taut while maintaining that it is not
greater (elementwise) than \smash{$\hx$},
in magnitude.  Here ``pulling taut'' means that \smash{$\| D \hz\|_1$}
is made small. Figure \ref{fig:interp} provides
illustrations of the interpolant \smash{$\hz$} to \smash{$\hx$}
for a few examples.

\begin{figure}[!ht]
  \centering
  \includegraphics[width=\textwidth]{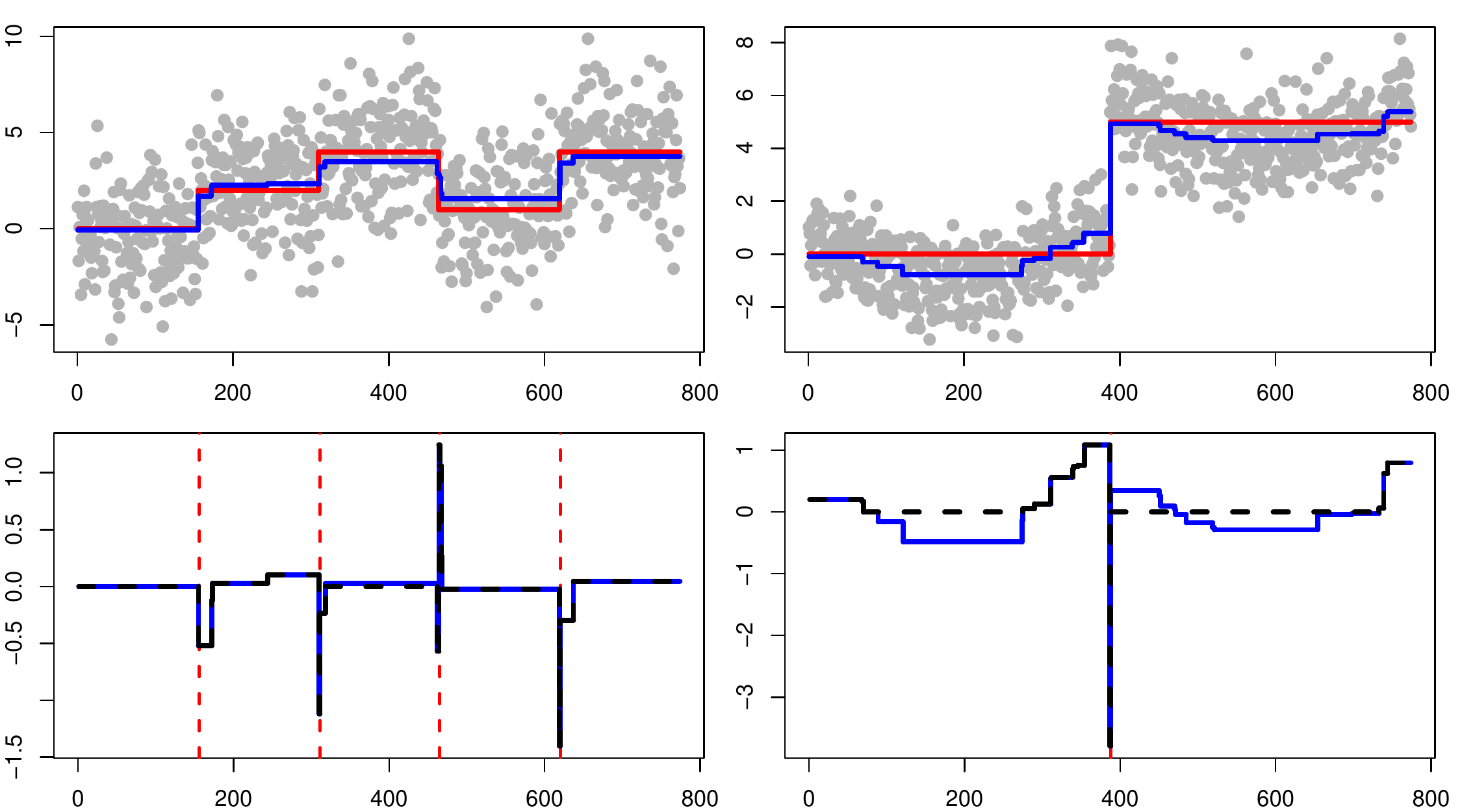}  
  \caption{\it\small
    The lower interpolants for two examples (in the left and
    right columns), each with $n=800$ points.
    In the top row, the data $y$ (in gray) and underlying signal
    $\theta_0$ (red) are plotted across the locations
    $1,\ldots,n$. Also shown is the fused lasso estimate
    \smash{$\htheta$} (blue).  In the bottom row, the error vector 
    \smash{$\hx = P_1\htheta$} is plotted (blue) as well as the
    interpolant (black), and the dotted
    vertical lines (red) denote the changepoints $t_1,\ldots
    t_{s_0}$ of $\theta_0$.}     
\label{fig:interp}
\end{figure}

Note that \smash{$\hz$} consists of $2s_0+2$ monotonic pieces. This  
special structure leads to a sharp concentration inequality. 
The next lemma is the primary contributor to the fast rate given
in \Fref{thm:strong_james}. 

\begin{lemma}
\label{lem:GCinterp}
Given changepoints $t_1<\ldots<t_{s_0}$, there exists constants $c_I,
C_I,N_I>0$ such that when $\epsilon \in \R^n$ has i.i.d.\ sub-Gaussian
components satisfying \eqref{eq:subgauss}, 
\begin{equation*}
\P \Bigg( \sup_{z \in \cM} \, \frac{|\epsilon^\top z|}{\| z \|_2} > 
\gamma c_I \sqrt{(\log{s_0}+\log\log{n}) s_0 \log{n}} \Bigg) \le 
2 \exp\big(-C_I \gamma^2 c_I^2 (\log{s_0}+\log\log{n})\big),   
\end{equation*}
for any $\gamma > 1$, and $n \geq N_I$.
\end{lemma}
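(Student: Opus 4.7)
The plan is to exploit the structural fact that every $z \in \cM$ decomposes into at most $2(s_0+1)$ sign-constant monotonic pieces, reducing the supremum over $\cM$ to a uniform bound on the sub-Gaussian norm of the projection onto the monotonic cone on intervals.

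First, writing $z = \sum_{k=1}^K z^{(k)}$ with $K \leq 2(s_0+1)$ disjoint sub-intervals $I_k$ on each of which $z^{(k)}$ is monotonic with a fixed sign, and using $\|z\|_2^2 = \sum_k \|z^{(k)}\|_2^2$, Cauchy--Schwarz gives
\[
\frac{|\epsilon^\top z|}{\|z\|_2} \;\leq\; \sqrt{K}\,\cdot\,\max_{k} \frac{|\epsilon^\top z^{(k)}|}{\|z^{(k)}\|_2} \;\leq\; \sqrt{K}\,\cdot\,\max_{I \subseteq \{1,\ldots,n\}} V_I,
\]
where $V_I$ denotes the supremum of $|\epsilon^\top v|/\|v\|_2$ over sign-constrained monotonic $v$ supported on $I$. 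So it suffices to uniformly control $\max_I V_I$ over all $O(n^2)$ possible intervals, and to pay a factor of $\sqrt{K} = O(\sqrt{s_0})$ at the end.

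Second, for a fixed interval $I$ of length $\ell$, the class of sign-constrained monotonic vectors on $I$ forms a closed convex cone $\mathcal{K}_I$, and $V_I = \|P_{\mathcal{K}_I}(\epsilon|_I)\|_2$. The classical statistical-dimension identity $\delta(\mathcal{K}_\ell) = H_\ell \leq \log \ell + 1$ (proved by block-counting in isotonic regression) yields $\E V_I^2 \leq \log n + 1$ in the Gaussian case, and an analogous $O(\log n)$ bound holds under sub-Gaussian errors. Because $V_I$ is $1$-Lipschitz in $\epsilon|_I$, sub-Gaussian concentration gives $\P(V_I > C \sqrt{\log n} + t) \leq 2 \exp(-c t^2)$. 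Union-bounding over the $O(n^2)$ intervals with $t = \gamma c' \sqrt{\log n \cdot (\log s_0 + \log \log n)}$ for a sufficiently large constant $c'$ absorbs the $\log(n^2)$ contribution and delivers
\[
\max_I V_I \;\leq\; C \gamma \sqrt{\log n \cdot (\log s_0 + \log \log n)}
\]
with failure probability at most $2 \exp(-C_I \gamma^2 c_I^2 (\log s_0 + \log \log n))$. Multiplying by $\sqrt{K} = O(\sqrt{s_0})$ from the first step produces the claimed inequality.

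The main obstacle is the second step: obtaining a sharp $O(\log \ell)$ upper bound on $\E V_I^2$ in the sub-Gaussian setting. For Gaussian errors this is classical, but the extension to general sub-Gaussian noise requires either redoing the block-sum argument with sub-Gaussian tail controls, or passing through a Dudley entropy integral using the bound $\log N(\delta, \mathcal{K}_I \cap \{\|v\|_2 \leq 1\}, \|\cdot\|_2) = O(1/\delta)$ on the monotone cone. A secondary technicality is checking that properties (i) and (ii) of $\cM$ really do guarantee that the piecewise monotonic structure on each $B_i$ can be split into sign-constant monotonic sub-pieces, making the application of Cauchy--Schwarz tight; this follows directly from the fact that $|z_j|$ is monotonic on each half of $B_i$ while $\sign(z_j)$ is constant there.
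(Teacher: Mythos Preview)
Your decomposition-and-Cauchy--Schwarz strategy is correct and takes a genuinely different route from the paper. The paper never invokes the monotone cone or its statistical dimension; instead it handles each monotone piece by an Abel-summation (integration-by-parts) identity that reduces $\epsilon^\top z$ on $(t_i,t'_i]$ to expressions in the normalized partial sums $(F(t)-F(t_i))/\sqrt{|t-t_i|}$, where $F$ is the running sum of $\epsilon$. Those partial sums are controlled uniformly at level $\sqrt{\log s_0+\log\log n}$ by covering a small family of normalized indicator vectors anchored at the $t_i$ and applying van de Geer's chaining theorem; a second Cauchy--Schwarz on the integration-by-parts remainder then supplies the $\sqrt{\log n}$ factor, and a final Cauchy--Schwarz over the $2(s_0+1)$ pieces gives the $\sqrt{s_0}$. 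Your approach is more modular---it imports the isotonic risk bound wholesale---whereas the paper's is fully self-contained and uses nothing beyond a partial-sum maximal inequality.

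Two places in your sketch need repair, however. First, the entropy estimate $\log N(\delta,\mathcal{K}_I\cap B_2,\|\cdot\|_2)=O(1/\delta)$ is false as stated: Dudley with this entropy would yield $\E V_I=O(1)$, but testing against the normalized indicators $\one_{\{k,\ldots,\ell\}}/\sqrt{\ell-k+1}$ already forces $\E V_I\gtrsim\sqrt{\log\ell}$. The clean way to carry the bound $\E V_I\leq C\sqrt{\log n}$ over to sub-Gaussian noise is Talagrand's majorizing-measures comparison (the sub-Gaussian width of any set is at most a constant times its Gaussian width), not Dudley with that entropy. Second, the step ``$V_I$ is $1$-Lipschitz, so sub-Gaussian concentration gives $\exp(-ct^2)$'' is not automatic: dimension-free Lipschitz concentration for product measures holds for Gaussian, bounded, or strongly log-concave marginals, but not for arbitrary sub-Gaussians. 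What does work here is the generic-chaining deviation inequality: since $V_I$ is the supremum of a sub-Gaussian process indexed by a set of $\|\cdot\|_2$-diameter at most $2$, one has $\P(V_I>C\gamma_2+u)\leq C\exp(-cu^2)$ directly, which is exactly the tail you need for the union bound. With these two fixes your argument goes through and delivers the lemma.
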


Finally, the following lemma controls the residuals,
\smash{$\hw= \hx-\hz$}.

\begin{lemma}
\label{lem:GCresid}
Given changepoints $t_1<\ldots<t_{s_0}$, there exists constants $c_R,
C_R > 0$ such that when $\epsilon \in \R^n$ has i.i.d.\ sub-Gaussian
components satisfying \eqref{eq:subgauss}, 
\begin{equation*}
\P \bigg( \sup_{w \in \cR^\perp} \, 
\frac{|\epsilon^\top w|}{\sqrt{\|D_{-S_0} w \|_1 
\|w\|_2}} > \gamma c_R (n s_0)^{1/4} \bigg) \le 2
\exp(-C_R \gamma^2 c_R^2 \sqrt{s_0}), 
\end{equation*}
for any $\gamma > 1$, where $\cR^\perp$ is the orthogonal
complement of $\cR = \spa \{\one_{B_0}, \ldots, \one_{B_{s_0}}\}$. 
\end{lemma}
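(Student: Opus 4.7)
The plan is to exploit the block structure inherent in $\cR^\perp$: since $\cR = \mathrm{span}\{\one_{B_0},\ldots,\one_{B_{s_0}}\}$, a vector $w \in \cR^\perp$ is equivalent to $w$ being mean-zero on each block $B_j = \{t_j+1,\ldots,t_{j+1}\}$. So I decompose $w = \sum_{j=0}^{s_0} w^{(j)}$ with $w^{(j)} = w \cdot \one_{B_j}$, and set $V_j = \|Dw^{(j)}\|_1$, $v_j = \|w^{(j)}\|_2$, so that $\sum_j V_j = \|D_{-S_0}w\|_1$, $\sum_j v_j^2 = \|w\|_2^2$, and $\epsilon^\top w = \sum_j \epsilon^{(j)\top} w^{(j)}$ with $\epsilon^{(j)} = \epsilon \cdot \one_{B_j}$. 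The ratio to be controlled is scale invariant in $w$, so normalizations are available as needed.

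The main technical step is a per-block concentration inequality: for each block $B_j$ of size $n_j$, with suitably high probability,
\[
\sup_{w^{(j)} \perp \one_{B_j}} \frac{|\epsilon^{(j)\top} w^{(j)}|}{\sqrt{V_j v_j}} \le c\,\gamma\, n_j^{1/4}.
\]
I would establish this by dyadic peeling over the scale $\|w^{(j)}\|_2$, combined at each dyadic level with sub-Gaussian concentration applied to the supremum of $\epsilon^{(j)\top} w^{(j)}$ over the TV--$\ell_2$ constrained set $T_j(V_j,v_j) = \{w^{(j)} \perp \one_{B_j} : \|Dw^{(j)}\|_1 \le V_j,\; \|w^{(j)}\|_2 \le v_j\}$. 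The underlying Gaussian width bound of order $n_j^{1/4}\sqrt{V_j v_j}$ for this set can be obtained via a chaining (Dudley entropy integral) argument with a Maurey-type cover for the TV ball, or more transparently by passing to a Haar-wavelet basis, where the extremal constrained $w^{(j)}$ can be analyzed by an explicit finite-dimensional optimization over the level-wise wavelet magnitudes.

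Granted the per-block bound, summing across $j$ and applying Cauchy--Schwarz twice yields the overall inequality:
\[
|\epsilon^\top w| \le c\gamma \sum_j n_j^{1/4}\sqrt{V_j v_j} \le c\gamma \Bigl(\sum_j n_j^{1/2}\Bigr)^{\!1/2}\Bigl(\sum_j V_j v_j\Bigr)^{\!1/2} \le c\gamma\, (ns_0)^{1/4}\sqrt{\|D_{-S_0}w\|_1\,\|w\|_2},
\]
using $\sum_j n_j^{1/2} \le \sqrt{(s_0+1)n}$ (Cauchy--Schwarz on $(1, n_j^{1/2})$) and $\sum_j V_j v_j \le \sqrt{\sum_j V_j^2 \cdot \sum_j v_j^2} \le \|D_{-S_0}w\|_1\,\|w\|_2$ (the latter since $\sum_j V_j^2 \le (\sum_j V_j)^2$ for nonnegative $V_j$), with constants absorbed.

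The hard part is the sharp per-block Gaussian width bound at the rate $n_j^{1/4}\sqrt{V_j v_j}$. Neither the Hölder-type bound using $\|w^{(j)}\|_\infty \le V_j$ (yielding only $V_j\sqrt{n_j \log n_j}$) nor the ambient $\ell_2$ bound (yielding $v_j\sqrt{n_j}$) is tight on its own, so the chaining step must activate both TV and $\ell_2$ constraints simultaneously; this is where the $n_j^{1/4}$ saving relative to $n_j^{1/2}$ arises, and is the delicate part of the argument. A secondary technical issue is arranging the per-block failure probabilities into the claimed $\exp(-C_R \gamma^2 c_R^2 \sqrt{s_0})$ tail: a naive union bound over the $s_0+1$ blocks would only contribute a $\log s_0$ factor in the exponent, so the aggregation must exploit that the per-block sub-Gaussian deviations combine through an $\ell_2$-like sum rather than an $\ell_\infty$-like maximum.
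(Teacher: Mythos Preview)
Your proposal is essentially the paper's proof: block decomposition, a per-block entropy/chaining bound giving the $n_j^{1/4}\sqrt{V_j v_j}$ rate, and a double Cauchy--Schwarz aggregation. The paper obtains the per-block bound by citing the known metric entropy estimate $\log N(r,\{w:\|Dw\|_1\le 1,\ \|w\|_{n_j}\le 1\},\|\cdot\|_{n_j}) \le C/r$ for the (centered) TV ball and plugging it into Lemma~3.5 of \citet{van1990estimating} with exponent $\zeta=1/2$; your proposed peeling/chaining or Haar-basis argument would recover the same thing, just less directly.

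Your worry about the tail is self-inflicted by your choice of Cauchy--Schwarz pairing. You pair $n_j^{1/4}$ with $\sqrt{V_j v_j}$, which forces $\bigl(\sum_j n_j^{1/2}\bigr)^{1/2}\le ((s_0{+}1)n)^{1/4}$ and burns the $s_0^{1/4}$ slack in the deterministic step, leaving nothing for the union bound. The paper instead pairs $V_j^{1/2}$ with $n_j^{1/4} v_j^{1/2}$ and then $n_j^{1/2}$ with $v_j$, which yields only $n^{1/4}$ (no $s_0$ factor) from Cauchy--Schwarz:
\[
\sum_j n_j^{1/4} V_j^{1/2} v_j^{1/2}
\;\le\;
\Bigl(\sum_j V_j\Bigr)^{1/2}\Bigl(\sum_j n_j^{1/2} v_j\Bigr)^{1/2}
\;\le\;
\|D_{-S_0}w\|_1^{1/2}\,\|w\|_2^{1/2}\,n^{1/4}.
\]
The freed $s_0^{1/4}$ is then spent on the per-block threshold: apply the block inequality with threshold $\gamma c_R (n_j s_0)^{1/4}$ rather than $\gamma c_R n_j^{1/4}$, so each block fails with probability at most $\exp(-c\gamma^2\sqrt{s_0})$. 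Since $\log(s_0{+}1)\lesssim \sqrt{s_0}$ uniformly, a \emph{plain union bound} over the $s_0{+}1$ blocks already gives $\exp(-C_R\gamma^2 c_R^2\sqrt{s_0})$; no $\ell_2$-type aggregation of deviations is needed.
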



We conclude this section with a remark comparing
\Fref{thm:strong_james} to related results in the literature.

\begin{remark}[\textbf{Comparison to \Fref{thm:strong_sparsity}}] 
\label{rem:others_error_strong}
We compare \Fref{thm:strong_james} to the strong sparsity result in   
\citet{dalalyan2014prediction}, as stated in
\Fref{thm:strong_sparsity}.  For any $s_0,W_n$, the
former rate is sharper than the latter, since $\log{s_0} \leq \log{n}$
and \smash{$\sqrt{n/W_n} \leq n/W_n$}.  Moreover, when $s_0=O(1)$
and $W_n=\Theta(n)$, the rates are $(\log{n}\log\log{n})/n$ versus 
$\log^2{n}/n$, in Theorems \ref{thm:strong_james} and 
\ref{thm:strong_sparsity} respectively. Furthermore, in this setting,
we note that the scaling of the tuning parameter investigated by both
theorems is \smash{$\lambda=\Theta(\sqrt{n})$}. 

 As discussed in Remarks
\ref{rem:harchaoui_error} and \ref{rem:others_error}, essentially all  
rates from comparable estimators in the strong sparsity  
case scale as $\log^2{n}/n$, with the exception being  
the Potts estimator, which has a rate
of $\log{n}/n$.  Therefore the improvement from $\log^2{n}/n$ to  
$(\log{n}\log\log{n})/n$ offered by \Fref{thm:strong_james} could 
certainly be viewed as nontrivial. An error rate faster than
$\log{n}/n$ in the strong sparsity case seems likely 
unattainable by any method, as \citet{donoho1994ideal}   
showed that an oracle wavelet estimator (that is allowed the
optimal choice of wavelet threshold for each problem instance) still
has an expected estimation error on the order of $\log{n}/n$. 
\end{remark}

\section{Approximate changepoint screening and recovery}  
\label{sec:screening_recovery}

We develop results on approximate screening of changepoints by the 
fused lasso, and approximate recovery of changepoints after a
post-processing step has been applied to the fused lasso estimate. 
A distinctive feature of our results is that their proofs based on 
only the $\ell_2$ estimation error rates achieved
by the fused lasso.  In fact, in their most general form, our results
imply certain changepoint screening and recovery properties for any
estimation method that has a known $\ell_2$ error rate, which clearly
has implications well beyond the fused lasso.

\subsection{Results on approximate changepoint screening}
\label{sec:changepoint_screening}

We present a theorem that takes a general estimator
\smash{$\ttheta$} of $\theta_0$, with a known $\ell_2$
estimation error rate, and infers a bound on the 
screening distance between changepoints of $\theta_0$ and those of 
\smash{$\ttheta$}. 

\begin{theorem}[\textbf{Generic screening result}]
\label{thm:changepoint_screening}
Let $\theta_0 \in \R^n$ be a piecewise constant vector, and 
\smash{$\ttheta \in \R^n$} be an estimator that satisfies the  
error bound \smash{$\|\ttheta-\theta_0\|_n^2 = O_\P(R_n)$}.
Assume that \smash{$nR_n/H_n^2 = o(W_n)$},
where, recall, $H_n$ is the
minimum gap between adjacent levels of $\theta_0$, defined in
\eqref{eq:Hn}, and $W_n$ is the minimum distance between adjacent
changepoints of $\theta_0$, defined in \eqref{eq:Wn}. Then  
\begin{equation*}
d\big(S(\ttheta)\,|\,S_0 \big) = O_\P 
\bigg( \frac{n R_n}{H_n^2} \bigg), 
\end{equation*}
where \smash{$S(\ttheta)$} is the set of changepoints in
$\ttheta$, $S_0=S(\theta_0)$ is the set of changepoints in $\theta_0$, 
and $d(\,\cdot\,|\,\cdot\,)$ is the one-sided screening distance, as
defined in \eqref{eq:metrics}. 
\end{theorem}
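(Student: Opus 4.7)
The plan is to prove the bound by contrapositive: I will show that if the screening distance is large, then the $\ell_2$ error is large too, and then match the inflation against the assumed rate $R_n$.

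Fix any constant $C>0$ and set $r_n = C\,nR_n/H_n^2$. I would like to show
\[
\bigl\{ d(S(\ttheta)\,|\,S_0) > r_n \bigr\} \;\subseteq\;
\bigl\{ \|\ttheta-\theta_0\|_n^2 \geq c C\, R_n \bigr\}
\]
for some numerical constant $c>0$ and all $n$ large. Since $R_n$ can be made arbitrarily large (in the $O_\P$ sense) by choosing $C$ large, this will give the conclusion.

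First I would translate $d(S(\ttheta)\,|\,S_0) > r_n$: this means there is some $t_i \in S_0$ with no element of $S(\ttheta)$ inside $(t_i - r_n, t_i + r_n]$. Here is where the hypothesis $r_n = o(W_n)$ enters: for $n$ large, $r_n < W_n/2$, so the interval $(t_i - r_n, t_i + r_n]$ is contained in $(t_{i-1}, t_{i+1}]$. Hence $\theta_0$ takes only two values on this interval, namely $a = \theta_{0,t_i}$ on $(t_i - r_n, t_i]$ and $b = \theta_{0,t_i+1}$ on $(t_i, t_i + r_n]$, with $|a-b| \geq H_n$ by definition of $H_n$. Meanwhile, no changepoint of $\ttheta$ lies in this interval, so $\ttheta \equiv c$ on it for some scalar $c$.

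Next I would compute the contribution to the squared $\ell_2$ error coming from this interval:
\[
\|\ttheta - \theta_0\|_2^2 \;\geq\; r_n (c-a)^2 + r_n (c-b)^2
\;\geq\; \tfrac{r_n}{2}(a-b)^2 \;\geq\; \tfrac{r_n H_n^2}{2},
\]
where I used the elementary inequality $(c-a)^2+(c-b)^2 \geq (a-b)^2/2$, minimized at $c=(a+b)/2$. Dividing by $n$ and plugging in $r_n = C n R_n/H_n^2$ yields $\|\ttheta-\theta_0\|_n^2 \geq C R_n/2$, establishing the desired containment with $c = 1/2$.

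To finish, I invoke $\|\ttheta - \theta_0\|_n^2 = O_\P(R_n)$: given any $\delta>0$, pick $M_\delta$ with $\P(\|\ttheta - \theta_0\|_n^2 > M_\delta R_n) < \delta$ eventually, then take $C = 2 M_\delta$ so that $\P(d(S(\ttheta)\,|\,S_0) > r_n) < \delta$ eventually. This is exactly the definition of $d(S(\ttheta)\,|\,S_0) = O_\P(nR_n/H_n^2)$. The argument is essentially a one-line ``pigeonhole in $\ell_2$'' once set up correctly, so there is no serious obstacle; the only delicate point is invoking $r_n = o(W_n)$ to isolate a single true changepoint inside the bad interval, without which the algebraic lower bound on the squared error would break down.
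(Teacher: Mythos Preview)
Your proposal is correct and follows essentially the same route as the paper's proof: assume the screening distance exceeds $r_n \approx C n R_n/H_n^2$, locate a true changepoint $t_i$ with no estimated changepoint nearby, note that $r_n = o(W_n)$ forces $\theta_0$ to be two-valued on the window while $\ttheta$ is constant there, and apply $(c-a)^2+(c-b)^2 \ge (a-b)^2/2$ to lower-bound the $\ell_2$ error by a multiple of $R_n$. The only cosmetic differences are that the paper takes $r_n = \lfloor C n R_n/H_n^2 \rfloor$ (you should floor as well, since you are counting indices) and uses the slightly weaker requirement $r_n \le W_n$ rather than your $r_n < W_n/2$, but neither affects the argument.
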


\begin{proof}
The proof is derived from the $\ell_2$ rate. 
Fix any $\epsilon>0$, $C>0$.  By 
assumption, we know that there is an integer $N_1>0$ such  
that   
\begin{equation*}
\P\bigg( \|\ttheta-\theta_0\|_n^2 > \frac{C}{4} R_n\bigg) \leq  
\epsilon,  
\end{equation*}
for all $n \geq N_1$. We also know that there is an integer
$N_2>0$ such that \smash{$CnR_n/H_n^2 \leq W_n$} 
for all $n \geq N_2$. Let $N=\max\{N_1,N_2\}$, take $n \geq N$,    
and let \smash{$r_n = \lfloor CnR_n/H_n^2\rfloor$}. Suppose that    
\smash{$d(S(\ttheta) \,|\, S_0) > r_n$}.  Then, by definition, there
is a changepoint $t_i \in S_0$ such that no changepoints of  
\smash{$\ttheta$} are within $r_n$ of $t_i$, which means that 
\smash{$\ttheta_j$} is constant over 
$j \in \{t_i-r_n+1,\ldots,t_i+r_n\}$.   Denote 
\begin{equation*}
z=\ttheta_{t_i-r_n+1}=\ldots=\ttheta_{t_i}=\ttheta_{t_i+1}=
\ldots=\ttheta_{t_i+r_n}.
\end{equation*}
We then form the lower bound
\begin{equation} 
\label{eq:lower_bd}
\frac{1}{n}\sum_{j=t_i-r_n+1}^{t_i+r_n} 
\big(\ttheta_j-\theta_{0,j}\big)^2 
= \frac{r_n}{n}\big(z-\theta_{0,t_i}\big)^2 +
\frac{r_n}{n}\big(z-\theta_{0,t_i+1}\big)^2 \geq 
\frac{r_n H_n^2}{2n} > \frac{C}{4} R_n,
\end{equation}
where the first inequality holds because $(x-a)^2 +
(x-b)^2 \geq (a-b)^2/2$ for all $x$ (the quadratic in $x$ here 
is minimized at $x=(a+b)/2$), and the second because
\smash{$r_n = \lfloor CnR_n/H_n^2 \rfloor $}.
Therefore, we see that
\smash{$d(S(\ttheta) \,|\, S_0) > r_n$} implies 
\begin{equation*}
\|\ttheta-\theta_0\|_n^2 \geq 
\frac{1}{n}\sum_{j=t_i-r_n+1}^{t_i+r_n }
\big(\ttheta_j-\theta_{0,j}\big)^2 >
\frac{C}{4} R_n,
\end{equation*}
which implies 
\begin{equation*}
\P\Big( d\big(S(\ttheta)\,|\,S_0 \big) > r_n \Big) \leq 
\P\bigg( \|\ttheta-\theta_0\|_n^2 > \frac{C}{4} R_n\bigg) \leq  
\epsilon,
\end{equation*}
for all $n \geq N$, completing the proof.
\end{proof}

\begin{remark}[\textbf{Conditions on $W_n,H_n$}]
The condition that \smash{$nR_n/H_n^2 = o(W_n)$} in 
\Fref{thm:changepoint_screening} is not
strong. Consider the simple case in which
$H_n=\Omega(1)$ and $W_n=\Theta(n)$. This
condition reduces to $R_n=o(1)$, requiring only that the estimator   
\smash{$\ttheta$} in question be consistent. The theorem 
then gives the bound
\smash{$d(S(\ttheta) \,|\, S_0) = O_\P(nR_n)$} on the screening
distance obtained by \smash{$\ttheta$}.
\end{remark}

\begin{remark}[\textbf{Generic setting: no particular assumptions on  
  data model, or estimator}]
\label{rem:no_model}
Importantly, \Fref{thm:changepoint_screening} assumes no data model
whatsoever, and 
treats \smash{$\ttheta$} as a generic estimator of $\theta_0$.  Of
course, through the statement 
\smash{$\|\ttheta-\theta_0\|_n^2 = O_\P(R_n)$},
one sees that \smash{$\ttheta$} is random,
constructed from data that depends on $\theta_0$, but no specific data
model is required, nor are any specific properties of
\smash{$\ttheta$} (other than its $\ell_2$ error rate). 
This flexibility allows for the result to be applied in any problem 
setting in which one has control of the estimation error of
a piecewise constant 
parameter $\theta_0$.  Apart from the applications of
\Fref{thm:changepoint_screening} to the fused lasso estimator, where
we consider data from a standard model as in \eqref{eq:model},
with $\theta_0$ being the mean, and i.i.d.\ sub-Gaussian errors
(see Corollaries \ref{cor:changepoint_screening_weak} and 
\ref{cor:changepoint_screening_strong} below), we could instead
suppose that data is distributed according to, e.g., a Poisson model
with natural parameter $\theta_0$, 
\begin{equation}
\label{eq:model_pois}
y_i \sim \mathrm{Pois}(e^{\theta_{0,i}}), \quad i=1,\ldots,n.
\end{equation}
If we knew of an estimate \smash{$\ttheta$} for $\theta_0$ such that 
the estimation error \smash{$\|\ttheta-\theta_0\|_n^2$} was analyzable, then we
could
use \Fref{thm:changepoint_screening} to infer a bound on the
screening distance between changepoints of $\theta_0$ and
\smash{$\ttheta$}.   
In this paper, we do not describe particular applications of
\Fref{thm:changepoint_screening} beyond the sub-Gaussian error model
in \eqref{eq:model}, since we are not aware of estimation error
guarantees outside of this model.  However, establishing $\ell_2$
estimation error rates for models like \eqref{eq:model_pois} (which
may be used to describe say copy number data in genetics), and
interpreting the resulting changepoint approximation guarantees 
would be an interesting topic  
for future work.  (For model \eqref{eq:model_pois}, and
other likelihood-based models with a piecewise constant parameter 
$\theta_0$, we suspect that the fused lasso provides a basis
for a good estimator: simply replace the squared error loss
in \eqref{eq:fl} by the negative log likelihood.)
\end{remark} 

We present two different corollaries of
\Fref{thm:changepoint_screening} for the fused lasso. The
first is given by using \Fref{thm:weak_sparsity} and the associated
rate $R_n=n^{-2/3}C_n$ in the weak sparsity case, and the second
is given by using \Fref{thm:strong_james} and the associated rate
$R_n=(\log{n}\log\log{n})/n$ in the strong sparsity case. The proofs
are immediate and are hence omitted.

\begin{corollary}[\textbf{Fused lasso screening result, weak  
  sparsity setting}]  
\label{cor:changepoint_screening_weak}
Assume the conditions of \Fref{thm:weak_sparsity}, 
so that $\TV(\theta_0) \leq C_n$ for a sequence $C_n$. 
Also assume that
\smash{$H_n = \omega(n^{1/6} C_n^{1/3} / \sqrt{W_n})$}.
Let \smash{$\htheta$} be the fused lasso estimate in
\eqref{eq:fl}, with \smash{$\lambda = \Theta(n^{1/3}C_n^{-1/3})$}.  
Then 
\begin{equation*}
d\big(\hS\,|\,S_0 \big) = 
O_\P\bigg(\frac{n^{1/3} C_n^{2/3}}{H_n^2}\bigg).
\end{equation*}
\end{corollary}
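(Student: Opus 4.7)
The plan is simply to specialize the generic screening result, Theorem \ref{thm:changepoint_screening}, to the fused lasso estimator in the weak sparsity regime, using the $\ell_2$ error rate furnished by Theorem \ref{thm:weak_sparsity}. The two ingredients are already in hand: Theorem \ref{thm:weak_sparsity} gives \smash{$\|\htheta-\theta_0\|_n^2 = O_\P(R_n)$} with $R_n = n^{-2/3} C_n^{2/3}$ for the stated choice of $\lambda$, and Theorem \ref{thm:changepoint_screening} converts any such $\ell_2$ rate into a one-sided screening bound of order $nR_n/H_n^2$, provided the side condition $nR_n/H_n^2 = o(W_n)$ holds.

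First I would verify that the hypothesis of Theorem \ref{thm:changepoint_screening} is satisfied under the assumptions of the corollary. Plugging in $R_n = n^{-2/3}C_n^{2/3}$ gives
\begin{equation*}
\frac{nR_n}{H_n^2} = \frac{n^{1/3} C_n^{2/3}}{H_n^2},
\end{equation*}
and the requirement $nR_n/H_n^2 = o(W_n)$ becomes $H_n^2 = \omega(n^{1/3} C_n^{2/3}/W_n)$, or equivalently $H_n = \omega(n^{1/6} C_n^{1/3}/\sqrt{W_n})$, which is precisely the assumption imposed in the corollary. Hence Theorem \ref{thm:changepoint_screening} applies.

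Then I would invoke the conclusion of Theorem \ref{thm:changepoint_screening} directly, which yields
\begin{equation*}
d\big(\hS \,|\, S_0\big) = O_\P\bigg(\frac{nR_n}{H_n^2}\bigg) = O_\P\bigg(\frac{n^{1/3} C_n^{2/3}}{H_n^2}\bigg),
\end{equation*}
completing the argument. There is no real obstacle here: the entire point of formulating Theorem \ref{thm:changepoint_screening} in a model-agnostic, estimator-agnostic fashion is that specializations such as this one are automatic once the relevant $\ell_2$ rate is known. The only subtlety worth flagging is bookkeeping of the condition $nR_n/H_n^2 = o(W_n)$, which translates into the stated lower bound on $H_n$ and is what makes the corollary nontrivial to state but immediate to prove.
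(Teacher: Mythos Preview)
Your proposal is correct and matches the paper's own approach exactly: the paper states that the proof is immediate from plugging the weak sparsity rate $R_n = n^{-2/3}C_n^{2/3}$ of \Fref{thm:weak_sparsity} into the generic screening result \Fref{thm:changepoint_screening}, and omits the details. Your verification that the side condition $nR_n/H_n^2 = o(W_n)$ is equivalent to the assumed $H_n = \omega(n^{1/6}C_n^{1/3}/\sqrt{W_n})$ is precisely the bookkeeping the paper leaves implicit.
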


\begin{corollary}[\textbf{Fused lasso screening result, strong 
  sparsity setting}]   
\label{cor:changepoint_screening_strong}
Assume the conditions of \Fref{thm:strong_sparsity}, 
so that
$s_0 = O(1)$ and $W_n=\Theta(n)$. 
Also assume that
\smash{$H_n = \omega(\sqrt{(\log{n}\log\log{n})/n})$}.
Let \smash{$\htheta$} be the fused lasso estimate in
\eqref{eq:fl}, with \smash{$\lambda = \Theta(\sqrt{n})$}. 
Then 
\begin{equation*}
d\big(\hS\,|\,S_0 \big) = O_\P\bigg(
\frac{\log{n} \log\log{n}}{H_n^2} \bigg).
\end{equation*}
\end{corollary}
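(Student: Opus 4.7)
The plan is to apply Theorem \ref{thm:changepoint_screening} directly, using Theorem \ref{thm:strong_james} to supply the required $\ell_2$ rate. Under the strong sparsity assumptions $s_0 = O(1)$ and $W_n = \Theta(n)$, the bound in Theorem \ref{thm:strong_james} with $\lambda = (nW_n)^{1/4} = \Theta(\sqrt{n})$ simplifies: the term $(\log s_0 + \log\log n)\log n$ reduces to $O(\log n \log\log n)$ because $s_0 = O(1)$, and the term $\sqrt{n/W_n} = O(1)$ is of lower order. Taking $\gamma$ to be a sufficiently large constant (or letting it grow slowly, e.g., $\gamma = \log\log n$) shows that the exceptional probability $\exp(-C\gamma)$ vanishes, yielding
\begin{equation*}
\|\htheta - \theta_0\|_n^2 = O_\P\!\left(\frac{\log n \log\log n}{n}\right),
\end{equation*}
so we may take $R_n = (\log n \log\log n)/n$ in the notation of Theorem \ref{thm:changepoint_screening}.

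Next I would verify the hypothesis $n R_n / H_n^2 = o(W_n)$ required by Theorem \ref{thm:changepoint_screening}. Plugging in, $n R_n / H_n^2 = (\log n \log\log n)/H_n^2$. The assumption $H_n = \omega(\sqrt{(\log n \log\log n)/n})$ is exactly $H_n^2 = \omega((\log n \log\log n)/n)$, which rearranges to $(\log n \log\log n)/H_n^2 = o(n)$; combined with $W_n = \Theta(n)$, this gives $n R_n / H_n^2 = o(W_n)$ as required.

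With both the $\ell_2$ rate and the growth condition verified, the conclusion of Theorem \ref{thm:changepoint_screening} immediately yields
\begin{equation*}
d\big(\hS \,|\, S_0 \big) = O_\P\!\left(\frac{n R_n}{H_n^2}\right) = O_\P\!\left(\frac{\log n \log\log n}{H_n^2}\right),
\end{equation*}
which is the claimed bound. There is no real obstacle here beyond bookkeeping: the entire content of the corollary is the substitution of the strong-sparsity rate from Theorem \ref{thm:strong_james} into the generic screening bound of Theorem \ref{thm:changepoint_screening}, together with the elementary check that the scaling assumption on $H_n$ is exactly what is needed to ensure the generic theorem's regularity condition holds.
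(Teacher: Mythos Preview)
Your proposal is correct and matches the paper's intended argument exactly: the paper states that this corollary follows immediately by plugging the rate $R_n = (\log n \log\log n)/n$ from Theorem~\ref{thm:strong_james} into the generic screening result Theorem~\ref{thm:changepoint_screening}, and omits the proof. Your verification of the condition $nR_n/H_n^2 = o(W_n)$ and the handling of $\gamma$ to extract the $O_\P$ bound are the only bookkeeping needed, and you carry them out correctly.
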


\begin{remark}[\textbf{Conditions on $W_n,H_n$}]
We have rewritten the condition that
\smash{$nR_n/H_n^2 = o(W_n)$} in 
\Fref{thm:changepoint_screening} as 
\smash{$H_n = \omega(n^{1/6} C_n^{1/3} / \sqrt{W_n})$} in
\Fref{cor:changepoint_screening_weak}, and 
\smash{$H_n = \omega(\sqrt{(\log{n}\log\log{n})/n})$} in
\Fref{cor:changepoint_screening_strong} (note that in the latter, we
are assuming that $W_n=\Theta(n)$). 
\end{remark}

\begin{remark}[\textbf{Screening under weak sparsity}]
\label{rem:screening_weak}
\Fref{cor:changepoint_screening_weak}
handles a difficult setting in which the number
of changepoints $s_0$ in $\theta_0$ can grow quickly
with $n$, and yet it still provides a reasonable bound on the
screening distance \smash{$d(S(\htheta) \,|\, S_0)$} provided that
$C_n$ is not too large (i.e., $\TV(\theta_0)$ is not growing too
quickly), or $H_n$ is large enough (i.e., the minimum signal gap
in $\theta_0$ is large enough). 
As an example, suppose that $s_0=\Theta(n^{1/6})$, and the  
changepoints in $\theta_0$ are evenly spread out, so that
$W_n=\Theta(n^{5/6})$. Then \Fref{cor:changepoint_screening_weak}  
implies, provided that \smash{$H_n = \omega(n^{-1/4} C_n^{1/3})$},   
\begin{equation*}
d\big(\hS\,|\,S_0 \big) = 
O_\P\bigg(\frac{n^{1/3} C_n^{2/3}}{H_n^2}\bigg) =
o_\P(n^{5/6}),
\end{equation*}
so for each true changepoint, there is at least one estimated
changepoint that is much closer to it than all of the other true
changepoints 
(each of which is at least a distance \smash{$W_n=\Theta(n^{5/6})$} 
away). From the condition \smash{$H_n = \omega(n^{-1/4} 
  C_n^{1/3})$}, and the fact that we must always have 
$C_n \geq s_0 H_n$ (recall \smash{$s_0=\Theta(n^{1/6})$}),
we can be more explicit here about the allowable 
ranges for $H_n,C_n$: combining the last two relationships gives 
\smash{$C_n = \omega(n^{-1/8})$}, and then 
\smash{$H_n = \omega(n^{-7/24})$}.  Hence, the minimum signal
gap requirement here is very reasonable, allowing $H_n$ to
shrink to 0, just not too quickly (this is far from a trivial regime,
e.g., with \smash{$H_n=\omega(\sqrt{\log{n}})$}, when simple
thresholding of pairwise differences achieves perfect recovery, as
shown in  \citet{sharpnack2012sparsistency}). 
\end{remark}

\begin{remark}[\textbf{Comparison to \Fref{thm:harchaoui_recovery}}] 
\label{rem:screening_strong}
\Fref{cor:changepoint_screening_strong} provides a similar conclusion
to that in \citet{harchaoui2010multiple}, restated in
\Fref{thm:harchaoui_recovery}: in a strong sparsity setting, the fused
lasso has a well-controlled screening distance, only slightly larger
than \smash{$\log{n} / H_n^2$}.  However, we note that
\Fref{cor:changepoint_screening_strong} guarantees this screening
bound under a natural choice for the tuning parameter $\lambda$, known 
to provide good $\ell_2$ estimation performance (see 
\Fref{thm:strong_james}), whereas \Fref{thm:harchaoui_recovery}
implicitly requires $\lambda$ to be very small, which seems unnatural
(see \Fref{rem:stringency}).  
\end{remark}

\begin{remark}[\textbf{Changepoint detection limit}]
\label{rem:Hmin}
The restriction that 
\smash{$H_n = \omega(\sqrt{(\log{n}\log\log{n})/n})$} in
\Fref{cor:changepoint_screening_strong} is very close to the optimal
limit of \smash{$H_n=\omega(1/\sqrt{n})$} for changepoint
detection: 
\citet{duembgen2008multiscale} showed that in 
Gaussian changepoint model with a single elevated region, and
$W_n=\Theta(n)$, there is no test for detecting a changepoint that has  
asymptotic power 1 unless   
\smash{$H_n = \omega(1/\sqrt{n})$}. See also 
\citet{chan2013detection}.
\end{remark}

\subsection{Post-processing for approximate changepoint recovery}  
\label{sec:changepoint_recovery}

We study a procedure for post-processing the estimated changepoints in 
\smash{$\ttheta$}, in such a way that aims to eliminate
changepoints of \smash{$\ttheta$} that lie far away from changepoints
of $\theta_0$. Our procedure is based on convolving \smash{$\ttheta$}
with a filter that resembles the mother Haar wavelet.  Consider
\begin{equation}
\label{eq:filter}
F_i(\ttheta) = 
\frac{1}{b_n}\sum_{j=i+1}^{i+b_n}\ttheta_j -
\frac{1}{b_n}\sum_{j=i-b_n+1}^{i}\ttheta_j, \quad
\text{for $i=b_n,\ldots,n-b_n$},
\end{equation}
for an integral bandwidth $b_n>0$.
Our result in this subsection asserts that, by evaluating the
filter \smash{$F_i(\ttheta)$} at all locations $i=b_n,\ldots,n-b_n$,
and retaining only locations at which the filter value is large
(in magnitude), we can approximately recovery the changepoints of
$\theta_0$, in the Hausdorff metric.

\begin{theorem}[\textbf{Generic recovery result}]
\label{thm:changepoint_recovery}
Let $\theta_0 \in \R^n$ be a piecewise constant vector, and 
\smash{$\ttheta \in \R^n$} be an estimator that satisfies the  
error bound \smash{$\|\ttheta-\theta_0\|_n^2 = O_\P(R_n)$}. 
Consider the following procedure: we evaluate the filter in
\eqref{eq:filter} with bandwidth $b_n$ at all locations
$i=b_n,\ldots,n-b_n$, and we keep only the
locations whose filter value is greater than or equal to a threshold
level $\tau_n$, in magnitude.  Denote the resulting ``filtered'' set 
by   
\begin{equation}
\label{eq:filtered}
S_F(\ttheta) = \Big\{ i \in \{b_n,\ldots,n-b_n\} : |F_i(\ttheta)| 
\geq \tau_n \Big\}. 
\end{equation}
If the bandwidth and threshold values 
satisfy \smash{$b_n=\omega(nR_n/H_n^2)$}, $2b_n \leq W_n$, 
and $\tau_n/H_n \to \rho \in (0,1)$ as $n \to \infty$, then we
have 
\begin{equation*}
\P\Big( d_H\big(S_F(\ttheta), S_0 \big) \leq b_n \Big) \to 1 
\quad \text{as $n \to \infty$},
\end{equation*}
where $d_H(\,\cdot, \cdot\,)$ is the Hausdorff distance, as defined
in \eqref{eq:metrics}.  
\end{theorem}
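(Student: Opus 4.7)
The plan is to split $F_i(\ttheta) = F_i(\theta_0) + (F_i(\ttheta) - F_i(\theta_0))$ into a ``signal'' term and a ``noise'' term, evaluate the signal term exactly using the structure of $\theta_0$, and bound the noise term uniformly using only the $\ell_2$ rate. The argument then reduces to comparing two deterministic quantities to the threshold $\tau_n$.

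\textbf{Step 1 (uniform noise bound).} Writing $\eta = \ttheta - \theta_0$, for each $i$ Cauchy--Schwarz gives
\begin{equation*}
\bigg| \frac{1}{b_n} \sum_{j=i+1}^{i+b_n} \eta_j \bigg| \leq \frac{1}{\sqrt{b_n}} \|\eta\|_2,
\end{equation*}
and likewise for the left window, so $\max_{b_n \leq i \leq n-b_n} |F_i(\ttheta) - F_i(\theta_0)| \leq (2/\sqrt{b_n}) \|\eta\|_2 = O_\P(\sqrt{nR_n/b_n})$. The hypothesis $b_n = \omega(nR_n/H_n^2)$ then gives $\max_i |F_i(\ttheta) - F_i(\theta_0)| = o_\P(H_n)$.

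\textbf{Step 2 (signal at and away from changepoints).} The condition $2b_n \leq W_n$ implies that for any $i \in [b_n, n-b_n]$, the interval $(i-b_n, i+b_n]$ contains at most one true changepoint. Consequently: (a) if $\min_{j}|i - t_j| > b_n$ the window is jump-free, so $F_i(\theta_0) = 0$; (b) if $i = t_j$ for some $t_j \in S_0$, the left window is constant at $\theta_{0,t_j}$ and the right window is constant at $\theta_{0,t_j+1}$, so $|F_{t_j}(\theta_0)| = |\theta_{0,t_j+1} - \theta_{0,t_j}| \geq H_n$. Note also that each $t_j \in [W_n, n-W_n] \subseteq [b_n, n-b_n]$, so $t_j$ is always inside the domain of the filter.

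\textbf{Step 3 (combine with threshold).} Fix $\rho' \in (\rho, 1)$, so that eventually $\tau_n \leq \rho' H_n$. On the high-probability event $\mathcal{E}_n = \{\max_i |F_i(\ttheta) - F_i(\theta_0)| \leq (1-\rho') H_n / 2\}$, two things happen simultaneously. For each $t_j \in S_0$, Step 2(b) gives $|F_{t_j}(\ttheta)| \geq H_n - (1-\rho') H_n / 2 \geq \tau_n$, so $t_j \in S_F(\ttheta)$; hence $d(S_0 \,|\, S_F(\ttheta)) = 0$. For any $i$ with $\min_j |i - t_j| > b_n$, Step 2(a) gives $|F_i(\ttheta)| \leq (1-\rho') H_n / 2 < \tau_n$, so $i \notin S_F(\ttheta)$; hence every $i \in S_F(\ttheta)$ lies within $b_n$ of some $t_j$, i.e.\ $d(S_F(\ttheta) \,|\, S_0) \leq b_n$. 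Therefore $d_H(S_F(\ttheta), S_0) \leq b_n$ on $\mathcal{E}_n$, and $\P(\mathcal{E}_n) \to 1$ by Step 1.

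The only non-routine element is Step 1, and even there the Cauchy--Schwarz bound is uniform in $i$ for free (no union bound needed), which is why the argument requires no assumptions beyond the $\ell_2$ rate itself; the main ``obstacle'' is simply verifying that the scaling $b_n = \omega(nR_n/H_n^2)$ converts $\sqrt{nR_n/b_n}$ into $o(H_n)$, so that the noise cannot mask a true jump nor mimic one.
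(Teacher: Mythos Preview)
Your argument is essentially the paper's: both bound $|F_i(\ttheta) - F_i(\theta_0)|$ via Cauchy--Schwarz (the paper packages this as two optimization lemmas, but the statement that $|a^\top x|$ lies in $[\,|a^\top c| - r\|a\|_2,\; |a^\top c| + r\|a\|_2\,]$ whenever $\|x-c\|_2 \leq r$ is exactly Cauchy--Schwarz plus a triangle inequality), and then compare the result to $\tau_n$ at and away from true changepoints.

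There is one small bookkeeping slip in Step 3. Your claim $(1-\rho')H_n/2 < \tau_n$ needs a \emph{lower} bound on $\tau_n$; the hypothesis $\tau_n/H_n \to \rho$ only gives $\tau_n > (\rho - o(1))H_n$, so you require $(1-\rho')/2 < \rho$, i.e.\ $\rho' > 1-2\rho$. When $\rho < 1/3$ this is strictly stronger than the condition $\rho' > \rho$ that you imposed (for instance $\rho = 0.1$, $\rho' = 0.2$ fails: $(1-\rho')/2 = 0.4 > \rho$). The fix is either to take $\rho' \in (\max\{\rho, 1-2\rho\}, 1)$, or to introduce a separate constant for the lower side; the paper does the latter, setting $\epsilon = \min\{\rho, 1-\rho\}/2$ and using the two-sided sandwich $\tau_n \in ((\rho-\epsilon)H_n, (\rho+\epsilon)H_n)$.

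A minor notational point: with the paper's convention $d(A\,|\,B) = \max_{b\in B}\min_{a\in A}|a-b|$, your two one-sided conclusions have $S_0$ and $S_F(\ttheta)$ interchanged (the Hausdorff conclusion is unaffected).
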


\begin{proof}
The proof is not complicated conceptually, but requires some
careful bookkeeping.  Also, we make use of a few key lemmas whose   
details will be given later.  Fix $\epsilon>0$ and $C>0$. Let 
$N_1>0$ be an integer such that for all $n \geq N_1$, 
\begin{equation*}
\P\Big( \|\ttheta-\theta_0\|_n^2 > C R_n\Big) \leq   
\frac{\epsilon}{2}.
\end{equation*}
Set $\epsilon=\min\{\rho,1-\rho\}/2$. 
As \smash{$b_n=\omega(nR_n/H_n^2)$}, there is an integer 
$N_2>0$ such that for all $n \geq N_2$, 
\begin{equation*}
\frac{2CnR_n}{b_n} \leq (0.99\epsilon H_n)^2.  
\end{equation*}
As $\tau_n/H_n \to \rho \in (0,1)$, there is an integer
$N_3>0$ such that for all $n \geq N_3$,
\begin{equation*}
(\rho-\epsilon)H_n \leq \tau_n \leq (\rho+\epsilon) H_n. 
\end{equation*}
Set $N=\max\{N_1,N_2,N_3\}$, and take $n \geq N$. Note that
$\epsilon \leq \rho-\epsilon$ and $\rho + \epsilon
\leq 1-\epsilon$ by construction, and thus by the last two displays, 
\begin{equation}
\label{eq:tau_bds}
\sqrt{\frac{2CnR_n}{b_n}} < \tau_n <
H_n - \sqrt{\frac{2CnR_n}{b_n}}.
\end{equation}

Now observe   
\begin{equation}
\label{eq:screen_dists}
\P\Big( d_H\big(S_F(\ttheta), S_0\big) > b_n \Big) \leq 
\P\Big( d\big(S_F(\ttheta) \,|\, S_0\big) > b_n \Big) +
\P\Big( d\big(S_0 \,|\, S_F(\ttheta) \big) > b_n \Big). 
\end{equation}
We focus on bounding each term on the right-hand side above
separately. For the first term on the right-hand side in
\eqref{eq:screen_dists},
observe that if \smash{$F_{t_i}(\ttheta) \geq \tau_n$} for all $t_i \in S_0$,
then \smash{$d(S_F(\ttheta) \,|\, S_0) \leq b_n$}. By the contrapositive,
\begin{align}
\nonumber
\P\Big( d\big( S_F(\ttheta) \,|\, S_0 \big) > b_n \Big) &\leq
\P\Big( |F_{t_i}(\ttheta)| < \tau_n \;\,\text{for some $t_i \in S_0$}
\Big) \\
\label{eq:good_jumps}
&\leq \P\bigg( |F_{t_i}(\ttheta)| < H_n - \sqrt{\frac{2CnR_n}{b_n}}
\;\,\text{for some $t_i \in S_0$} \bigg), 
\end{align}
where in the second line we used the upper bound on $\tau_n$ in
\eqref{eq:tau_bds}. Suppose that 
\smash{$\|\ttheta-\theta_0\|_n^2 \leq CR_n$};
then, for $t_i \in S_0$,
\Fref{lem:how_small} tells us how small
\smash{$|F_{t_i}(\ttheta)|$} can be made with this error bound in
place. Specifically, define
\begin{equation*}
a = (\underbrace{-1/b_n,\ldots,-1/b_n}_{\text{$b_n$ times}}, 
\underbrace{1/b_n,\ldots,1/b_n}_{\text{$b_n$ times}})
\quad \text{and} \quad
c = (\theta_{0,t_i-b_n+1},\ldots,\theta_{0,t_i+b_n}), 
\end{equation*}
and also \smash{$r=\sqrt{CnR_n}$}.
Then
\Fref{lem:how_small} implies the following: 
if \smash{$\|\ttheta-\theta_0\|_n^2 \leq CR_n$}, then
\begin{equation*}
|F_{t_i}(\ttheta)| \geq |a^\top c| - r\|a\|_2 \geq 
|\theta_{0,t_i+1}-\theta_{0,t_i}| -
\sqrt{\frac{2CnR_n}{b_n}} \geq H_n - \sqrt{\frac{2CnR_n}{b_n}}. 
\end{equation*}
Therefore, continuing on from \eqref{eq:good_jumps},
\begin{align*}
\P\Big( d\big( S_F(\ttheta) \,|\, S_0 \big) > b_n \Big) &\leq 
\P\bigg( |F_{t_i}(\ttheta)| < H_n - \sqrt{\frac{2CnR_n}{b_n}} 
\;\,\text{for some $t_i \in S_0$} \bigg) \\
&\leq \P\Big( \|\ttheta-\theta_0\|_n^2 > C R_n\Big) \\
& \leq \frac{\epsilon}{2}.
\end{align*}

It suffices to consider the second term in \eqref{eq:screen_dists}, 
and show that this is also bounded by $\epsilon/2$.  Note that 
\begin{align}
\nonumber
\P\Big( d\big(S_0 \,|\, S_F(\ttheta) \big) > b_n \Big) &\leq 
\P\bigg( |F_i(\ttheta)| \geq \tau_n \;\,\text{at some $i$ such   
that $\theta_{0,i-b_n+1}=\ldots=\theta_{0,i+b_n}$} \bigg) \\ 
\label{eq:bad_jumps}
&\leq \P\bigg( |F_i(\ttheta)| > \sqrt{\frac{2CnR_n}{b_n}}
\;\, \text{at some $i$ such that 
$\theta_{0,i-b_n+1}=\ldots=\theta_{0,i+b_n}$} \bigg).
\end{align}
In the second inequality we used the lower bound on $\tau_n$ in 
\eqref{eq:tau_bds}. Similar to the previous argument,
suppose that \smash{$\|\ttheta-\theta_0\|_n^2 \leq CR_n$}; for any
location $i$ in
consideration in \eqref{eq:bad_jumps}, \Fref{lem:how_big} tells
us how large \smash{$|F_i(\ttheta)|$} can be made with this error
bound in place.  Defining
\begin{equation*}
a = (\underbrace{-1/b_n,\ldots,-1/b_n}_{\text{$b_n$ times}}, 
\underbrace{1/b_n,\ldots,1/b_n}_{\text{$b_n$ times}})
\quad \text{and} \quad
c = (\theta_{0,i-b_n+1},\ldots,\theta_{0,i+b_n}), 
\end{equation*}
and \smash{$r=\sqrt{CnR_n}$}, as before, the lemma says the following: 
if \smash{$\|\ttheta-\theta_0\|_n^2 \leq CR_n$}, then 
\begin{equation*}
|F_i(\ttheta)| \leq |a^\top c| + r\|a\|_2 = 
\sqrt{\frac{2CnR_n}{b_n}}.
\end{equation*}
Hence, continuing on from \eqref{eq:bad_jumps},
\begin{align*}
\P\Big(d\big(S_0 \,|\, S_F(\ttheta) \big) > b_n \Big) 
&\leq \P\bigg( |F_i(\ttheta)| > \sqrt{\frac{2CnR_n}{b_n}}
\;\, \text{at some $i$ such that 
 $\theta_{0,i-b_n+1}=\ldots=\theta_{0,i+b_n}$} \bigg) \\
&\leq \P\Big( \|\ttheta-\theta_0\|_n^2 > C R_n\Big) \\
& \leq \frac{\epsilon}{2},
\end{align*}
completing the proof.
\end{proof}

\begin{remark}[\textbf{Comparison to
    \Fref{thm:changepoint_screening}}] 
Though they are stated differently, the rates in Theorems
\ref{thm:changepoint_screening} and \ref{thm:changepoint_recovery} for
approximate changepoint screening and recovery, respectively, are
comparable. To see this, note that the conclusion in the latter
implies 
\begin{equation*}
\P\Big( d\big(S(\ttheta) \,|\, S_0 \big) \leq d_n \Big) \to 1 
\quad \text{as $n \to \infty$},
\end{equation*}
for any sequence \smash{$d_n=\omega(nR_n/H_n^2)$}, which is in
line with \Fref{thm:changepoint_recovery}.  (The original conclusion
that \smash{$d(S(\ttheta) \,|\, S_0) = O_\P(nR_n/H_n^2)$} is a 
somewhat stronger statement, though the difference is not major.) 
\end{remark}

\begin{remark}[\textbf{Generic setting: no particular assumptions on  
  data model, or estimator}]
To emphasize a similar point to that in \Fref{rem:no_model},
\Fref{thm:changepoint_recovery} does not use a specific data model,
and considers any estimator \smash{$\ttheta$} for which we have  
$\ell_2$ error control,
\smash{$\|\ttheta-\theta_0\|_n^2=O_\P(R_n)$}.  
This makes it a very flexible and broadly applicable result.  
When the data comes from a model as in \eqref{eq:model}, where
$\theta_0$ is the mean and we have i.i.d.\ sub-Gaussian errors, we 
can apply \Fref{thm:changepoint_recovery} to the fused lasso, given
our knowledge of its $\ell_2$ error rate 
(see Corollaries \ref{cor:changepoint_recovery_weak} and
\ref{cor:changepoint_recovery_strong} below).   
It could also be applied, under the same data model, to many
other estimators whose $\ell_2$ error rates are known (such as the
Potts estimator, and unbalanced Haar wavelets).  Moreover, it could be 
useful under different data models, like the Poisson model 
in \eqref{eq:model_pois}, as it would provide
approximate recovery guarantees for any method with a fast 
enough $\ell_2$ estimation error rate. (Note
that the post-processing step using the filter \eqref{eq:filter}
does not itself require assumptions about the data.)  We do not
consider such extensions in the current paper, but they suggest
interesting directions for future work.     
\end{remark}

The proof of \Fref{thm:changepoint_recovery} relied on two lemmas, 
that we state below.  Their proofs are based on simple arguments in
convex analysis and deferred until Appendix \ref{app:how_big_small}. 

\begin{lemma}
\label{lem:how_big}
Given $a,c \in \R^m$, $r \geq 0$, the optimal value of
the (nonconvex) optimization problem
\begin{equation}
\label{eq:how_big}
\max_{x \in \R^m} \; |a^\top x| \;\;\st\;\; \|x-c\|_2 \leq r 
\end{equation}
is $|a^\top c| + r\|a\|_2$. 
\end{lemma}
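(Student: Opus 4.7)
The key observation is that although the objective $|a^\top x|$ is not concave, it can be written as the pointwise maximum of two linear functions, which reduces the problem to two tractable convex subproblems. My plan is to split the program as
\begin{equation*}
\max_{\|x-c\|_2 \leq r} |a^\top x| = \max\bigg\{ \max_{\|x-c\|_2 \leq r} a^\top x, \;\max_{\|x-c\|_2 \leq r} (-a^\top x) \bigg\},
\end{equation*}
and evaluate each of the two linear maximizations over a ball separately.

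For the first inner problem, I would substitute $x = c + v$ with $\|v\|_2 \leq r$, so that $a^\top x = a^\top c + a^\top v$. By the Cauchy--Schwarz inequality, $a^\top v \leq \|a\|_2 \|v\|_2 \leq r\|a\|_2$, with equality attained at $v = r a/\|a\|_2$ when $a \neq 0$ (and trivially when $a = 0$). Hence the first subproblem's optimal value is $a^\top c + r\|a\|_2$. The second subproblem is identical in form after replacing $a$ with $-a$, yielding optimal value $-a^\top c + r\|a\|_2$.

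Combining the two, the original maximum equals
\begin{equation*}
\max\{a^\top c + r\|a\|_2,\; -a^\top c + r\|a\|_2\} = |a^\top c| + r\|a\|_2,
\end{equation*}
which is the claimed value. The maximizer is $x^\star = c + r\,\sign(a^\top c) a/\|a\|_2$ when $a \neq 0$, and any $x$ with $\|x-c\|_2 \leq r$ otherwise.

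There is no real obstacle here; the only thing to be slightly careful about is the degenerate case $a = 0$, where the formula $a/\|a\|_2$ is undefined but the bound $|a^\top x| = 0 = |a^\top c| + r\|a\|_2$ holds trivially for every feasible $x$. The proof is essentially a two-line application of Cauchy--Schwarz once the decomposition $|a^\top x| = \max(a^\top x, -a^\top x)$ is made, which is why the lemma is deferred to an appendix.
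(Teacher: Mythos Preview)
Your proof is correct and follows essentially the same approach as the paper: both split $|a^\top x|$ into the two linear problems $\max a^\top x$ and $\max(-a^\top x)$ over the ball, solve each (the paper via a Lagrangian/stationarity argument, you via Cauchy--Schwarz), and then take the maximum of the two values. The structural decomposition is identical, and the Cauchy--Schwarz step is just a slightly more direct way of computing the same linear-over-ball optimum.
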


\begin{lemma}
\label{lem:how_small}
Given $a,c \in \R^m$, $r \geq 0$ such that
$|a^\top c| - r\|a\|_2\geq 0$, the optimal value of 
the (convex) optimization problem
\begin{equation}
\label{eq:how_small}
\min_{x \in \R^n} \; |a^\top x| \;\;\st\;\; \|x-c\|_2 \leq r
\end{equation}
is $|a^\top c|-r\|a\|_2$. 
\end{lemma}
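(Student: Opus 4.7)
The plan is to reduce the problem to a linear minimization by exploiting the hypothesis $|a^\top c| - r\|a\|_2 \geq 0$, which guarantees that $a^\top x$ does not change sign on the feasible set. First I would dispose of the trivial case $a = 0$ (both sides equal $0$), and then, by replacing $a$ with $-a$ if necessary, assume without loss of generality that $a^\top c \geq 0$, so that $|a^\top c| = a^\top c$.

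Next, for any $x$ in the feasible set, I would write $a^\top x = a^\top c + a^\top(x - c)$ and apply Cauchy--Schwarz to obtain
\begin{equation*}
a^\top c - r\|a\|_2 \;\leq\; a^\top x \;\leq\; a^\top c + r\|a\|_2.
\end{equation*}
The standing hypothesis $a^\top c \geq r\|a\|_2$ then forces $a^\top x \geq 0$ throughout the feasible ball, so the absolute value can be dropped and the problem reduces to minimizing the linear functional $a^\top x$ over $\{x : \|x-c\|_2 \leq r\}$.

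Finally, a linear function attains its minimum on a Euclidean ball at the boundary point opposite to the gradient direction, namely $x^\star = c - r\,a/\|a\|_2$, which is feasible with $\|x^\star - c\|_2 = r$ and achieves $a^\top x^\star = a^\top c - r\|a\|_2$. Combined with the matching lower bound from Cauchy--Schwarz, this identifies the optimal value as $a^\top c - r\|a\|_2 = |a^\top c| - r\|a\|_2$, as claimed. The only subtlety worth flagging is that without the sign hypothesis $|a^\top c| \geq r\|a\|_2$, the feasible ball could straddle the hyperplane $\{a^\top x = 0\}$, and the minimum of $|a^\top x|$ would then be $0$ rather than the quoted expression; this is precisely why that hypothesis is imposed, and it is the only place in the argument where it is used.
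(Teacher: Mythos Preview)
Your proof is correct and follows essentially the same approach as the paper: both arguments use the hypothesis $|a^\top c|\ge r\|a\|_2$ to drop the absolute value (after possibly replacing $a$ by $-a$), reducing to a linear minimization over the ball whose value is $a^\top c - r\|a\|_2$. The only cosmetic difference is that the paper invokes the Lagrangian computation from the preceding lemma to evaluate the linear minimum, whereas you obtain it directly via Cauchy--Schwarz and an explicit minimizer $x^\star = c - r\,a/\|a\|_2$.
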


We finish this subsection with two corollaries of
\Fref{thm:changepoint_recovery} for the fused lasso estimator, in the
weak and strong sparsity cases.  The proofs are immediate and are thus
omitted. 

\begin{corollary}[\textbf{Fused lasso recovery result, weak 
  sparsity setting}]  
\label{cor:changepoint_recovery_weak} 
Assume the conditions of \Fref{thm:weak_sparsity}, 
so that $\TV(\theta_0) \leq C_n$ for a sequence $C_n$. 
Let \smash{$\htheta$} be the fused lasso estimate in
\eqref{eq:fl}, with \smash{$\lambda = \Theta(n^{1/3}C_n^{-1/3})$}, and 
consider applying the filter in \eqref{eq:filter} to \smash{$\htheta$},
as described in \Fref{thm:changepoint_recovery}, to produce a filtered
set \smash{$\hS_F=S_F(\htheta)$}. If the bandwidth and threshold
satisfy \smash{$b_n = \lfloor n^{1/3} C_n^{2/3} \nu_n^2 / H_n^2
  \rfloor \leq W_n/2$} for a sequence $\nu_n \to \infty$, and
$\tau_n/H_n \to \rho \in (0,1)$, then  
\begin{equation*}
\P\bigg( d_H(\hS_F, S_0) \leq 
\frac{n^{1/3} C_n^{2/3} \nu_n^2}{H_n^2}\bigg) \to 1
\quad \text{as $n \to \infty$}.
\end{equation*}
\end{corollary}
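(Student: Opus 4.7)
The plan is to deduce Corollary \ref{cor:changepoint_recovery_weak} as a direct specialization of Theorem \ref{thm:changepoint_recovery}, by plugging in the weak-sparsity $\ell_2$ error rate for the fused lasso from Theorem \ref{thm:weak_sparsity}. The heavy lifting is already done by those two results; the corollary is really just a bookkeeping exercise.

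First, I would invoke Theorem \ref{thm:weak_sparsity}: under the stated assumptions on the model and on $\TV(\theta_0) \leq C_n$, with the choice $\lambda = \Theta(n^{1/3} C_n^{-1/3})$, the fused lasso estimate $\htheta$ satisfies $\|\htheta - \theta_0\|_n^2 = O_\P(n^{-2/3} C_n^{2/3})$. So Theorem \ref{thm:changepoint_recovery} can be applied with $\ttheta = \htheta$ and $R_n = n^{-2/3} C_n^{2/3}$.

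Next I would check the three hypotheses of Theorem \ref{thm:changepoint_recovery} for the chosen bandwidth $b_n = \lfloor n^{1/3} C_n^{2/3} \nu_n^2 / H_n^2 \rfloor$ and threshold $\tau_n$. With the stated $R_n$, one computes
\begin{equation*}
\frac{n R_n}{H_n^2} = \frac{n^{1/3} C_n^{2/3}}{H_n^2},
\end{equation*}
so $b_n / (n R_n / H_n^2) \asymp \nu_n^2 \to \infty$, confirming $b_n = \omega(n R_n / H_n^2)$. The condition $2 b_n \leq W_n$ is given by assumption, and $\tau_n / H_n \to \rho \in (0,1)$ is also assumed.

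Applying Theorem \ref{thm:changepoint_recovery} then gives $\P(d_H(\hS_F, S_0) \leq b_n) \to 1$, and since $b_n \leq n^{1/3} C_n^{2/3} \nu_n^2 / H_n^2$ by definition of the floor, this yields the claimed bound. The only potential subtlety is making sure that one is free to choose $\nu_n$ diverging arbitrarily slowly so that $b_n \leq W_n/2$ remains compatible with the regime at hand, but this is imposed as an explicit hypothesis in the corollary, so no obstacle arises. In short, I do not expect any genuine obstacle here: the corollary is essentially an unpacking of Theorem \ref{thm:changepoint_recovery} with the weak-sparsity rate substituted in.
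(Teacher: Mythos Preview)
Your proposal is correct and matches the paper's approach exactly: the paper states that the proof is immediate and omits it, and your argument is precisely the intended one—specialize Theorem~\ref{thm:changepoint_recovery} with $\ttheta=\htheta$ and $R_n=n^{-2/3}C_n^{2/3}$ from Theorem~\ref{thm:weak_sparsity}, then verify the bandwidth and threshold conditions.
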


\begin{corollary}
[\textbf{Fused lasso recovery result, strong 
sparsity setting}]  
\label{cor:changepoint_recovery_strong}
Assume the conditions of \Fref{thm:strong_sparsity}, 
so that $s_0 = O(1)$ and $W_n=\Theta(n)$.
Let \smash{$\htheta$} denote the fused lasso estimate in
\eqref{eq:fl}, with \smash{$\lambda = \Theta(\sqrt{n})$}, and 
consider applying the filter in \eqref{eq:filter} to
\smash{$\htheta$}, as in \Fref{thm:changepoint_recovery}, to
produce a filtered set \smash{$\hS_F=S_F(\htheta)$}.   
If the bandwidth and threshold values satisfy
{$b_n = \lfloor (\log{n}\log\log{n}) \nu_n^2 / H_n^2 \rfloor \leq
  W_n/2$} for a  sequence $\nu_n \to \infty$, and $\tau_n/H_n \to \rho
\in (0,1)$, then  
\begin{equation*}
\P\bigg( d_H(\hS_F, S_0) \leq 
\frac{(\log{n} \log\log{n}) \nu_n^2}{H_n^2} \bigg) \to 1
\quad \text{as $n \to \infty$}.
\end{equation*}
\end{corollary}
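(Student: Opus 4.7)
The plan is to observe that this corollary is a direct specialization of the generic recovery result in \Fref{thm:changepoint_recovery}, combined with the estimation error rate for the fused lasso in the strong sparsity regime established in \Fref{thm:strong_james}. Since the authors explicitly note that the proof is immediate and omitted, the work consists in verifying that the hypotheses of \Fref{thm:changepoint_recovery} are satisfied with the appropriate choice of $R_n$.

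First, I would identify the relevant $\ell_2$ estimation rate. Under the strong sparsity conditions of \Fref{thm:strong_sparsity}, namely $s_0 = O(1)$ and $W_n=\Theta(n)$, \Fref{thm:strong_james} (with $\lambda=\Theta(\sqrt{n})$, which matches $(nW_n)^{1/4}=\Theta(\sqrt{n})$ when $W_n=\Theta(n)$) gives
\begin{equation*}
\|\htheta-\theta_0\|_n^2 = O_\P\!\left(\frac{\log n \, \log\log n}{n}\right),
\end{equation*}
so we take $R_n = (\log n \log\log n)/n$ when invoking the generic theorem.

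Next I would verify the three hypotheses of \Fref{thm:changepoint_recovery} with $\ttheta = \htheta$ and this choice of $R_n$. The bandwidth condition $b_n = \omega(nR_n/H_n^2)$ becomes $b_n = \omega((\log n \log\log n)/H_n^2)$, which is satisfied by the choice $b_n = \lfloor (\log n \log\log n)\nu_n^2/H_n^2 \rfloor$ since $\nu_n \to \infty$ implies $b_n/(nR_n/H_n^2) \to \infty$. The condition $2b_n \leq W_n$ is assumed directly in the corollary's hypothesis, and the threshold condition $\tau_n/H_n \to \rho \in (0,1)$ is also assumed directly.

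With all hypotheses verified, \Fref{thm:changepoint_recovery} delivers $\P(d_H(\hS_F, S_0) \leq b_n) \to 1$. Since $b_n \leq (\log n \log\log n)\nu_n^2/H_n^2$ by definition, monotonicity of probability in the event gives the stated conclusion. There is no real obstacle here: the only thing to keep track of is the compatibility of the $\lambda$ scaling in \Fref{thm:strong_james} with the $\lambda=\Theta(\sqrt{n})$ scaling hypothesized in the corollary, which holds because $W_n=\Theta(n)$ makes $(nW_n)^{1/4}=\Theta(\sqrt{n})$.
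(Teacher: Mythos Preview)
Your proposal is correct and is precisely the approach the paper takes: the corollary is obtained by plugging the strong-sparsity rate $R_n = (\log n\,\log\log n)/n$ from \Fref{thm:strong_james} into the generic recovery theorem, and the paper explicitly omits the proof as immediate. Your verification of the three hypotheses of \Fref{thm:changepoint_recovery} and the compatibility of the $\lambda$ scaling is exactly what is needed.
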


\begin{remark}[\textbf{Recovery under weak sparsity, comparison to
    BS}]  
\label{rem:recovery_weak}
\Fref{cor:changepoint_recovery_weak} considers a challenging
setting in which the number of 
changepoints $s_0$ in $\theta_0$ could be growing quickly with $n$,
and the only control that we have is $\TV(\theta_0) \leq
C_n$. We draw a comparison here to known results on
binary segmentation (BS).  
\Fref{cor:changepoint_recovery_weak} on the (filtered)
fused lasso and Theorem 3.1 in \citet{fryzlewicz2014wild} on the BS  
estimator \smash{$\htheta^{\mathrm{BS}}$}, 
each under appropriate conditions on $W_n,H_n$, state that 
\begin{equation}
\label{eq:fl_bs_weak}
d_H(\hS_F, S_0) \leq 
\frac{n^{1/3} C_n^{2/3} \log{n}}{H_n^2}
\quad \text{versus} \quad
d_H\big(S(\htheta^{\mathrm{BS}}), S_0\big) \leq  
c \frac{n \log{n}}{H_n^2}
\quad \text{respectively},
\end{equation}
where $c>0$ is a 
constant, and both bounds hold with probability approaching 1.
The result on \smash{$\hS_F$} is obtained by 
choosing \smash{$\nu_n=\sqrt{\log{n}}$}
and then \smash{$b_n = \lfloor n^{1/3} C_n^{2/3} \log{n} / H_n^2
  \rfloor$} in \Fref{cor:changepoint_recovery_weak}.  Examining 
\eqref{eq:fl_bs_weak}, we see that, when $C_n$
scales more slowly than $n$, \Fref{thm:changepoint_recovery} provides
the stronger result: the term \smash{$n^{1/3} C_n^{2/3}$} will 
be smaller than $n$, and thus the bound on \smash{$d_H(\hS_F, S_0)$}  
will be sharper than that on 
\smash{$d_H(S(\htheta^{\mathrm{BS}}), S_0)$}.  

But we must also examine the specific restrictions that each result in 
\eqref{eq:fl_bs_weak} places on $s_0,W_n,H_n$.  Consider the
simplification \smash{$W_n=\Theta(n/s_0)$}, corresponding to a
case in which the changepoints in $\theta_0$ are spaced evenly apart.   
For \Fref{cor:changepoint_recovery_weak}, starting with the
condition \smash{$n^{1/3} C_n^{2/3} \log{n} / H_n^2 \leq W_n/2$}, 
plugging in the relationship $C_n \geq s_0H_n$, and rearranging
to derive a lower bound on the minimum signal gap, gives 
\smash{$H_n=\Omega( s_0^{5/4} n^{-1/2} \log^{3/4}{n})$}.  If 
$s_0=\Theta(n^{2/5})$, then we see that the minimum signal gap 
requirement becomes \smash{$H_n=\Omega(\log^{3/4}{n})$}, 
which is growing with $n$ and is thus too stringent to be 
interesting (recall, as discussed previously, that 
\citet{sharpnack2012sparsistency} showed simple thresholding of 
pairwise differences achieves perfect recovery when
\smash{$H_n=\omega(\sqrt{\log{n}})$}).  Hence, to
accommodate signals for which $H_n$ 
remains constant or even
shrinks with $n$, we must restrict the number of jumps in $\theta_0$
according to \smash{$s_0 = O(n^{2/5-\delta})$}, for any fixed
$\delta>0$.  Meanwhile, inspection of Assumption 3.2 in
\citet{fryzlewicz2014wild} reveals that his Theorem 3.1 
requires \smash{$s_0 = O(n^{1/4-\delta})$}, for any $\delta>0$,  
in order to handle signals such that $H_n$ remains constant or 
shrinks with $n$.  In short, the (effectively) allowable range for
$s_0$ is larger for \Fref{thm:changepoint_recovery} than for Theorem
3.1 in \citet{fryzlewicz2014wild}. 
Even when we look within their common 
range, \Fref{thm:changepoint_recovery} places weaker
conditions on $H_n$.  As an example, consider
\smash{$s_0=\Theta(n^{1/6})$} and \smash{$W_n=\Theta(n^{5/6})$}.  The
fused lasso result in \eqref{eq:fl_bs_weak} requires
\smash{$H_n=\Omega(n^{-7/24} \log^{4/3}{n})$},
and the BS result in \eqref{eq:fl_bs_weak} requires 
\smash{$H_n=\Omega(n^{-1/6+\delta})$}, for any $\delta>0$.  Finally,
to reiterate, the fused lasso result in \eqref{eq:fl_bs_weak}
gives a better Hausdorff recovery bound  
when $C_n$ is small compared to $n$; at the extreme end, this is
better by a full factor of \smash{$n^{2/3}$}, when $C_n=O(1)$.  

While the post-processed fused lasso looks favorable 
compared to BS, based on its approximate changepoint recovery 
properties in the weak sparsity setting, we must be clear that the
analyses for other methods---wild binary segmentation (WBS),   
the simultaneous multiscale changepoint estimator (SMUCE), 
and tail-greedy unbiased Haar (TGUH) wavelets---are still much
stronger in this setting.  Such methods have Hausdorff recovery bounds
that are only possible for the post-processed fused lasso (at least,
using our current analysis technique) when we assume strong
sparsity. This is discussed next.   
\end{remark}

\begin{remark}[\textbf{Recovery under strong sparsity, comparison to
    other methods}] 
\label{rem:recovery_strong}
When $s=O(1)$ and $W_n=\Theta(n)$,
\Fref{cor:changepoint_screening_strong} shows that the post-processed
fused lasso estimator delivers a Hausdorff bound of  
\begin{equation}
\label{eq:fl_strong}
d_H(\hS_F, S_0) \leq  
\frac{\log^2{n}}{H_n^2},
\end{equation}
on the set \smash{$\hS_F$} of filtered changepoints, with probability
approaching 1.  This is obtained by choosing (say)
\smash{$\nu_n=\sqrt{\log{n}/\log\log{n}}$} 
and \smash{$b_n = \lfloor \log^2{n}/H_n^2 \rfloor \leq W_n/2  $} in 
the corollary. The effective restriction on the minimum signal gap is
thus \smash{$H_n=\Omega(\log{n}/\sqrt{n})$}, which is quite
reasonable, as \smash{$H_n=\omega(1/\sqrt{n})$} is needed for any
method to detect a changepoint with probability tending to 1 (recall 
\Fref{rem:Hmin}).  Several other methods---the Potts estimator
\citep{boysen2009consistencies}, binary segmentation (BS) and wild
binary segmentation (WBS) \citep{fryzlewicz2014wild}, the simultaneous
multiscale changepoint estimator (SMUCE)
\citep{frick2014multiscale}, and tail-greedy unbiased Haar wavelets
(TGUH) \citep{fryzlewicz2016tail}---all admit Hausdorff recovery
bounds that essentially match \eqref{eq:fl_strong}, under similarly
weak restrictions on $H_n$.  But, it should be noted that the
latter three methods---WBS, SMUCE, and TGUH---continue to enjoy these 
same sharp Hausdorff bounds {\it outside of} the strong sparsity
setting, namely, their analyses do not require that $s_0=O(1)$ and 
$W_n=\Theta(n)$, and instead just place weak restrictions on the
allowed combinations of $W_n,H_n$ (e.g., the analysis of WBS in
\citet{fryzlewicz2014wild} only requires \smash{$W_n H_n^2 \geq
  \log{n}$}).  These analyses (and those for all previously
described estimators) are more refined than that given 
in \Fref{cor:changepoint_recovery_strong}: they are based on specific 
properties of the estimator in question.  The corollary, on the other
hand, follows from \Fref{thm:changepoint_screening}, which uses a 
completely generic analysis that only assumes knowledge of the
$\ell_2$ error rate.     
\end{remark}

\subsection{Post-processing on a reduced set}
\label{sec:changepoint_recovery_reduced}

Recall that 
the strategy studied in \Fref{thm:changepoint_recovery} was to apply
the Haar filter in \eqref{eq:filter} at each location
$i=b_n,\ldots,n-b_n$ and then check for large absolute values.
Computationally, this 
not expensive---it only requires $O(n)$ operations---but there is an
undesirable feature of this strategy with respect to practical 
usage. Writing the original number of estimated changepoints as 
\smash{$\tilde{s}=|S(\ttheta)|$}, it is possible in practice for the
size of the filtered set \smash{$S_F(\ttheta)$} in \eqref{eq:filtered}
to be much larger than \smash{$\tilde{s}$}, if the bandwidth and
threshold parameters are not set appropriately.
Indeed, as the filter
is being applied at $n-2b_n$ locations, it is possible for the
filtered set to have precisely this many elements.  

Here we propose a modified strategy that runs the
filter on (at most) \smash{$3\tilde{s}+2$} changepoints, and then as 
usual, keeps only changepoints whose absolute filter values are
large. This modified strategy has essentially same the theoretical
guarantee 
of approximate changepoint recovery as the original ``exhaustive'' 
strategy from \Fref{sec:changepoint_recovery}, but enjoys the practical 
advantage that, no matter how the bandwidth and threshold parameters
are chosen, the final set of detected changepoints is bounded in size
by 3 times the number of changepoints in \smash{$\ttheta$} (plus 2, to
be precise).  Before stating the main result of this subsection, we
introduce a ``candidate'' set for locations for changepoints, 
\begin{equation}
\label{eq:candidates}
I_C(\ttheta) = \Big\{ i \in \{b_n, \ldots, n-b_n\} : i \in
S(\ttheta), \;\text{or}\;\, i+b_n \in S(\ttheta),  
\;\text{or}\;\, i-b_n \in S(\ttheta) \Big\} \cup \{b_n, n-b_n\}.
\end{equation}
These are estimated changepoints, locations that are at a
distance $b_n$ from estimated changepoints, or boundary points.

\begin{theorem}[\textbf{Generic recovery result, reduced
  post-processing}] 
\label{thm:changepoint_recovery_reduced}
Assume the conditions of \Fref{thm:changepoint_recovery}, but consider
a modified strategy in which we only evaluate the filter in
\eqref{eq:filter} at locations in the candidate set
\smash{$I_C(\ttheta)$} in \eqref{eq:candidates}, and define
a ``reduced'' set of filtered points based on the locations whose
filter value is at least $\tau_n$,
\begin{equation}
\label{eq:filtered_reduced}
S_R(\ttheta) = 
\Big\{ i \in I_R(\ttheta) : |F_i(\ttheta)| \geq \tau_n \Big\}.
\end{equation}
Then, subject to the same conditions on $b_n,\tau_n$ as in 
\Fref{thm:changepoint_recovery}, we have 
\begin{equation*}
\P\Big( d_H\big(S_R(\ttheta), S_0 \big) \leq 2b_n \Big) 
\to 1 \quad \text{as $n \to \infty$}.
\end{equation*}
\end{theorem}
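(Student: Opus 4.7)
The plan is to mimic the two-direction structure of the proof of \Fref{thm:changepoint_recovery}: bound the one-sided distances $d(S_R(\ttheta)\,|\,S_0)$ and $d(S_0\,|\,S_R(\ttheta))$ separately, each by $2b_n$ with probability tending to 1, and then combine by a union bound. Throughout, I work on the high-probability event $\|\ttheta-\theta_0\|_n^2 \leq CR_n$, which holds for any fixed $C>0$ eventually up to probability $1-\epsilon/3$.

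The ``no spurious detections'' direction, $d(S_0\,|\,S_R(\ttheta)) \leq b_n \leq 2b_n$, follows verbatim from the argument for the second term in the proof of \Fref{thm:changepoint_recovery}. The key observation is that $S_R(\ttheta)$ is a subset of the exhaustive set $\{i\in\{b_n,\ldots,n-b_n\}:|F_i(\ttheta)|\geq\tau_n\}$, so if some $i\in S_R(\ttheta)$ had no true changepoint within $b_n$, then $\theta_0$ would be constant on the window $\{i-b_n+1,\ldots,i+b_n\}$, giving $F_i(\theta_0)=0$; on the event $\|\ttheta-\theta_0\|_n^2 \leq CR_n$, \Fref{lem:how_big} would then force $|F_i(\ttheta)|\leq\sqrt{2CnR_n/b_n}<\tau_n$, a contradiction.

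The ``screening'' direction, $d(S_R(\ttheta)\,|\,S_0)\leq 2b_n$, is the new content and requires invoking \Fref{thm:changepoint_screening}. By that theorem, $d(S(\ttheta)\,|\,S_0)=O_\P(nR_n/H_n^2)=o_\P(b_n)$, using $b_n=\omega(nR_n/H_n^2)$. Pick any $\eta\in(0,1-\rho)$; on a high-probability event we simultaneously have $\|\ttheta-\theta_0\|_n^2 \leq CR_n$ and $d(S(\ttheta)\,|\,S_0)\leq\eta b_n$, so for every $t_j\in S_0$ there exists $\hat{t}\in S(\ttheta)$ with $|\hat{t}-t_j|\leq\eta b_n$. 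The hypothesis $2b_n\leq W_n$ confines each true changepoint to $\{2b_n,\ldots,n-2b_n\}$, which forces $\hat{t}\in\{b_n,\ldots,n-b_n\}$; combined with $\hat{t}\in S(\ttheta)$, this gives $\hat{t}\in I_C(\ttheta)$. Since $2b_n\leq W_n$ also makes $t_j$ the unique true changepoint inside the window $\{\hat{t}-b_n+1,\ldots,\hat{t}+b_n\}$, a direct computation yields $|F_{\hat{t}}(\theta_0)|=(1-|\hat{t}-t_j|/b_n)\,|\theta_{0,t_j+1}-\theta_{0,t_j}|\geq(1-\eta)H_n$. Combining with \Fref{lem:how_small}, the noise bound $|F_{\hat{t}}(\ttheta)-F_{\hat{t}}(\theta_0)|\leq\sqrt{2CnR_n/b_n}=o(H_n)$, and $\tau_n/H_n\to\rho<1$, for $n$ sufficiently large one has $|F_{\hat{t}}(\ttheta)|\geq\tau_n$, so $\hat{t}\in S_R(\ttheta)$ and $|\hat{t}-t_j|\leq\eta b_n\leq 2b_n$.

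The main obstacle is the screening direction: one must guarantee that the signal $(1-|\hat{t}-t_j|/b_n)H_n$ at the candidate $\hat{t}$ exceeds the threshold $\tau_n\approx\rho H_n$, which requires the ratio $|\hat{t}-t_j|/b_n$ to be strictly below $1-\rho$ for all large $n$. This is precisely the slack provided by the strict little-$o$ relation $nR_n/H_n^2=o(b_n)$ coming from \Fref{thm:changepoint_screening}, so the argument goes through. A careful $\epsilon$-$N$ bookkeeping, closely paralleling that of the proof of \Fref{thm:changepoint_recovery}, completes the proof.
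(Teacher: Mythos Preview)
Your proof is correct, but it takes a genuinely different route from the paper's. The paper argues by showing the deterministic containment
\[
\big\{d_H(S_F(\ttheta),S_0)\leq b_n\big\}\subseteq\big\{d_H(S_R(\ttheta),S_0)\leq 2b_n\big\},
\]
and then simply invokes \Fref{thm:changepoint_recovery}. The one-sided distance $d(S_0\,|\,S_R)$ is handled exactly as you do, via $S_R\subseteq S_F$. For the screening side, however, the paper relies on a separate structural fact about the filter (\Fref{lem:local_max}): every location $j$ with $|F_j(\ttheta)|>0$ has a candidate $i\in I_C(\ttheta)$ within $b_n$ satisfying $|F_i(\ttheta)|\geq|F_j(\ttheta)|$. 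Applied to any $j\in S_F(\ttheta)$ within $b_n$ of $t_\ell$, this produces $i\in S_R(\ttheta)$ within $2b_n$ of $t_\ell$.

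By contrast, you bypass \Fref{lem:local_max} entirely and instead invoke \Fref{thm:changepoint_screening} to locate a changepoint $\hat t\in S(\ttheta)$ within $\eta b_n$ of each $t_j$, observe directly that $\hat t\in I_C(\ttheta)$, and compute $|F_{\hat t}(\theta_0)|=(1-|\hat t-t_j|/b_n)\,|\theta_{0,t_j+1}-\theta_{0,t_j}|\geq(1-\eta)H_n$ to show $\hat t$ survives thresholding. This is more elementary in that it avoids the local-maximum analysis of the filter (whose proof in the appendix is not entirely short), and it even yields a tighter screening bound $d(S_R\,|\,S_0)\leq\eta b_n<b_n$. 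The paper's approach, on the other hand, is more modular---it reduces cleanly to the already-proved \Fref{thm:changepoint_recovery}---and isolates \Fref{lem:local_max} as a deterministic property of the Haar filter that may be of independent interest.
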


\begin{proof}
We will show that 
\begin{equation}
\label{eq:haus_dist_full_to_reduced}
\Big\{d_H\big(S_F(\tilde{\theta}), S_0\big) \leq b_n \Big\} \subseteq 
\Big\{d_H\big(S_R(\tilde{\theta}), S_0\big) \leq 2b_n \Big\},
\end{equation}
Since the left-hand side occurs with probability tending to 1, by 
\Fref{thm:changepoint_recovery}, so will the right-hand side.
To show the desired containment, recall that, by the definition of
Hausdorff distance, 
\begin{equation}
\label{eq:haus_dist_reduced}
\Big\{d_H\big(S_F(\ttheta), S_0\big) \leq b_n \Big\} =
\Big\{ d\big(S_0 \,|\, S_F(\ttheta) \big) \leq b_n \Big\} \cap
\Big\{d\big(S_F(\ttheta) \,|\, S_0\big) \leq b_n\Big\}. 
\end{equation} 
Inspecting the first term on the right-hand side of
\eqref{eq:haus_dist_reduced}, we observe 
\begin{equation}
\label{eq:haus_dist_reduced_lhs}
\Big\{ d\big(S_0 \,|\, S_F(\ttheta) \big) \leq b_n \Big\} \subseteq
\Big\{ d\big(S_0 \,|\, S_F(\ttheta) \big) \leq 2b_n \Big\} \subseteq 
\Big\{ d\big(S_0 \,|\, S_R(\ttheta) \big) \leq 2b_n \Big\},
\end{equation}
where the last containment holds as \smash{$S_R(\ttheta) \subseteq
S_F(\ttheta)$}.  Inspecting the second term on the right-hand side of 
\eqref{eq:haus_dist_reduced}, we use \Fref{lem:local_max} which states
that for each $j\in \{b_n,\ldots, n-b_n\}$, 
there exists \smash{$i\in I_C(\tilde{\theta})$} such that
$|i-j|\leq b_n$ and \smash{$|F_i(\ttheta)| \geq
  |F_j(\ttheta)|$}. Using this, we see
\begin{align}
\Big\{d\big(S_F(\ttheta) \,|\, S_0\big) \leq b_n\Big\} &= \Big\{
\text{for all $\ell \in S_0$, there exists $j \in S_F(\ttheta)$ 
such that $|\ell-j|\leq b_n$} \Big\} \nonumber \\  
&\subseteq 
\Big\{ \text{for all $\ell \in S_0$, there exists $i \in I_C(\ttheta)$   
such that $|\ell-i|\leq 2b_n$} \Big\} \nonumber \\
& = \Big\{ d\big(S_R(\ttheta) \,|\, S_0\big) \leq 2b_n \Big\}.  
\label{eq:haus_dist_reduced_rhs}
\end{align}
Above, we have used \Fref{lem:local_max} for the containment in
the second line. Combining \eqref{eq:haus_dist_reduced}, 
\eqref{eq:haus_dist_reduced_lhs}, and
\eqref{eq:haus_dist_reduced_rhs}, we have established 
\eqref{eq:haus_dist_full_to_reduced}, as desired.
\end{proof}

The proof of \Fref{thm:changepoint_recovery_reduced} relied on the
following lemma.  Its proof can be found in Appendix
\ref{app:local_max}. 

\begin{lemma}
\label{lem:local_max}
Let \smash{$I_C(\tilde{\theta})$} be the candidate set defined in
\eqref{eq:candidates}.  For every location $j \in \{b_n, \ldots,
n-b_n\}$ where \smash{$|F_j(\ttheta)| > 0$}, there exists a location
\smash{$i \in I_C(\ttheta)$} such that $|i-j| \leq b_n$ and
\smash{$|F_i(\ttheta)| \geq |F_j(\ttheta)|$}. 
\end{lemma}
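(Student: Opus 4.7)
My plan is to view $g(i) := F_i(\ttheta)$ as a piecewise linear function on the integer grid $\{b_n,\ldots,n-b_n\}$, whose ``breakpoints'' must lie in the candidate set $I_C(\ttheta)$. Once this structure is in hand, the lemma follows from the convexity of the absolute value on each linear piece together with a bound on the length of the piece containing $j$.

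First I would compute the discrete second difference
\[
g(i+1) - 2 g(i) + g(i-1) = \frac{1}{b_n}\big[(\ttheta_{i+b_n+1} - \ttheta_{i+b_n}) - 2(\ttheta_{i+1} - \ttheta_i) + (\ttheta_{i-b_n+1} - \ttheta_{i-b_n})\big]
\]
and observe that it is nonzero only when at least one of $i$, $i+b_n$, $i-b_n$ lies in $S(\ttheta)$; such an $i$ is then in $I_C(\ttheta)$. Since the boundaries $b_n, n-b_n$ are included in $I_C(\ttheta)$ by convention, $g$ is piecewise linear with all breakpoints in $I_C(\ttheta)$. For a given $j$ with $|g(j)| > 0$, I would then define $l = \max\{i \in I_C(\ttheta) : i \leq j\}$ and $r = \min\{i \in I_C(\ttheta) : i \geq j\}$; both exist since $b_n, n-b_n \in I_C(\ttheta)$. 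Since $(l,r)$ contains no point of $I_C(\ttheta)$, it contains no breakpoint of $g$, so $g$ is linear on $[l,r]$, and convexity of the absolute value gives $|g(j)| \leq \max(|g(l)|, |g(r)|)$.

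The main obstacle is to show that $r - l \leq b_n$. Suppose for contradiction that $r - l > b_n$. The condition $I_C(\ttheta) \cap (l, r) = \emptyset$ translates into $S(\ttheta)$ being disjoint from each of the three open intervals $(l, r)$, $(l - b_n, r - b_n)$, $(l + b_n, r + b_n)$. A short bookkeeping calculation shows that when $r - l > b_n$ these three intervals overlap pairwise and together cover the single open interval $(l - b_n, r + b_n)$. On the other hand, $|F_j(\ttheta)| > 0$ forces $\ttheta$ to be nonconstant on $[j - b_n + 1, j + b_n]$, so there is some $k \in S(\ttheta) \cap [j - b_n + 1, j + b_n - 1]$; since $j \in [l, r]$, this $k$ lies in $(l - b_n, r + b_n)$, contradicting the disjointness. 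Hence $r - l \leq b_n$.

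Finally, with $r - l \leq b_n$ and $j \in [l, r]$, both $|l - j|$ and $|r - j|$ are at most $b_n$. Choosing $i \in \{l, r\}$ to be whichever attains $\max(|g(l)|, |g(r)|)$ yields a point $i \in I_C(\ttheta)$ with $|i - j| \leq b_n$ and $|F_i(\ttheta)| \geq |F_j(\ttheta)|$, as desired.
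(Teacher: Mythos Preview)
Your proof is correct and takes a genuinely different route from the paper's argument, though both hinge on the same underlying computation. The paper defines \emph{local maxima} of $|F_i(\ttheta)|$ (with a careful treatment of ties), shows via the first-difference identity $g(i)-g(i-1)=\frac{1}{b_n}(\ttheta_{i+b_n}-2\ttheta_i+\ttheta_{i-b_n})$ that every local maximum lies in $I_C(\ttheta)$, and then ascends from $j$ toward a local maximum, handling separately the case where that maximum is more than $b_n$ away. Your approach instead computes the \emph{second} difference of $g$, recognizes $g$ as piecewise linear with all breakpoints in $I_C(\ttheta)$, and uses the convexity of $|\cdot|$ on the linear piece $[l,r]$ containing $j$; the covering argument showing $r-l\le b_n$ is a clean replacement for the paper's somewhat informal final step. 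Your route avoids the delicate case analysis around strict versus non-strict inequalities at local maxima and the endpoint conventions \eqref{eq:localmax1}--\eqref{eq:localmax4}, at the modest cost of the extra bookkeeping that the three open intervals $(l-b_n,r-b_n)$, $(l,r)$, $(l+b_n,r+b_n)$ really do tile $(l-b_n,r+b_n)$ when $r-l>b_n$ (which you should state explicitly holds at the integer level, i.e., requires $r-l\ge b_n+1$).
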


\section{Implementation considerations and experiments} 
\label{sec:implementation}

We develop a data-driven procedure to determine the 
threshold level $\tau_n$ of the filter in \eqref{eq:filter}, used to
derive a post-processed set of changepoints \smash{$S_F(\ttheta)$}
from an estimate \smash{$\ttheta$}, as described in
\eqref{eq:filtered} in \Fref{thm:changepoint_recovery}.  We also
present a number of simulation results to support and complement the
theoretical developments in this paper. 

\paragraph{A data-driven procedure for choosing $\tau_n$.}

Let $\cA(\cdot)$ denote a fitting algorithm that, applied to data
$y$, outputs an estimate \smash{$\ttheta$} of $\theta_0$ (e.g., 
$\cA(y)$ could be the minimizer in \eqref{eq:fl}, so that its output is
the fused lasso estimate).  In \Fref{alg:choosing_tau}, 
we present a heuristic but intuitive method for choosing the threshold 
level $\tau_n$, based on (entrywise) permutations of the residual 
vector  \smash{$y-\ttheta$}.  
Aside from the choice of fitting
algorithm $\cA(\cdot)$, we must specify a number of permutations $B$
to be explored, a bandwidth $b_n$ for the filter in
\eqref{eq:filter}, and a quantile level $q \in (0,1)$. 
The intuition behind \Fref{alg:choosing_tau} is to set $\tau_n$ large
enough to suppress ``false positive'' changepoints $100 \cdot q \%$ of
the time (according to the permutations).  This is revisited later, in
the discussion of the simulation results.

Some example settings: we may choose
$\cA(\cdot)$ to be the fused lasso estimator,
where the tuning parameter $\lambda$ is selected to minimize 5-fold 
cross-validation (CV) error, $B=100$, and $q=0.95$.  The choice of
bandwidth $b_n$ is more subtle, and unfortunately, there is no
specific answer that works for all problems.\footnote{We note that in
some situations, problem-specific intuition can yield a 
reasonable choice of bandwidth $b_n$. Also, it should be possible
to extend \Fref{alg:choosing_tau} to choose both $\tau_n$ and $b_n$, 
but we do not pursue this, for simplicity.} 
But, the theory in the last section provides some
general guidance: e.g., for problems in which we believe 
there are a small number of changepoints (i.e., $s_0=O(1)$) of
reasonably large magnitude (i.e., $H_n=\Omega(1)$), 
\Fref{thm:changepoint_recovery} instructs us to choose a bandwidth
that grows faster than $\log{n}\log\log{n}$, so, choosing $b_n$ to  
scale as \smash{$\log^2{n}$} would suffice.  We will use this scaling,
as well as the above suggestions for $\cA(\cdot)$, $B$, and $q$ in all 
coming experiments, unless otherwise specified.

\begin{algorithm}[htb]
\caption{Permutation-based approach for choosing $\tau_n$} 
\label{alg:choosing_tau}

\begin{enumerate}
\item[0.] Input a fitting algorithm $\cA(\cdot)$, number of
  permutations $B$, bandwidth $b_n$, and quantile level $q \in
  (0,1)$. 

\item Compute \smash{$\ttheta=\cA(y)$}. Let   
  \smash{$\tilde{S}=S(\ttheta)$} denote the changepoints, and 
  \smash{$r = y-\ttheta$} the residuals.
\item For each $b=1,\ldots,B$, repeat the following steps:
\begin{enumerate}
\item Let \smash{$r^{(b)}$} be a randomly-chosen permutation of
  $r$, and define auxiliary data \smash{$y^{(b)} = \ttheta +
    r^{(b)}$}.   

\item Rerun the fitting algorithm on the auxiliary data to yield
  \smash{$\ttheta^{(b)} = \cA(y^{(b)})$}.  

\item Apply the filter in \eqref{eq:filter} to $\ttheta^{(b)}$ (with
  the specified bandwidth $b_n$), and record the largest magnitude
  \smash{$\hat{\tau}^{(b)}$} of the filter values at locations greater
  than $b_n$ away from \smash{$\tilde{S}$}.  Formally, 
  \[
  \hat{\tau}^{(b)} = \max_{\substack{i \in \{b_n,\ldots,n-b_n\} : \\ 
      d(\tilde{S}|\{i\}) > b_n}} \; \big|F_i(\ttheta^{(b)})\big|. 
  \]
\end{enumerate}

\item Output \smash{$\hat\tau_n$}, the level $q$ quantile of the
  collection \smash{$\hat\tau^{(b)}$, $b=1,\ldots,B$}.
\end{enumerate}
\end{algorithm}

After running \Fref{alg:choosing_tau} to compute \smash{$\hat\tau_n$},  
the idea is to proceed with the full filter \smash{$S_F(\ttheta)$} or the
reduced filter \smash{$S_R(\ttheta)$}, applied at the level
\smash{$\tau_n=\hat\tau_n$}, to the estimate  
\smash{$\ttheta$} computed on the original data $y$ at hand. In the
experiments that follow, we use the reduced filter, though
similar conclusions would hold with the full filter.

\paragraph{Simulation setup.}

In our experiments, we use the following simulation setup.
For a given $n$, the mean parameter $\theta_0 \in \R^n$ is  
defined to have $s_0 = 5$ equally-sized segments, with levels
0, 2, 4, 1, 4, from left to right. Data $y \in \R^n$ is
generated around $\theta_0$ using i.i.d.\ $N(0,4)$ noise.  Lastly, the
sample size $n$ is varied between $100$ and $10,000$, equally-spaced
on a log scale.   \Fref{fig:truth} shows example data sets with
$n=774$ and $n=10,000$.      

\begin{figure}[!ht]
\centering
\includegraphics[width=\textwidth]{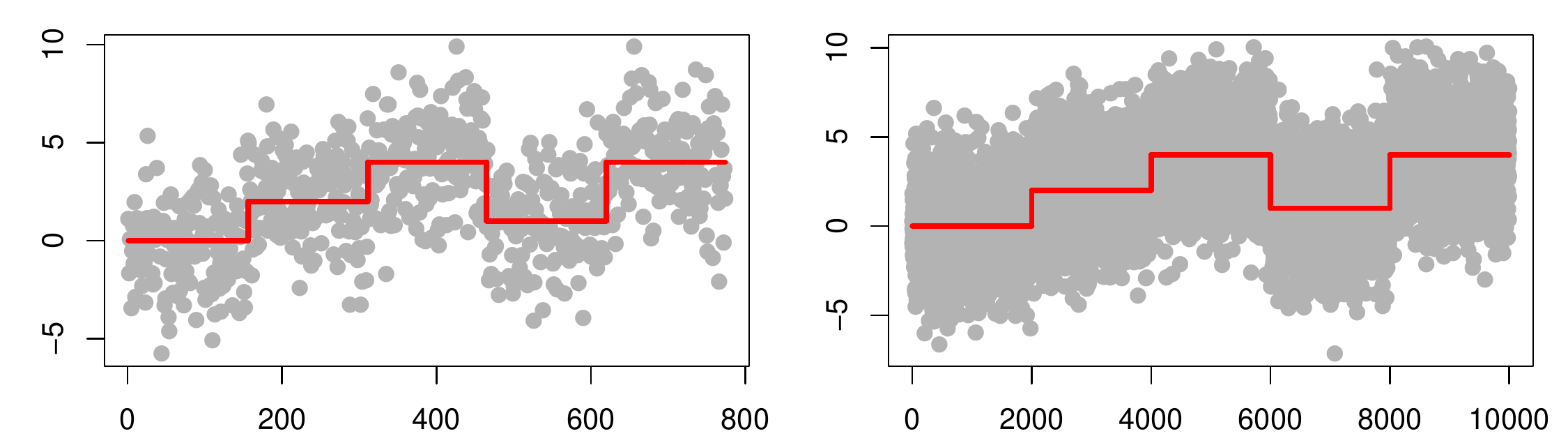} 
\caption{\it\small An example from our simulation setup for $n=774$ (left)
  and $n =10,000$ (right), where in each panel, the mean $\theta_0$ is
  plotted in red, and the data points in gray.}
\label{fig:truth}
\end{figure}

\paragraph{Evaluation of the filter.}

We demonstrate that the filter in \eqref{eq:filter},  
with \smash{$b_n = \lfloor 0.25 \log^2 n \rfloor$}, can be 
effective at reducing the Hausdorff distance between 
estimated and true changepoint sets. We first 
illustrate the use of the filter in a single
data example with $n=774$, in \Fref{fig:filter}.  
As we can see, the fused lasso originally places a spurious jump  
around location 250, but this jump is eliminated when we apply the  
filter, provided that we set the threshold to be (say) $\tau_n=0.5$. 

\begin{figure}[!ht]
\centering
\includegraphics[width=\textwidth]{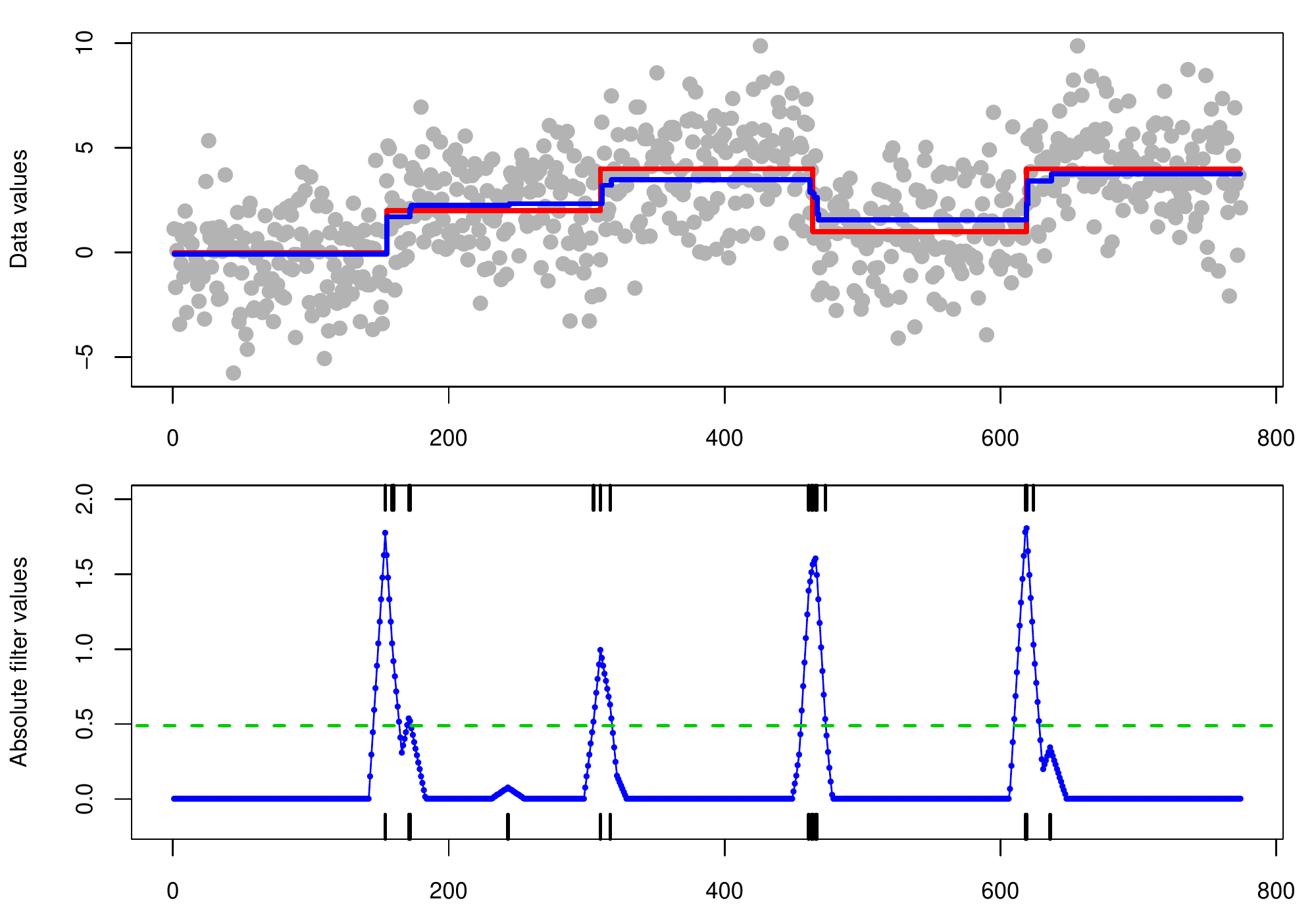}
\caption{\it\small In the top plot, an example with $n=774$ is shown
  from our simulation setup, where the data $y$ is drawn in gray,  
  the mean $\theta_0$ in red, and the fused lasso estimate
  \smash{$\htheta$} in 
  blue.  In the bottom plot, the filter values $F_i(\htheta)$,
  $i=1,\ldots,n$ are drawn in blue, and the threshold $\tau_n$ is
  drawn as a horizontal green line.
  Changepoints before and after filtering are marked by short
  black lines along the bottom and top x-axes, respectively.}
\label{fig:filter}
\end{figure}

\Fref{fig:haus} now reports the results from applying the filter in
problems of sizes between $n=100$ and $n=10,000$, using 50 trials 
for each $n$.  We consider three different sets of changepoint
estimates: \smash{$\hS=S(\htheta)$}, the original changepoints from
fused lasso estimate \smash{$\htheta$} tuned with 5-fold CV tuning;
\smash{$S_R(\htheta)$}, the
changepoints after applying the reduced filter as described in
\Fref{thm:changepoint_recovery_reduced} to \smash{$\htheta$}, with 
$\tau_n$ chosen by \Fref{alg:choosing_tau}; and
\smash{$S_O(\htheta)$}, an oracle set of changepoints given by trying 
a wide range of $\tau_n$ values and choosing the value that 
minimizes the Hausdorff distance after filtering (this assumes
knowledge of $S_0$, and is infeasible in practice).  These
are labeled as ``original'', ``data-driven'', and ``oracle'' in the 
figure, respectively. 
As we can see from the left and middle panels, the Hausdorff distance
achieved by the original changepoint set grows nearly linearly with
$n$, but after applying the reduced filter, the Hausdorff distance
becomes very small, provided that $n$ is larger than 1000 or so.
Empirically, the Hausdorff distance associated with the filtered set
appears to grow very slowly with $n$, nearly constant (slower than the
the $\log n \log \log n$ rate guaranteed by
\Fref{cor:changepoint_recovery_strong}). The right panel shows that
our data-driven choices of $\tau_n$ are not
substantially different from those made by the oracle.

\begin{figure}[!ht]
\centering
\includegraphics[width=\textwidth]{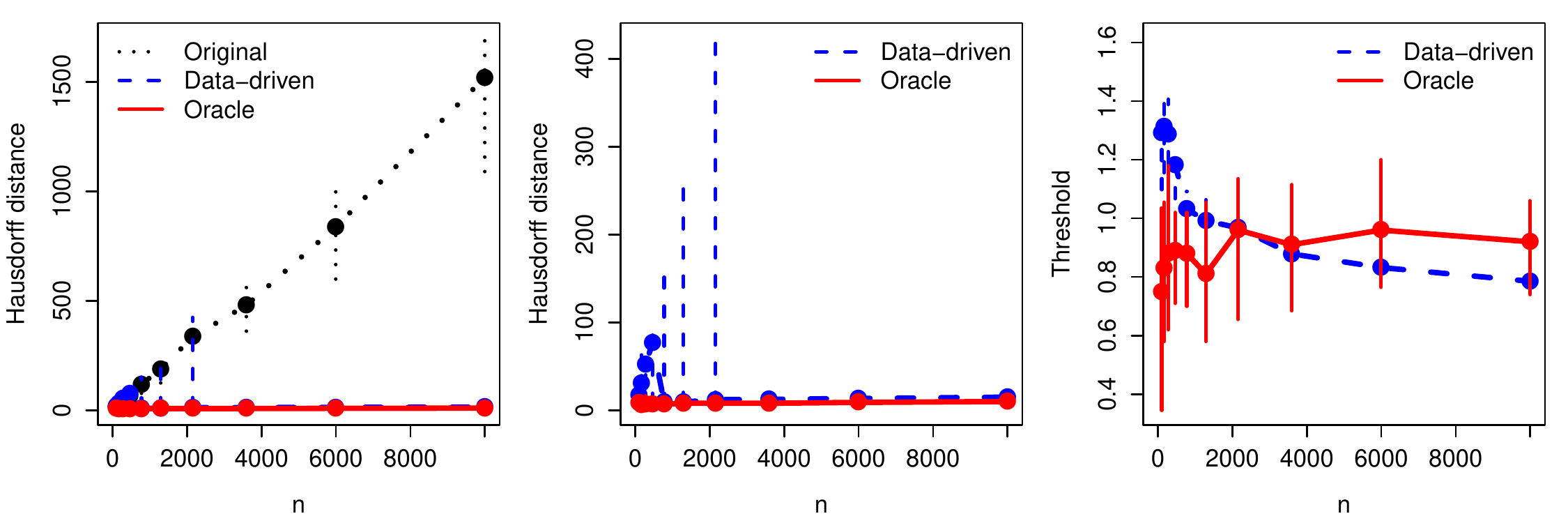}
\caption{\it\small In the left panel, the Hausdorff distances between 
  original changepoints, filtered changepoints with a data-driven
  threshold, and filtered changepoints with an oracle threshold, are
  plotted (in black, blue, and red, respectively).  The results are 
  aggregated across 50 trial runs for each 
  sample size $n$; the solid dots display the median values, and the
  vertical segments display the interquartile ranges (25th
  to 75th percentiles).  The 
  middle panel zooms in on the Hausdorff distances for the
  data-driven and oracle filtering procedures, and the right panel
  displays the choices of $\tau_n$ for these procedures.}
\label{fig:haus}
\end{figure}

\paragraph{Screening distances, false positives.} 

\Fref{fig:thres} examines the outcomes from
varying the filter threshold $\tau_n$ in between 0
and 2, and then applying the reduced filter to produce
\smash{$S_R(\htheta)$}. The results are aggregated over 500 trials
when $n=774$ (i.e., 500 data instances drawn from the simulation
setup), and 
the screening distance \smash{$d(S_R(\htheta) \, | \,  
  S_0)$} and ``precision distance'' \smash{$d(S_0 \, |
  \, S_R(\htheta))$} are plotted with $\tau_n$.  The former
increases with $\tau_n$, and the latter decreases; recall, the
Hausdorff distance is the maximum of the two.  We see that
threshold levels from 0.5 to 1 yield a small Hausdorff distance.    

\begin{figure}[!ht]
\centering
\includegraphics[width=0.6\textwidth]{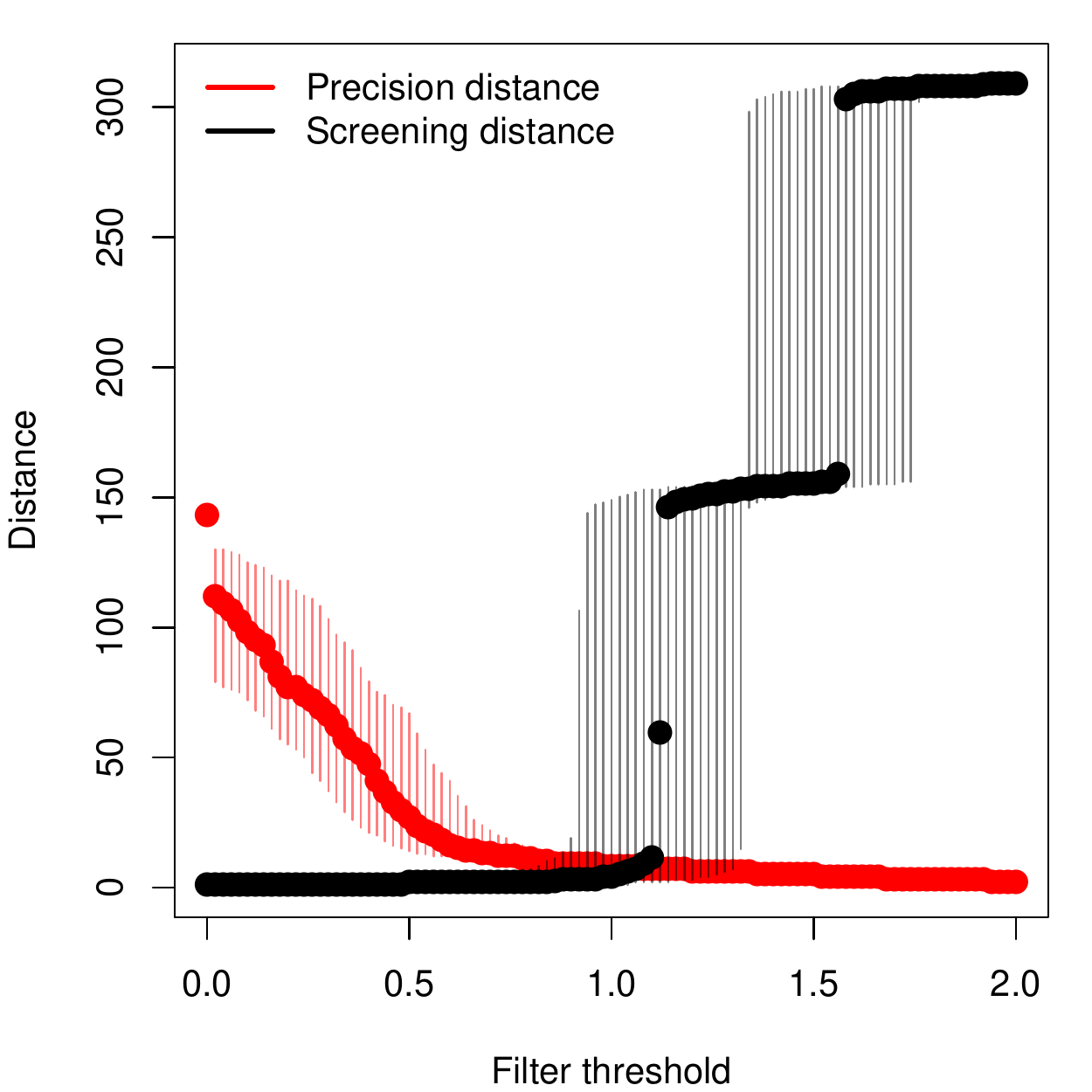}
\caption{\it\small The screening distance \smash{$d(S_R(\htheta) \,|
    \, S_0)$} (in black) and the precision distance \smash{$d(S_0 \,|
    \, S_R(\htheta))$} (in red) are shown as functions of 
  the threshold $\tau_n$ used for the filtered set.  These
  were aggregated over 500 trials, in which $n=774$; the dots display 
  the median values, and the vertical segments are drawn from the 25th    
  to 75th percentiles.  We can see a substantial jump in the
  screening distance once a bit after $\tau_n = 1$ and again after
  $\tau_n = 1.5$, where the median value is close to one of the
  quartiles.  This is due to \smash{$S_R(\htheta)$} suppressing all of
  the estimated changepoints near a particular true changepoint,
  at these critical values of $\tau_n$.}
\label{fig:thres}
\end{figure}

The left panel of \Fref{fig:roc} shows the same results, but with
the screening distance on the x-axis, and the precision distance on the 
y-axis. The red dot marks the screening distance and
precision distance achieved by the
data-driven rule from \Fref{alg:choosing_tau}, using $B=150$ 
permutations. This lies basically at the ``elbow'' of the curve, 
just as we would  
desire. The middle panel of the figure plots the proportion of false
positive detections (out of the 500 repetitions total) on the x-axis,
versus 
the proportion of true positive detections on the y-axis.  Here, note,
we define a false positive detection to be the event that
{\it any} estimated changepoint is more than $b_n$ away from all
true changepoints, or simply, the event that \smash{$d(S_0 \,|\,
  S_R(\htheta)) > b_n$}, and a true positive detection to be the event
that {\it all} true changepoints have estimated changepoints 
at most $b_n$ away, or simply, \smash{$d( S_R(\htheta) \,|\, S_0)
  \leq b_n$}. Therefore, to be perfectly concrete, the x-axis and
y-axis are displaying a certain type of false positive and true  
positive rates (FPR and TPR), defined as 
\begin{equation*}
\mathrm{FPR} = \frac{\text{\# trials in which  
    $d(S_0 \,|\, S_R(\htheta))>b_n$}}{\text{\# of 
    trials}} 
\quad \text{and} \quad
\mathrm{TPR} = \frac{\text{\# trials in which  
    $d(S_R(\htheta) \,|\, S_0 ) \leq b_n$}}{\text{\# of  
    trials}}.
\end{equation*}
The red dot again marks the FPR and TPR achieved by the data-driven
rule in \Fref{alg:choosing_tau} for choosing the threshold, about
0.26 and 0.7, respectively. We might expect here, having set $q=0.95$
in \Fref{alg:choosing_tau}, to see a FPR close to 0.05 (because the
choice of threshold in \Fref{alg:choosing_tau} precisely controls the
FPR at 0.05 over the permutations encountered in the procedure).
However, this is not the case on in our simulation, and the actual FPR
is higher. This phenomenon is not specific to the quantile choice of  
$q=0.95$, as shown in the right panel of \Fref{fig:roc}.  For a
varying quantile level $q$ in between 0 and 1, we ran
\Fref{alg:choosing_tau}, used the corresponding threshold for our
filter, and measured the FPR achieved by the filtered changepoint
set.  As we can see, the actual FPR is generally higher than $1-q$.  

\begin{figure}[!ht]
\centering
\includegraphics[width=\textwidth]{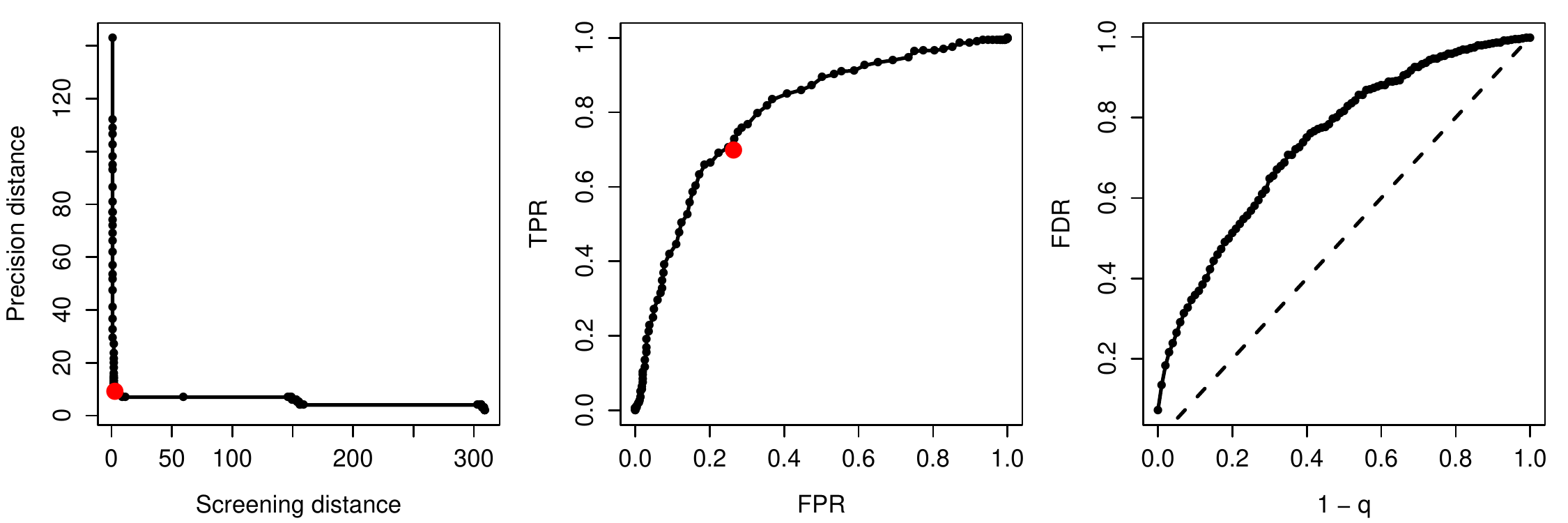}
\caption{\it\small The left panel plots the precision distance
  \smash{$d(S_0 \,| \, S_R(\htheta))$} and the
  screening distance \smash{$d(S_R(\htheta) \,| \, S_0)$}, as the
  threshold $\tau_n$ is varied from 0 to 2.  Shown are
  the median distances over 500 trials for a problem with $n=774$.
  The red dot marks the precision and screening distances achieved by
  the data-driven threshold level chosen by \Fref{alg:choosing_tau}
  with $q=0.95$.  The middle panel shows the same, but with true
  positive rate (TPR) against false positive rate (FPR).  The right
  panel shows the achieved FPR against $1-q$, as the input quantile
  level $q$ is varied in \Fref{alg:choosing_tau}.}
\label{fig:roc}
\end{figure}

\paragraph{Fused lasso fast $\ell_2$ error rate,
  under strong sparsity.}  

We finish by examining the (squared) $\ell_2$ error 
\smash{$\|\htheta - \theta_0\|_n^2$} as it scales with $n$, when the 
fused lasso estimate \smash{$\htheta$} in \eqref{eq:fl} is tuned
appropriately.  
For different sample sizes ranging from $n=100$ to $n=10,000$, we 
generated 50 example data sets from the same setup described
previously, and on 
each data set, computed the fused lasso estimate \smash{$\htheta$}
with 5-fold CV to select the tuning parameter $\lambda$.  
\Fref{fig:lambdamse} reports the median value of $\lambda$, and the
median achieved $\ell_2$ error rate
\smash{$\|\htheta-\theta_0\|_n^2$}, over the 50 trials, as functions
of $n$.  The results support the theoretical conclusion in
\Fref{thm:strong_james}, as the achieved $\ell_2$ error rate scales
at about the rate $(\log{n}\log\log{n})/n$. Also, since $s_0 = O(1)$,
the results support the theoretical result that  $\lambda$ scales
 with $\sqrt{n}$.

\begin{figure}[!h]
\centering
\includegraphics[width=\textwidth]{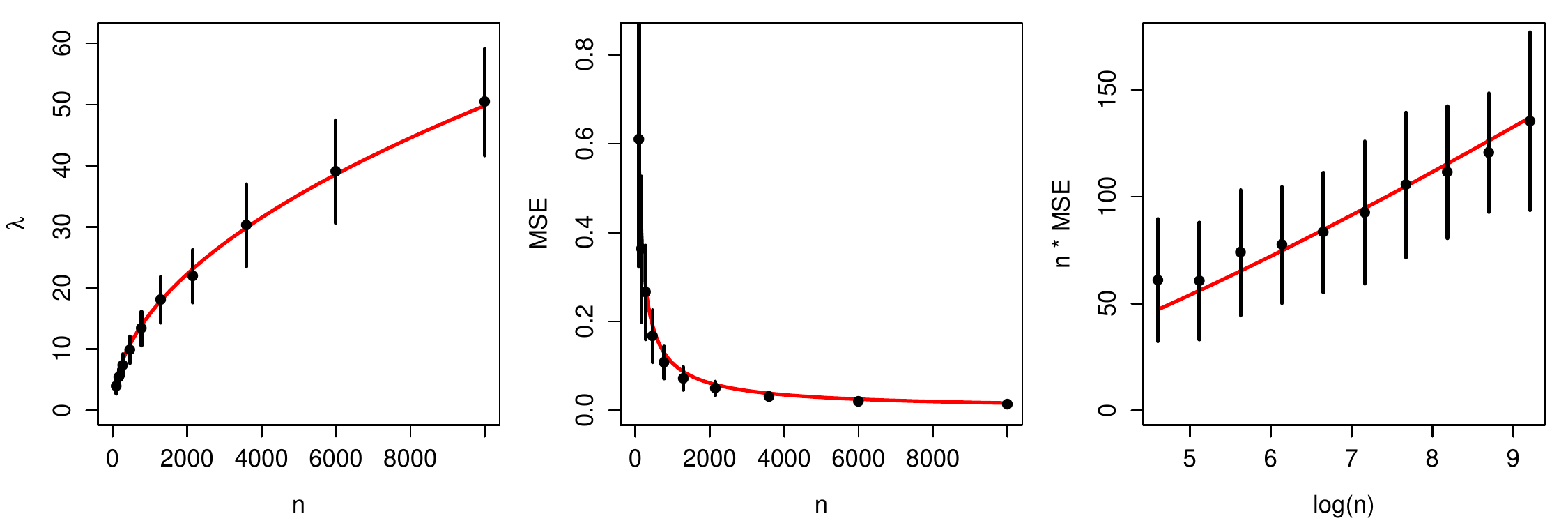}
\caption{\small\it The left panel shows the value of $\lambda$
  chosen to minimize 5-fold CV error over the fused lasso path,
  aggregated over repetitions in our simulation setup, as the sample 
  size $n$ varies.  This scales approximately as 
  \smash{$\sqrt{n}$}, which is drawn as  
  a red curve (with a best-fitting constant).  The middle panel shows
  the corresponding (squared) $\ell_2$ estimation error  
  \smash{$\|\htheta-\theta_0\|_n^2$}, again aggregated over repetitions,  
  as $n$ varies.  The scaling is 
  about \smash{$(\log{n}\log\log{n})/n$} (red curve).
  The right panel plots the median achieved values of \smash{$n   
    \|\htheta-\theta_0\|_n^2$} against $\log{n}$; this looks close to
  linear (red line), which provides empirical  
  support to the claim that the fused lasso error rate is indeed
  \smash{$(\log{n}\log\log{n})/n$} instead of \smash{$\log^2{n}/n$} 
  (as this would have appeared as a quadratic trend in the right
  panel). In each panel, the vertical bars denote $\pm 1$ standard 
  deviations.}   
\label{fig:lambdamse}
\end{figure}

\section{Extensions} 
\label{sec:extensions}

We study screening properties that are implied by
$\ell_2$ error properties in two related problems: first, piecewise 
linear segmentation, and then, segmentation on graphs.

\subsection{Piecewise linear segmentation}
\label{sec:linear}

We now consider data from a model as in \eqref{eq:model} but where
\smash{$\theta_{0,i}$}, $i=1,\ldots,n$ is a piecewise linear (rather 
than a piecewise constant) sequence.  The main estimator of interest 
is {\it linear trend filtering}
\citep{steidl2006splines,kim2009trend,tibshirani2014adaptive},
which can be seen as an extension of the fused lasso that penalizes
second-order (rather than first-order) differences:
\begin{equation}
\label{eq:tf}
\htheta = \argmin_{\theta \in \R^n} \; 
\half \sum_{i=1}^n (y_i-\theta_i)^2 + 
\lambda \sum_{i=1}^{n-2} |\theta_i-2\theta_{i+1}+\theta_{i+2}|,  
\end{equation}
for a tuning parameter $\lambda \geq 0$.  Several other estimators are
available in the piecewise linear segmentation problem, but given its
ties to the fused lasso (and our focus on the fused lasso thus far),
we focus on linear trend filtering in particular.  

In terms of detection, we are 
now interested in the locations of nonzero second-order
differences, i.e., the ``knots'', which mark the changes in slope
across the entries of a parameter $\theta \in \R^n$:
\begin{equation*}
S_2(\theta) =  \bigg\{ i \in \{2,\ldots,n-1\} : \theta_i \not= 
\frac{\theta_{i-1}+\theta_{i+1}}{2} \bigg\}. 
\end{equation*}
We use the abbreviations $S_{0,2}=S_2(\theta_0)$ and
\smash{$\hS_2=S_2(\htheta)$}. We again write $S_{0,2} = 
\{t_1,\ldots,t_{s_0}\}$, where \smash{$2 \leq t_1 < \ldots < t_{s_0} <
  n$} and $s_0=|S_{0,2}|$, and for convenience $t_0=0$,
$t_{s_0+1}=n$.  We also carry forward analogous definitions for
$W_n,H_n$: 
\begin{equation}
\label{eq:WnHn_linear}
W_n = \min_{i=0,1\ldots,s_0} \, (t_{i+1} - t_i)
\quad \text{and} \quad
H_n= \min_{i \in S_{0,2}} \; |\theta_{0,i-1} - 2\theta_{0,i} +
\theta_{0,i+1}|. 
\end{equation}
Lastly, we define the discrete second-order total variation operator,
acting on a vector $x \in \R^n$, by
\begin{equation*}
\TV_2(x) = \sum_{i=2}^{n-1} |x_{i-1} - 2x_i + x_{i+1}|.
\end{equation*}

The following describes the $\ell_2$ estimation error of 
linear trend filtering, under weak sparsity.

\begin{theorem}[\textbf{Trend filtering error rate, weak sparsity
  setting, Theorem 10 of \citealt{mammen1997locally}}]     
\label{thm:weak_sparsity_linear}
Assume the data model in \eqref{eq:model}, with errors
$\epsilon_i$, $i=1,\ldots,n$ i.i.d.\ from a sub-Gaussian
distribution as in \eqref{eq:subgauss}.  Also assume that 
\smash{$\TV_2(\theta_0) \leq C_n$}, for a sequence
$C_n$. Then for 
\smash{$\lambda=\Theta(n^{1/5}C_n^{-3/5})$}, the linear trend
filtering estimate \smash{$\htheta$} in \eqref{eq:tf} satisfies 
\begin{equation*}
\|\htheta - \theta_0\|_n^2 = O_\P (n^{-4/5} C_n^{2/5}). 
\end{equation*}
\end{theorem}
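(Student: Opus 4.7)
The plan is to follow the standard template for establishing $\ell_2$ error rates for penalized least squares estimators with a total variation penalty, now adapted to the second-order operator. Writing $D^{(2)} \in \R^{(n-2)\times n}$ for the second-order difference operator so that $\TV_2(x) = \|D^{(2)} x\|_1$, the optimality of $\htheta$ in \eqref{eq:tf} yields the basic inequality
\begin{equation*}
\tfrac{1}{2} \|\htheta - \theta_0\|_2^2 + \lambda \TV_2(\htheta)
\leq \epsilon^\top (\htheta - \theta_0) + \lambda \TV_2(\theta_0)
\leq \epsilon^\top (\htheta - \theta_0) + \lambda C_n,
\end{equation*}
using the assumption $\TV_2(\theta_0) \leq C_n$. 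This reduces the problem to controlling the empirical process $\epsilon^\top(\htheta - \theta_0)$ over the random, data-dependent vector $\htheta - \theta_0$, noting that $\htheta$ itself satisfies $\TV_2(\htheta) \leq C_n + \tfrac{1}{\lambda} \epsilon^\top(\htheta - \theta_0)$.

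The central ingredient is a metric entropy bound for the class
\begin{equation*}
\cF(t, C) = \bigl\{ \theta \in \R^n : \|\theta\|_n \leq t, \; \TV_2(\theta) \leq C \bigr\}.
\end{equation*}
A classical Birman--Solomjak style estimate (as used by \citet{mammen1997locally} and adapted in the trend filtering analyses cited in \Fref{rem:others_error}) shows that the $\delta$-covering number of $\cF(t,C)$ in the $\|\cdot\|_n$ norm satisfies $\log N(\delta, \cF(t,C), \|\cdot\|_n) \lesssim (C/\delta)^{1/2}$, reflecting the fact that piecewise linear functions with bounded second-order variation form a class of smoothness order $2$. Plugging this into a chaining (Dudley-type) bound for sub-Gaussian empirical processes and peeling over shells of growing $\|\htheta - \theta_0\|_n$ yields, with high probability, a bound of the form
\begin{equation*}
\bigl| \epsilon^\top (\htheta - \theta_0) \bigr|
\lesssim \sqrt{n} \bigl(\TV_2(\htheta) + \TV_2(\theta_0)\bigr)^{1/4} \|\htheta - \theta_0\|_n^{3/4} + (\text{lower order}).
\end{equation*}

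Combining this empirical process bound with the basic inequality and using $\TV_2(\htheta) \leq C_n + (\text{small})$ (from the initial rearrangement, bootstrapped through a second pass if needed), one obtains an inequality of the form
\begin{equation*}
n \|\htheta - \theta_0\|_n^2 \lesssim \sqrt{n}\, C_n^{1/4} \|\htheta - \theta_0\|_n^{3/4} n^{3/8} + \lambda C_n,
\end{equation*}
and balancing the two terms on the right by solving for $\|\htheta - \theta_0\|_n$ produces the rate $\|\htheta - \theta_0\|_n^2 = O_\P(n^{-4/5} C_n^{2/5})$, with the balancing choice $\lambda \asymp n^{1/5} C_n^{-3/5}$ making both contributions of the same order. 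The main obstacle is the chaining/entropy step: one needs either to invoke the Birman--Solomjak entropy bound for piecewise-polynomial approximation of functions with bounded $\TV_2$ and then control it under the empirical $\|\cdot\|_n$ norm, or to proceed via the equivalent basis-expansion / dual norm formulation specific to trend filtering. Handling the data-dependent second-order TV radius of $\htheta$ (via a peeling argument or a slicing over dyadic levels of $\TV_2(\htheta)$) is the most delicate bookkeeping, but is standard once the entropy bound is in hand.
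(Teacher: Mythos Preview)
The paper does not actually prove this theorem; it is quoted from \citet{mammen1997locally} (Theorem~10 there), with a pointer to \citet{tibshirani2014adaptive} for the translation into the present discrete notation. Your outline---basic inequality, the Birman--Solomjak type entropy bound $\log N(\delta,\cF(t,C),\|\cdot\|_n) \lesssim (C/\delta)^{1/2}$ for the second-order TV class, Dudley chaining, and peeling over the data-dependent radius---is exactly the route taken in those references, so there is nothing to compare against in the paper itself.

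One algebraic slip worth fixing: the factor $n^{3/8}$ in your displayed combined inequality is spurious. From the chaining bound $|\epsilon^\top(\htheta-\theta_0)| \lesssim \sqrt{n}\,C_n^{1/4}\,\|\htheta-\theta_0\|_n^{3/4}$ and the basic inequality $n\|\htheta-\theta_0\|_n^2 \lesssim \epsilon^\top(\htheta-\theta_0) + \lambda C_n$, writing $t=\|\htheta-\theta_0\|_n$ you get
\begin{equation*}
n t^2 \lesssim \sqrt{n}\, C_n^{1/4}\, t^{3/4} + \lambda C_n,
\end{equation*}
and balancing $n t^2 \asymp \sqrt{n}\,C_n^{1/4} t^{3/4}$ gives $t^{5/4} \asymp n^{-1/2} C_n^{1/4}$, i.e.\ $t^2 \asymp n^{-4/5} C_n^{2/5}$, with $\lambda \asymp n^{1/5} C_n^{-3/5}$ making the penalty term of the same order. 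With the extra $n^{3/8}$ present the exponents would not close, so that line should simply read $\sqrt{n}\,C_n^{1/4}\|\htheta-\theta_0\|_n^{3/4}$. Your stated final rate and tuning are correct.
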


\begin{remark}[\textbf{Consistency, optimality}]
The lemma shows that linear trend filtering is consistent when
$C_n=o(n^2)$. When $C_n=O(1)$, its error rate is 
$n^{-4/5}$, which is in fact minimax optimal as $\theta_0$  
varies over the class of signals having bounded second-order total 
variation, i.e., $\theta_0 \in \{\theta \in \R^n: \TV_2(\theta) \leq 
C\}$ for a constant $C>0$ \citep{donoho1998minimax}.  As in the fused
lasso case, we refer the reader to \citet{tibshirani2014adaptive} for
explanations of the above theorem and this minimax result, in
notation that is more consistent with that of the current paper.
\end{remark}

\begin{remark}[\textbf{Strong sparsity, higher polynomial degrees}]
Results for linear trend filtering in the strong sparsity setting,
i.e., one in which $s_0$ is assumed to be bounded (so that we are
estimating a piecewise linear function with few knots) are not
currently available, to the best of our knowledge.  However, we
suspect that the achieved error rate here will be close to the
``parametric'' $1/n$ rate, as in Theorems \ref{thm:strong_sparsity}  
and \ref{thm:strong_james}, on the fused lasso.  It is also worth
noting that the extension of 
trend filtering to fit piecewise polynomials of higher degrees (i.e.,
higher than 1, as in the current piecewise linear case) is covered
in \citet{tibshirani2014adaptive}, where $\ell_2$ estimation error
rates (under weak sparsity) are also derived.  For simplicity, we do
not consider the general piecewise polynomial setting in our study of  
approximate screening, below, though such an extension should be
possible. 
\end{remark}

Now we give our generic approximate screening result, analogous to
that in \Fref{thm:changepoint_screening}.

\begin{theorem}[\textbf{Generic screening result, piecewise linear
  segmentation}] 
\label{thm:knot_screening}
Let $\theta_0 \in \R^n$ be a piecewise linear vector,
and \smash{$\ttheta \in \R^n$} be an estimator satisfying the     
error bound \smash{$\|\ttheta-\theta_0\|_n^2 = 
  O_\P(R_n)$}. Assume that \smash{$n^{1/3}R_n^{1/3}H_n^{-2/3} =
  o(W_n)$}, where, recall, $W_n,H_n$ are as defined in
\eqref{eq:WnHn_linear}. Then    
\begin{equation*}
d\big(S_2(\ttheta)\,|\,S_{0,2} \big) = O_\P 
\bigg( \frac{n^{1/3}R_n^{1/3}}{H_n^{2/3}} \bigg).  
\end{equation*}
\end{theorem}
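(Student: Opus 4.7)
My plan is to mirror the contrapositive argument used in the proof of Theorem \ref{thm:changepoint_screening}, with the constant-vs-jump geometry replaced by the affine-vs-knot geometry appropriate for second differences. Fix $\epsilon>0$; by the $\ell_2$ rate assumption there exist constants $M,N_1>0$ with $\P(\|\ttheta-\theta_0\|_n^2 > MR_n) \leq \epsilon$ for all $n \geq N_1$, and by the assumption $n^{1/3}R_n^{1/3}/H_n^{2/3}=o(W_n)$ one can take $r_n = \lfloor C' n^{1/3}R_n^{1/3}/H_n^{2/3}\rfloor$, with $C'$ to be chosen below in terms of $M$, and have $2r_n \leq W_n$ for $n$ sufficiently large. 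The goal is then to show $\{d(S_2(\ttheta)\,|\,S_{0,2}) > r_n\} \subseteq \{\|\ttheta-\theta_0\|_n^2 > MR_n\}$, which yields the stated rate.

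Suppose $d(S_2(\ttheta)\,|\,S_{0,2}) > r_n$, so there is a true knot $t_i \in S_{0,2}$ with no element of $S_2(\ttheta)$ within $r_n$ of it. Then $\ttheta_{j-1} - 2\ttheta_j + \ttheta_{j+1} = 0$ for every index $j$ with $|j-t_i|\leq r_n-1$, which forces $\ttheta$ to be an affine function of the index on the window $J = \{t_i-r_n,\ldots,t_i+r_n\}$. On the other hand, $2r_n \leq W_n$ guarantees that the only element of $S_{0,2}$ in the interior of $J$ is $t_i$, so $\theta_0$ is affine on each of $\{t_i-r_n,\ldots,t_i\}$ and $\{t_i,\ldots,t_i+r_n\}$, with a slope jump at $t_i$ whose magnitude equals $|\theta_{0,t_i-1} - 2\theta_{0,t_i} + \theta_{0,t_i+1}| \geq H_n$.

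The crux is the deterministic lower bound: for any affine $L$ on $J$ and any piecewise affine $\theta_0$ on $J$ with a single slope jump at $t_i$ of magnitude at least $H_n$,
\begin{equation*}
\sum_{j \in J} \big(L(j) - \theta_{0,j}\big)^2 \;\geq\; c_0\, r_n^3 H_n^2,
\end{equation*}
for an absolute constant $c_0>0$. To prove this, I would write the residual $e = L - \theta_0$ in the centered coordinate $u = j - t_i$ as $e(u) = c + \alpha u$ for $u \leq 0$ and $e(u) = c + (\alpha+h)u$ for $u \geq 0$, with $|h|\geq H_n$. The sum $\sum_{u=-r_n}^{r_n} e(u)^2$ is a positive-definite quadratic in the two free parameters $(c,\alpha)$; a direct computation (whose continuous proxy has explicit minimizer $c^\star = -r_n h/4$ and $\alpha^\star = -h/2$) yields minimum value $r_n^3 h^2/24 + O(r_n^2 h^2)$, and hence the claimed lower bound of order $r_n^3 H_n^2$ for $r_n$ moderately large. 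Dividing by $n$ and choosing $C'$ so that $c_0 (C')^3 > M$ gives $\|\ttheta - \theta_0\|_n^2 \geq c_0 r_n^3 H_n^2 / n > MR_n$, closing the contrapositive.

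The main obstacle is the sharp lower bound just described, and the shift in exponent from Theorem \ref{thm:changepoint_screening} has a clean geometric explanation: a best constant fit to a signal with jump $H_n$ over a window of length $r_n$ incurs squared $\ell_2$ residual of order $r_n H_n^2$, whereas an affine fit has an extra slope parameter that absorbs linear trends, so the best it can do against a slope jump $H_n$ over length $r_n$ is a residual of order $r_n^{3/2} H_n$, with squared $\ell_2$ norm of order $r_n^3 H_n^2$. Setting $r_n^3 H_n^2/n \asymp R_n$ then yields exactly $r_n \asymp n^{1/3}R_n^{1/3}/H_n^{2/3}$, matching the stated bound. Minor care will also be needed to verify that the $O(r_n^2 h^2)$ discretization remainder is negligible once $r_n$ is large (automatic for $n$ large in the regime $r_n \to \infty$; in the bounded-$r_n$ regime the conclusion reduces to exact screening and is handled trivially).
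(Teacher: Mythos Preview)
Your proposal is correct and follows essentially the same approach as the paper: a contrapositive argument mirroring Theorem \ref{thm:changepoint_screening}, with the key deterministic lower bound $\sum_{j\in J}(L(j)-\theta_{0,j})^2 \geq c_0\, r_n^3 H_n^2$ for the best affine fit to a function with a slope jump, which the paper isolates as Lemma \ref{lem:lower_bd_linear}. Your parametrization of the residual and the continuous-proxy minimization (yielding $\alpha^\star=-h/2$, $c^\star=-r_nh/4$, and leading constant $1/24$) agree with the paper's computation, and the choice of $r_n$ is identical.
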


\begin{proof}
The proof follows that of \Fref{thm:changepoint_screening} closely,
but differs in the lower bound asserted in \eqref{eq:lower_bd}.
As before, given any $\epsilon>0$, $C>0$, as know that for some
integer $N_1>0$ and all $n \geq N_1$,
\begin{equation*}
\P\bigg( \|\ttheta-\theta_0\|_n^2 > \frac{C^3}{4} R_n\bigg) \leq  
\epsilon.
\end{equation*}
We also know that for some integer $N_2>0$ and $n \geq n_2$, it holds
that \smash{$Cn^{1/3}R_n^{1/3}H_n^{-2/3} \leq W_n$}. Let 
$N=\max\{N_1,N_2\}$, take $n \geq N$,  
and let \smash{$r_n = \lfloor
  Cn^{1/3}R_n^{1/3}H_n^{-2/3}\rfloor$}. Suppose that    
\smash{$d(S_2(\ttheta) \,|\, S_{0,2}) > r_n$}.  Then there is a  
knot $t_i \in S_{0,2}$ such that there are no knots in 
\smash{$\ttheta$} within $r_n$ of $t_i$, which means that  
\smash{$\ttheta_j$} displays a linear trend over the entire
segment $j \in \{t_i- r_n,\ldots,t_i+r_n\}$.  Hence
\begin{equation} 
\label{eq:lower_bd_linear}
\frac{1}{n}\sum_{j=t_i-r_n}^{t_i+r_n}
\big(\ttheta_j-\theta_{0,j}\big)^2 
\geq \frac{13r_n^3 ((\theta_{0,t_i+1}-\theta_{0,t_i}) -  
(\theta_{0,t_i}-\theta_{0,t_i-1}))^2}{24n} 
\geq \frac{13r_n^3 H_n^2}{24n} > \frac{C^3}{4} R_n.
\end{equation}
Here, the first inequality holds due to \Fref{lem:lower_bd_linear},
the second holds by definition of $H_n$, and the third by definition
of $r_n$.  We see that
\smash{$d(S_2(\ttheta) \,|\, S_{0,2}) > r_n$} implies the  
estimation error exceeds $(C^3/4)R_n$, an event that we know occurs
with probability at most $\epsilon$, completing the proof.  
\end{proof}

The proof of \Fref{thm:knot_screening} relied on the next lemma, to 
construct the key lower bound \eqref{eq:lower_bd_linear}.  The
lemma characterizes how well a piecewise linear function can be
approximated by a linear one, and is proved in Appendix
\ref{app:lower_bd_linear}. 

\begin{lemma}
\label{lem:lower_bd_linear}
Let $f(x)$ be a piecewise linear function, defined over
$x=-r,\ldots,r$, by  
\begin{equation*}
f(x) = \begin{cases}
a_1x & \text{for $x \geq 0$} \\
a_2x & \text{for $x < 0$}
\end{cases}.
\end{equation*}
Let \smash{$\tilde{a}x+\tilde{b}$} be the optimal linear function for
estimating $f(x)$, according to squared error loss, i.e.,
\begin{equation*}
(\tilde{a},\tilde{b}) = \argmin_{a,b \in \R} \; \sum_{x=-r}^r 
\big( f(x) - ax-b\big)^2.
\end{equation*}
Then 
\begin{equation*}
\sum_{x=-r}^r \big( f(x) - \tilde{a}x-\tilde{b}\big)^2 \geq  
(a_2 - a_1)^2 \frac{13r^3}{24}.
\end{equation*}
\end{lemma}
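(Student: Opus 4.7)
The plan is to reduce the two-parameter least-squares problem over $(\tilde a,\tilde b)$ to a closed-form minimum by exploiting the symmetry of the grid $\{-r,\dots,r\}$ about the origin. The crucial structural fact is that $\sum_{x=-r}^{r} x = 0$ and $\sum_{x=-r}^{r} x\,|x| = 0$, so the odd function $x$ is orthogonal (with respect to the empirical inner product) to both the constant $1$ and the even function $|x|$. First I would decompose $f$ into its odd and even components,
\[
f(x) \;=\; \frac{a_1+a_2}{2}\,x \;+\; \frac{a_1-a_2}{2}\,|x|,
\]
which is immediate by checking $x\geq 0$ and $x<0$. By the orthogonality above, the minimizing slope must be $\tilde a = (a_1+a_2)/2$, and the residual $f(x)-\tilde a x = \tfrac{a_1-a_2}{2}|x|$ no longer depends on the slope; what remains is the one-variable mean-centering problem of choosing $\tilde b$ to minimize $\sum_{x=-r}^{r}\!\bigl(\tfrac{a_1-a_2}{2}|x|-\tilde b\bigr)^{2}$, solved by $\tilde b = \tfrac{a_1-a_2}{2}\,\overline{|x|}$ with $\overline{|x|} = r(r+1)/(2r+1)$.

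Next, by the Pythagorean theorem the residual sum of squares equals
\[
\frac{(a_1-a_2)^{2}}{4}\Biggl(\sum_{x=-r}^{r} x^{2} \;-\; (2r+1)\,\overline{|x|}^{\,2}\Biggr),
\]
and substituting the standard identity $\sum_{x=-r}^{r} x^{2} = r(r+1)(2r+1)/3$ and collecting terms over the common denominator $12(2r+1)$ yields the exact expression
\[
\sum_{x=-r}^{r}\bigl(f(x)-\tilde a x-\tilde b\bigr)^{2} \;=\; (a_1-a_2)^{2}\,\frac{r(r+1)(r^{2}+r+1)}{12(2r+1)}.
\]

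The main (and essentially only) remaining task is to bound this rational function in $r$ from below by $(a_2-a_1)^{2}\cdot 13r^{3}/24$. This is where I expect the real obstacle to lie: dividing $r^{4}+2r^{3}+2r^{2}+r$ by $24r+12$ shows that the exact expression behaves asymptotically like $(a_1-a_2)^{2}\,r^{3}/24$ with strictly positive lower-order corrections, so one must either verify the stated inequality directly as a polynomial inequality in $r$ (by clearing the denominator $12(2r+1)$ and checking the resulting polynomial is nonnegative for the relevant range of $r$), or else settle for the sharpest clean cubic lower bound the calculation actually delivers and absorb the difference into the constant $C$ in the subsequent application within the proof of \Fref{thm:knot_screening}. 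Apart from this purely algebraic reconciliation, the proof is a routine orthogonality-plus-sum-of-squares computation with no nonelementary ingredients.
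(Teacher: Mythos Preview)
Your computation is correct and, in fact, cleaner than the paper's: you use the odd/even decomposition $f(x)=\tfrac{a_1+a_2}{2}x+\tfrac{a_1-a_2}{2}|x|$ and orthogonality on the symmetric grid to read off $\tilde a$ and $\tilde b$ directly, whereas the paper differentiates the criterion in $a$ and $b$ and then substitutes back. Both routes arrive at the same $\tilde a=(a_1+a_2)/2$ and $\tilde b=(a_1-a_2)\,r(r+1)/(2(2r+1))$, but your exact closed form
\[
\sum_{x=-r}^{r}\bigl(f(x)-\tilde a x-\tilde b\bigr)^{2}
=(a_1-a_2)^{2}\,\frac{r(r+1)(r^{2}+r+1)}{12(2r+1)}
\]
is right, and your observation that this is asymptotically $(a_1-a_2)^{2}r^{3}/24$ rather than $(a_1-a_2)^{2}\cdot 13r^{3}/24$ is also right: the inequality stated in the lemma is \emph{false} as written (already at $r=1$ the left side is $(a_1-a_2)^2/6$ while the right side is $(a_1-a_2)^2\cdot 13/24$).

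The paper's proof reaches the larger constant through a sign slip: when the optimal $\tilde b=(a_1-a_2)c_r$ is substituted, both the positive-$x$ and negative-$x$ residuals equal $(a_1-a_2)(\tfrac{|x|}{2}-c_r)$, so the cross term in the expansion is $-2c_r\sum_{x=1}^r x$; the paper carries it with a $+$ sign, which inflates the final expression by $(a_1-a_2)^2\,r^2(r+1)^2/(2r+1)$ and manufactures the spurious factor of $13$. Your instinct to ``absorb the difference into the constant $C$'' in \Fref{thm:knot_screening} is exactly the right resolution: the only use of the lemma is the bound \eqref{eq:lower_bd_linear}, where any constant multiple of $r^{3}H_n^{2}/n$ suffices, so replacing $13/24$ by $1/24$ (or indeed any positive constant) leaves the theorem and its proof intact after adjusting $C$.
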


By combining Theorems \ref{thm:weak_sparsity_linear} and 
\ref{thm:knot_screening}, we have the following approximate screening
result for linear trend filtering. The proof is omitted.

\begin{corollary}[\textbf{Trend filtering screening result, weak
    sparsity setting}] 
\label{cor:knot_screening_weak} 
Assume the conditions in \Fref{thm:weak_sparsity_linear}, thus 
\smash{$TV_2(\theta_0) \leq C_n$} for a sequence $C_n$.  Also assume
\smash{$H_n = \omega(n^{1/10}C_n^{1/5}W_n^{-3/2})$}.  Let
\smash{$\htheta$} denote the linear trend filtering estimate in
\eqref{eq:tf} with \smash{$\lambda=\Theta(n^{1/5}C_n^{-2/3})$}. Then
\begin{equation*}
d\big(\hS_2 \,|\, S_{0,2}\big) = O_\P \bigg(  
\frac{n^{1/15}C_n^{2/15}}{H_n^{2/3}} \bigg). 
\end{equation*}
\end{corollary}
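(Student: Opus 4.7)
My plan is to derive the corollary as a direct specialization of the generic screening result in \Fref{thm:knot_screening}, using the weak-sparsity $\ell_2$ rate for linear trend filtering provided by \Fref{thm:weak_sparsity_linear}. Concretely, under the stated choice \smash{$\lambda = \Theta(n^{1/5}C_n^{-3/5})$} (the scaling in \Fref{thm:weak_sparsity_linear}; the exponent $-2/3$ in the corollary statement appears to be a typo for $-3/5$), the trend filtering estimate \smash{$\htheta$} satisfies \smash{$\|\htheta-\theta_0\|_n^2 = O_\P(n^{-4/5} C_n^{2/5})$}, so we can take $R_n = n^{-4/5} C_n^{2/5}$.

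I would then invoke \Fref{thm:knot_screening} with this $R_n$ (and \smash{$\ttheta = \htheta$}), provided its side condition \smash{$n^{1/3}R_n^{1/3}H_n^{-2/3} = o(W_n)$} holds. Substituting,
\begin{equation*}
n^{1/3}R_n^{1/3} = n^{1/3} \cdot n^{-4/15} C_n^{2/15} = n^{1/15}C_n^{2/15},
\end{equation*}
so the side condition becomes \smash{$n^{1/15}C_n^{2/15}/H_n^{2/3} = o(W_n)$}, equivalently \smash{$H_n = \omega(n^{1/10}C_n^{1/5}W_n^{-3/2})$} upon raising both sides to the $3/2$ power. This is exactly the hypothesis assumed in the corollary, so the side condition is satisfied.

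Given the hypothesis, \Fref{thm:knot_screening} immediately yields
\begin{equation*}
d\big(\hS_2 \,|\, S_{0,2}\big) = O_\P\bigg(\frac{n^{1/3}R_n^{1/3}}{H_n^{2/3}}\bigg) = O_\P\bigg(\frac{n^{1/15}C_n^{2/15}}{H_n^{2/3}}\bigg),
\end{equation*}
which is the stated conclusion. There is no real obstacle here: the proof is just an arithmetic substitution, and the main ``content'' lives entirely in \Fref{thm:knot_screening} (in particular in \Fref{lem:lower_bd_linear}, which supplies the local quadratic lower bound for approximating a kinked piecewise linear function by a single linear piece) and in \Fref{thm:weak_sparsity_linear}. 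The only thing worth double checking is the exponent bookkeeping that turns $R_n = n^{-4/5}C_n^{2/5}$ into the pair $(n^{1/15}C_n^{2/15}/H_n^{2/3})$ for the rate and $n^{1/10}C_n^{1/5}W_n^{-3/2}$ for the side condition, which is straightforward.
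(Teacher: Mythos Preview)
Your proposal is correct and matches the paper's approach exactly: the paper states that the corollary follows ``by combining Theorems \ref{thm:weak_sparsity_linear} and \ref{thm:knot_screening}'' and omits the proof, which is precisely the substitution you carry out (including the correct exponent bookkeeping and your observation that the $\lambda$ exponent $-2/3$ is a typo for $-3/5$).
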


\begin{remark}[\textbf{Knot screening under weak sparsity}]
To give an example of a challenging case that can be accommodated
by \Fref{cor:knot_screening_weak}, consider a setting in which
$\theta_0$ has \smash{$s_0=\Theta(\sqrt{n})$} knots, evenly spread
apart, so that \smash{$W_n=\Theta(\sqrt{n})$}. Then,
provided \smash{$H_n=\omega(n^{-13/20} C_n^{1/5})$},
\Fref{cor:knot_screening_weak} says 
\begin{equation*}
d\big(\hS_2 \,|\, S_{0,2}\big) = O_\P \bigg( 
\frac{n^{1/15}C_n^{2/15}}{H_n^{2/3}} \bigg) = 
o_\P(\sqrt{n}),
\end{equation*}
so that each true knot has a detected knot that is much closer to it
than all other true knots.  Note that $C_n \geq s_0 H_n$, and
combining this with the requirement on $H_n$ reveals 
the implicit requirement 
\smash{$C_n=\omega(n^{-3/16})$}, 
which in turn implies that \smash{$H_n=\omega(n^{-11/16})$}.  This
seems to be a weak requirement on the minimum nonzero change in slopes 
that is present in $\theta_0$.
\end{remark}

\subsection{Changepoint detection on a graph} 
\label{sec:graph}

We depart from the 1-dimensional setting considered throughout the 
paper thus far, and study the model \eqref{eq:model} in a case where 
the mean parameter has components $\theta_{0,i}$, $i=1,\ldots,n$
that correspond to nodes $V=\{1,\ldots,n\}$ of a graph $G$, with edges  
$E=\{e_1,\ldots,e_m\}$.  Note that, for each $\ell=1,\ldots,m$, we may
write $e_\ell=(i,j)$ for some nodes $i,j$ (and all edges are to be
considered undirected, so that $(i,j)$ and $(j,i)$ are equivalent).
Moreover, the mean $\theta_0$ is assumed to behave in a piecewise 
constant fashion over the graph, which means that there are clusters
of nodes over which $\theta_0$ admits constant values, or,
equivalently, $\theta_{0,i}=\theta_{0,j}$ for many edges $(i,j) \in 
E$. For estimation of $\theta_0$, we focus on the
{\it graph fused lasso} or {\it graph-based total variation denoising}  
\citep{tibshirani2005sparsity,hoefling2010path, 
tibshirani2011solution,sharpnack2012sparsistency}, defined by   
\begin{equation}
\label{eq:gfl}
\htheta = \argmin_{\theta \in \R^n} \; 
\half \sum_{i=1}^n (y_i-\theta_i)^2 + 
\lambda \sum_{(i,j) \in E} |\theta_i-\theta_j|, 
\end{equation}
for a tuning parameter $\lambda \geq 0$.  When $G$ is a 1d chain
graph (i.e., $E = \{(1,2), (2,3),\ldots, (n-1,n)\}$), the
estimator in \eqref{eq:gfl} reduces to the ``usual'' 1d fused lasso
estimator in \eqref{eq:fl}.

In the current graph-based setting, the ``changepoints'' of interest
are actually edges for which the corresponding nodes display differing  
values, under a vector $\theta \in \R^n$: 
\begin{equation*}
S_G(\theta) =  \big\{ (i,j) \in E : \theta_i \not= \theta_j \big\}. 
\end{equation*}
We use the abbreviations $S_{0,G}=S_G(\theta_0)$ and
\smash{$\hS_G=S_G(\htheta)$}.  For an edge $(i,j) \in E$,  
let $P_{ij}$ denote the set of paths in $G$ centered around $(i,j)$,
and embedded within two constant clusters of nodes, i.e., 
\begin{multline*}
P_{ij} = \Big\{\big\{(i_\ell,i_{\ell-1})\ldots,(i_1,i),(i,j),
(j,j_1),\ldots,(j_{\ell-1}, j_\ell)\big\} \;\,\text{for any 
  $\ell = 1,\ldots,n$} : \\
\theta_{0,i}=\theta_{0,i_1} = \ldots =\theta_{0,i_\ell}
\;\, \text{and} \;\,
\theta_{0,j}=\theta_{0,j_1} = \ldots =\theta_{0,j_\ell}\Big\}.  
\end{multline*}
We now define $W_n,H_n$ over the graph $G$, in an analogous
fashion to our notions in the 1d setting,  
\begin{equation}
\label{eq:WnHn_graph}
W_n = \min_{(i,j)\in S_{0,G}} \; \max_{p \in P_{ij}} \;
\frac{|p|-1}{2}
\quad \text{and} \quad
H_n= \min_{(i,j) \in S_{0,G}} \; |\theta_{0,i} - \theta_{0,j}|,
\end{equation}
where we write $|p|$ for the number of edges that form a path $p$. 
Note that, as defined, $2W_n+1$ is the minimax length of any
path centered around a changepoint in $S_0$; in other words, by 
construction, for each changepoint $(i,j) \in S_0$, there exists a
path of at least $W_n$ edges embedded entirely within a cluster on
either side of $(i,j)$. When $W_n$ is small, this is indicative of one 
of the constant clusters of nodes in $\theta_0$ being small in size.
For $x \in \R^n$, we define its
graph-based discrete total variation to be
\begin{equation*}
\TV_G(x) = \sum_{(i,j) \in E} |x_i - x_j|.
\end{equation*}
Finally, we must precisely define our screening distance metric in the 
graph-based setting.  For any two edges $e_1=(i_1,j_1),e_2=(i_2,j_2)
\in E$, let $d_G(e_1,e_2)$ denote the length of the shortest path that
starts either $i_1$ or $j_1$, and ends at either $i_2$ or $j_2$.
For sets $A,B \in E$, we define the screening distance 
\begin{equation*}
d_G(A|B) = \max_{e_1=(i_1,j_1) \in B} \; \min_{e_2=(i_2,j_2) \in A} \;   
d_G(e_1,e_2).
\end{equation*}
Hence, if \smash{$d_G(A|B)=k$}, then for any edge in $B$, there is a
path of at most $k$ edges starting from this edge, and ending at an  
edge in $A$. 

Between \citet{wang2016trend} and \citet{hutter2016optimal},
various estimation error rates are available for the graph fused
lasso. These results take on different forms, depending on the
assumptions placed on $\theta_0$ and on the graph $G$.  Below we
recite a result from \citet{hutter2016optimal} in the case that $G$
is a 2d grid graph.  Here \smash{$\htheta$} in \eqref{eq:gfl} is
called the {\it 2d fused lasso} estimate.

\begin{theorem}[\textbf{2d fused lasso error rate, weak and
  strong sparsity settings, Corollary 5 of
  \citealt{hutter2016optimal}}]      
\label{thm:weak_sparsity_graph}
Assume the data model in \eqref{eq:model}, with errors 
$\epsilon_i$, $i=1,\ldots,n$ i.i.d.\ from $N(0,\sigma^2)$.  
Assume that $G$ is a 2d grid graph, with $n$ nodes (hence the 2d
grid is of dimension \smash{$\sqrt{n} \times \sqrt{n}$}).
Write $s_0=|S_{0,G}|$, and \smash{$\TV_G(\theta_0) \leq C_n$}, for
some nondecreasing sequence $C_n$. Then for 
\smash{$\lambda=\Theta(\log{n})$}, the 2d fused lasso estimate
\smash{$\htheta$} in \eqref{eq:gfl} satisfies  
\begin{equation*}
\|\htheta - \theta_0\|_n^2 = O_\P \bigg( \min\{s_0, C_n\}
\frac{\log^2{n}}{n} \bigg).
\end{equation*}
\end{theorem}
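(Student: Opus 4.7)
The plan is to follow the oracle-inequality technique for generalized lasso problems developed in \citet{hutter2016optimal}, specialized to the 2d grid. Let $D_G \in \R^{m\times n}$ denote the oriented edge-incidence matrix of $G$, so that $\TV_G(\theta) = \|D_G\theta\|_1$ and the 2d fused lasso in \eqref{eq:gfl} is a generalized lasso with penalty $\|D_G\cdot\|_1$. Starting from the optimality of \smash{$\htheta$}, I would first derive the usual basic inequality
\begin{equation*}
\tfrac{1}{2}\|\htheta - \theta_0\|_2^2 \leq \epsilon^\top (\htheta-\theta_0) + \lambda\big(\|D_G\theta_0\|_1 - \|D_G\htheta\|_1\big).
\end{equation*}

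To handle the empirical-process term, I split \smash{$\htheta - \theta_0$} according to $\ker(D_G) = \spa\{\mathbf{1}\}$ (valid because $G$ is connected). The component in the kernel contributes only a negligible one-dimensional Gaussian fluctuation; on the orthogonal complement, the projection equals $D_G^\dagger D_G$, so
\begin{equation*}
\epsilon^\top (\htheta - \theta_0) \leq O_\P(1) + \big\langle (D_G^\dagger)^\top \epsilon,\, D_G(\htheta - \theta_0)\big\rangle \leq O_\P(1) + \|(D_G^\dagger)^\top \epsilon\|_\infty\, \|D_G(\htheta - \theta_0)\|_1,
\end{equation*}
converting the problem into a lasso-style one in the transformed variable $D_G\theta$.

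The main probabilistic step is to show $\|(D_G^\dagger)^\top \epsilon\|_\infty = O_\P(\log n)$ on the 2d grid. By a Gaussian maximal inequality this reduces to bounding $\max_\ell \|D_G^\dagger e_\ell\|_2$ and multiplying by the standard $\sqrt{\log m}$ factor from taking a maximum over $m = O(n)$ edges. The crucial fact is that on the 2d grid, $\max_\ell \|D_G^\dagger e_\ell\|_2 = O(\sqrt{\log n})$, which follows from the logarithmic growth of the discrete 2d Laplacian Green's function (equivalently, effective resistances grow logarithmically in graph distance). This is the main obstacle, and it is genuinely 2-dimensional: the analogue grows like $\sqrt{n}$ in 1d, consistent with the much worse rates seen in the 1d setting, and is bounded in dimensions $\geq 3$; dimension 2 sits precisely at the logarithmic boundary.

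With $\lambda = \Theta(\log n)$ chosen so that $\lambda \geq 2\|(D_G^\dagger)^\top \epsilon\|_\infty$ on a high-probability event, I would close in two complementary ways and take the minimum. For the weak-sparsity branch, apply the triangle inequality $\|D_G(\htheta - \theta_0)\|_1 \leq \|D_G\htheta\|_1 + C_n$, absorb $\|D_G\htheta\|_1$ into the penalty on the right of the basic inequality, and obtain $\|\htheta - \theta_0\|_2^2 = O(\lambda C_n)$, hence \smash{$\|\htheta-\theta_0\|_n^2 = O_\P(C_n \log^2 n / n)$}. For the strong-sparsity branch, I split the penalty across $S_{0,G}$ and its complement; the basic inequality yields a cone condition $\|D_{G,-S_{0,G}}(\htheta - \theta_0)\|_1 \leq 3\,\|D_{G,S_{0,G}}(\htheta - \theta_0)\|_1$, after which I invoke a compatibility-type inequality $\|D_{G,S_{0,G}} v\|_1 \leq \kappa \sqrt{s_0}\,\|v\|_2$ with $\kappa = O(\sqrt{\log n})$ for the 2d grid, again relying on the 2d Green's function bound. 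This is the secondary obstacle; combined with $\lambda = \Theta(\log n)$ it delivers \smash{$\|\htheta-\theta_0\|_n^2 = O_\P(s_0 \log^2 n / n)$}. Taking the minimum of the two bounds produces the stated $\min\{s_0,C_n\}\log^2 n / n$ rate.
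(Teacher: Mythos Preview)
The paper does not give its own proof of this statement: it is quoted verbatim as Corollary~5 of \citet{hutter2016optimal} and used as a black box, so there is no in-paper argument to compare against. Your sketch is essentially a faithful outline of the Hutter--Rigollet proof strategy: the basic inequality, the kernel/range split via $D_G^\dagger D_G$, the key bound $\|(D_G^\dagger)^\top\epsilon\|_\infty = O_\P(\log n)$ via the logarithmic growth of 2d effective resistances, and the two closures (slow/weak-sparsity and fast/strong-sparsity) are exactly the ingredients of that paper.

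One small inaccuracy worth flagging in your strong-sparsity branch: the bound you write as $\|D_{G,S_{0,G}} v\|_1 \leq \kappa\sqrt{s_0}\,\|v\|_2$ holds with $\kappa = O(1)$, not $\kappa = O(\sqrt{\log n})$. Indeed, by Cauchy--Schwarz $\|D_{G,S_{0,G}} v\|_1 \leq \sqrt{s_0}\,\|D_{G,S_{0,G}} v\|_2 \leq \sqrt{s_0}\,\|D_G\|_{\mathrm{op}}\,\|v\|_2$, and on the 2d grid $\|D_G\|_{\mathrm{op}}^2$ is bounded by the largest Laplacian eigenvalue, which is at most $8$. The Green's function bound is not needed here; it enters only in controlling $\max_\ell \|D_G^\dagger e_\ell\|_2$ (equivalently, in setting $\lambda$). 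With $\kappa=O(1)$ your cone-condition argument delivers $\|\htheta-\theta_0\|_2^2 = O(\lambda^2 s_0) = O(s_0\log^2 n)$ as required; had $\kappa$ really been $O(\sqrt{\log n})$ you would overshoot by a log factor.
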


\begin{remark}[\textbf{Consistency, optimality}] According to the theorem,
the 2d fused lasso estimator is consistent when either
\smash{$s_0=o(n/\log^2{n})$} or \smash{$C_n=o(n/\log^2{n})$}.
\Fref{thm:weak_sparsity_graph} covers both the weak and 
strong sparsity cases (since it allows us to draw conclusions
involving either $s_0$ or $C_n$).  In the case of weak sparsity, the
\smash{$C_n \log^2{n}/n$} rate achieved by the 2d fused lasso was 
recently shown to be essentially minimax optimal (differing only by
log factors), over the class of signals having bounded total variation 
over the 2d grid $G$, i.e., \smash{$\theta_0 \in \{\theta \in \R^n: 
\TV_G(\theta) \leq C_n\}$}, by \citet{sadhanala2016total}.    
\end{remark}

\begin{remark}[\textbf{Other graphs}] Basically the same result as in
\Fref{thm:weak_sparsity_graph} holds for 3d and higher-dimensional 
grids (except with one fewer log factor) \citep{hutter2016optimal}.
Estimation error rates for various types of random graphs, the
complete graph, and star graphs are derived in
\citet{wang2016trend,hutter2016optimal}.  For simplicity, we do not
consider any of these cases when we give an application of our
generic graph screening result, below; however, given the availability
of $\ell_2$ rates, we remark that results over different graph models
(over than the 2d grid) are certainly possible, and are just a matter
of plugging in the proper rates in the proper settings.
\end{remark}

Here is our generic graph screening result, analogous to that in
\Fref{thm:changepoint_screening} for the 1d chain.

\begin{theorem}[\textbf{Generic screening result, changepoint
  detection on a graph}] 
\label{thm:graph_screening}
Let $\theta_0 \in \R^n$ be piecewise constant over a graph $G$, and 
\smash{$\ttheta \in \R^n$} be an estimator that
satisfies 
\smash{$\|\ttheta-\theta_0\|^2_n = O_\P(R_n)$}.  
Assume that $nR_n/H^2_n = o(W_n)$, where, recall $W_n,H_n$ are as
defined in \eqref{eq:WnHn_graph}. Then
\[
d_G \big(S_G(\ttheta) \,|\, S_{0,G} \big) = O_\P
\bigg( \frac{n R_n}{H_n^2} \bigg).
\]
\end{theorem}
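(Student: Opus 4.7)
The plan is to mirror the proof of Theorem~\ref{thm:changepoint_screening}, with the interval of $2r_n$ points straddling a 1d changepoint now replaced by a pair of length-$r_n$ paths in $G$ on either side of a graph changepoint edge. Fix $\epsilon, C > 0$. Using $\|\ttheta - \theta_0\|_n^2 = O_\P(R_n)$ together with $nR_n/H_n^2 = o(W_n)$, I would choose $N$ so that for all $n \geq N$ both $\P(\|\ttheta-\theta_0\|_n^2 > (C/4) R_n) \leq \epsilon$ and $CnR_n/H_n^2 + 1 \leq W_n$. Set $r_n = \lfloor CnR_n/H_n^2 \rfloor$, so that $r_n + 1 \leq W_n$, and suppose the failure event $d_G(S_G(\ttheta) \,|\, S_{0,G}) > r_n$ holds; then some edge $(i,j) \in S_{0,G}$ has no edge of $S_G(\ttheta)$ within graph distance $r_n$ of it.

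By the definition of $W_n$ in \eqref{eq:WnHn_graph}, there is a path $p = \{(i_{W_n}, i_{W_n-1}),\ldots,(i_1, i),(i,j),(j, j_1),\ldots,(j_{W_n-1}, j_{W_n})\}$ in $G$ with $\theta_{0,i_k} = \theta_{0,i}$ and $\theta_{0,j_k} = \theta_{0,j}$ for all $0 \leq k \leq W_n$ (taking $i_0 = i$, $j_0 = j$). The edge $(i_{k+1}, i_k)$ on this path satisfies $d_G\big((i_{k+1},i_k), (i,j)\big) \leq k$, via the subpath $i_k \to i_{k-1} \to \cdots \to i_1 \to i$, and analogously on the $j$-side. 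Hence for each $k = 0, 1, \ldots, r_n$, the edges $(i_{k+1},i_k)$ and $(j_{k+1},j_k)$ lie within $d_G$-distance $r_n$ of $(i,j)$ and so are not in $S_G(\ttheta)$. Combined with $(i,j) \notin S_G(\ttheta)$, this forces $\ttheta$ to take a single common value $z$ across all $2r_n+4$ nodes $i_{r_n+1},\ldots,i_1, i, j, j_1,\ldots,j_{r_n+1}$.

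Exactly as in the display \eqref{eq:lower_bd}, these $2r_n+4$ nodes alone contribute
\[
\|\ttheta - \theta_0\|_n^2 \;\geq\; \frac{r_n+2}{n}\Big[(z-\theta_{0,i})^2 + (z-\theta_{0,j})^2\Big] \;\geq\; \frac{(r_n+2)\, H_n^2}{2n} \;>\; \frac{C}{4} R_n,
\]
using $(z-a)^2 + (z-b)^2 \geq (a-b)^2/2$, the definition of $H_n$, and the choice of $r_n$. Therefore $\P\big(d_G(S_G(\ttheta)\,|\,S_{0,G}) > r_n\big) \leq \P(\|\ttheta-\theta_0\|_n^2 > (C/4) R_n) \leq \epsilon$, which is the stated $O_\P(nR_n/H_n^2)$ bound.

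I expect the one genuinely new piece of work, relative to Theorem~\ref{thm:changepoint_screening}, to be the extraction of constancy of $\ttheta$ along the two radius-$r_n$ stubs of $p$ from the graph screening hypothesis: one must invoke the definition of $W_n$ to produce the witness path of ``safe'' nodes on which the true values are known, and then use the edge-to-edge distance $d_G$ to certify that every edge of those stubs is forced to be a non-changepoint of $\ttheta$. Once this graph-specific bookkeeping is in place, the quadratic lower bound on the squared error, and the contradiction with the $\ell_2$ rate, carry over verbatim from the 1d case.
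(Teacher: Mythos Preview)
Your proposal is correct and follows essentially the same approach as the paper's own proof: find a changepoint edge $(i,j)$ with no estimated changepoint nearby, use the definition of $W_n$ to extract a path of constant-$\theta_0$ nodes on each side, argue that $\ttheta$ is forced to be constant along it, and invoke the elementary bound $(z-a)^2+(z-b)^2\geq (a-b)^2/2$ to derive a lower bound on $\|\ttheta-\theta_0\|_n^2$ that contradicts the assumed rate. Your treatment of the edge-distance bookkeeping is in fact more explicit than the paper's (which simply asserts constancy of $\ttheta$ over the relevant nodes), and your use of $r_n+2$ nodes per side versus the paper's $r_n$ is an immaterial difference.
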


\begin{proof}
The proof is again very similar to the proof of
\Fref{thm:changepoint_screening}. Fix $\epsilon>0$, $C>0$, and let
$N_1>0$ be an integer such that, for $n \geq N_1$, 
\begin{equation*}
\P\bigg( \|\ttheta-\theta_0\|_n^2 > \frac{C}{4} R_n\bigg) \leq  
\epsilon.
\end{equation*}
Let $N_2>0$ be an integer 
such that, for $n \geq N_2$, we have \smash{$CnR_n/H_n^2 \leq W_n$}.
Let $N=\max\{N_1,N_2\}$, take $n \geq N$,  
and define \smash{$r_n = \lfloor CnR_n/H_n^2\rfloor$}. Suppose that     
\smash{$d_G(S_G(\ttheta) \,|\, S_{0,G}) > r_n$}.  By definition, there
exists a changepoint $(i,j) \in S_0$ such that no changepoints in  
\smash{$\ttheta$} are within $r_n$ of $(i,j)$, which in our distance
metric, means that \smash{$\ttheta_k$} is constant over all nodes
$k$ that are $r_n$ away from $i$ or $j$.
Construct an arbitrary path $p$ centered around edge $(i,j)$
with $2r_n+1$ edges 
\[
p = \Big\{(i_{r_n}, i_{r_n-1}),\ldots,(i_1,i), (i,j), (j,j_1) \ldots, 
(j_{r_n-1}, j_{r_n}) \Big\},
\]
where 
\smash{$\theta_{0,i} = \theta_{0,i_1} = \ldots = \theta_{0,i_{r_n}}$}
and 
\smash{$\theta_{0,j} = \theta_{0,j_1} = \ldots = \theta_{0,j_{r_n}}$}.  
(This is possible because $r_n \leq W_n$.)  Denote 
\begin{equation*}
z=\ttheta_{i_{r_n}}=\ldots=\ttheta_i=\ttheta_j=\ldots=  
\ttheta_{j_{r_n}}.
\end{equation*}
Also let \smash{$I(p)=\{i_{r_n-1}, \ldots, i_1, i, j, j_1,
  \ldots, j_{r_n-1} \}$} denote the internal nodes of the path
$p$. Then
\begin{equation*}
\frac{1}{n}\sum_{k \in I(p)}
\big(\ttheta_k-\theta_{0,k}\big)^2 
= \frac{r_n}{n}\big(z-\theta_{0,i}\big)^2 +
\frac{r_n}{n}\big(z-\theta_{0,j}\big)^2 \geq 
\frac{r_n H_n^2}{2n} > \frac{C}{4} R_n,
\end{equation*}
where the first inequality holds because, as argued before, $(x-a)^2 +  
(x-b)^2 \geq (a-b)^2/2$ for all $x$, and the second by definition of
$r_n$. Invoking the assumed $\ell_2$ error rate for \smash{$\ttheta$} 
completes the proof. 
\end{proof}

Combining Theorems \ref{thm:weak_sparsity_graph} and
\ref{thm:graph_screening} gives the next and final result, whose proof
is omitted. 

\begin{corollary}[\textbf{2d fused lasso screening result, weak and
    strong sparsity settings}]
\label{cor:graph_screening_2d}
Assume the conditions in \Fref{thm:weak_sparsity_graph}, so that $G$ 
is a 2d grid, and \smash{$s_0=|S_{0,G}|$}, 
\smash{$\TV_G(\theta_0) \leq C_n$}.  Also assume that
\smash{$H_n=\omega(\min\{\sqrt{s_0},\sqrt{C_n}\}\log{n}/\sqrt{W_n})$}.   
Let \smash{$\htheta$} denote the 2d fused lasso estimate in
\eqref{eq:gfl}, with the choice of tuning parameter
\smash{$\lambda=\Theta(\log{n})$}.  Then 
\[
d\big(\hS_G\,|\,S_{0,G} \big) = 
O_\P\bigg(\min\{s_0,C_n\} \frac{\log^2{n}}{H_n^2} \bigg).  
\]
\end{corollary}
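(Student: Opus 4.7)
The plan is short because the corollary is advertised as an immediate consequence of two already-established results, and the only work is to check that their hypotheses line up. I would start by reading off from Theorem \ref{thm:weak_sparsity_graph} that, for the 2d fused lasso estimate \smash{$\htheta$} with $\lambda = \Theta(\log n)$, we have $\|\htheta-\theta_0\|_n^2 = O_\P(R_n)$ where $R_n = \min\{s_0,C_n\}\log^2 n/n$. This identifies the $\ell_2$ rate to feed into the generic graph screening bound.

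Next I would apply Theorem \ref{thm:graph_screening} to \smash{$\htheta$} with this particular $R_n$. The one nontrivial step is verifying the hypothesis $nR_n/H_n^2 = o(W_n)$ of that theorem. Substituting $R_n$, the requirement becomes $\min\{s_0,C_n\}\log^2 n/H_n^2 = o(W_n)$, equivalently $H_n = \omega(\sqrt{\min\{s_0,C_n\}}\,\log n/\sqrt{W_n})$, which is exactly the assumption the corollary makes on $H_n$. With the hypothesis confirmed, the conclusion of Theorem \ref{thm:graph_screening} reads
\[
d_G\bigl(\hS_G\,|\,S_{0,G}\bigr) = O_\P\bigl(nR_n/H_n^2\bigr) = O_\P\!\left(\min\{s_0,C_n\}\,\frac{\log^2 n}{H_n^2}\right),
\]
which is precisely the stated bound.

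There is no real obstacle; the corollary is a one-line combination. All of the content lives upstream: Theorem \ref{thm:weak_sparsity_graph}, imported from \citet{hutter2016optimal}, supplies the $\ell_2$ rate via an analysis of the 2d grid incidence operator, and Theorem \ref{thm:graph_screening}, proved earlier in this subsection, carries out the path construction (embedding a path of $2r_n+1$ edges inside two constant clusters adjacent to an un-detected changepoint edge) and the two-point quadratic lower bound $(x-a)^2 + (x-b)^2 \geq (a-b)^2/2$ to convert a missed detection into an $\ell_2$ error of order at least $r_n H_n^2/n$. Once those two ingredients are in hand, the corollary follows by the plug-in just described.
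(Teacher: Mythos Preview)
Your proposal is correct and matches the paper's approach exactly: the paper states that the corollary follows by ``combining Theorems \ref{thm:weak_sparsity_graph} and \ref{thm:graph_screening}'' and omits the proof, and your plug-in argument (identifying $R_n=\min\{s_0,C_n\}\log^2 n/n$ from the former and verifying the hypothesis $nR_n/H_n^2=o(W_n)$ of the latter reduces to the stated condition on $H_n$) is precisely that combination.
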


\begin{remark}[\textbf{Screening over a 2d grid}] 
Consider, as a concrete example, a case in which 
$\theta_0$ is piecewise constant with just 2 pieces or clusters, over
the 2d grid $G$ (of dimension, recall, \smash{$\sqrt{n} \times
  \sqrt{n}$}).  Then \smash{$s_0=|S_{0,G}|$} reflects the length of
the boundary separating the 2 pieces.  In the typical case (in which
the 2 pieces are of roughly equal size, and both have volume
proportional to $n$), this scales as \smash{$s_0=\Theta(\sqrt{n})$}.
Moreover, in the typical case, the length $W_n$ of the longest
path on either side of this boundary also scales as
\smash{$W_n=\Theta(\sqrt{n})$}.  Here, $C_n \geq s_0H_n$, which is
larger than $s_0$ unless $H_n$ is quite small ($H_n \leq 1$).
Thus when $H_n$ is large ($H_n>1$), we can use
\Fref{cor:graph_screening_2d} to conclude that  
\begin{equation*}
d\big(\hS_G\,|\,S_{0,G} \big) = O_\P\bigg(
\frac{\sqrt{n}\log^2{n}}{H_n^2} \bigg).
\end{equation*}
Roughly speaking, this says if we were to place a ``tube'' of radius
\smash{$B_n=O_\P(\sqrt{n}\log^2{n}/H_n^2)$} around the boundary
edges \smash{$S_{0,G}$}, containing edges that are at distance of at
most $B_n$ from \smash{$S_{0,G}$}, 
then each changepoint in \smash{$S_{0,G}$} has a corresponding
detected changepoint in \smash{$\hS_G$} lying inside this tube.
Seeing as the entire grid itself is of dimension \smash{$\sqrt{n}
  \times \sqrt{n}$}, this statement is not really interesting unless
$H_n$ is fairly large, say \smash{$H_n=\Theta(n^{1/8} \log{n})$}.
Then \smash{$B_n=O_\P(n^{1/4})$}, giving a reasonably tight tube
around the boundary \smash{$S_{0,G}$}.  
\end{remark}

An illustration of the true changepoints versus those detected
by the 2d fused lasso, in a simple simulated 2d image
example, is given in \Fref{fig:graph}.  See the figure caption for
details. 

\begin{figure}[htb]
\centering
\includegraphics[width=\textwidth]{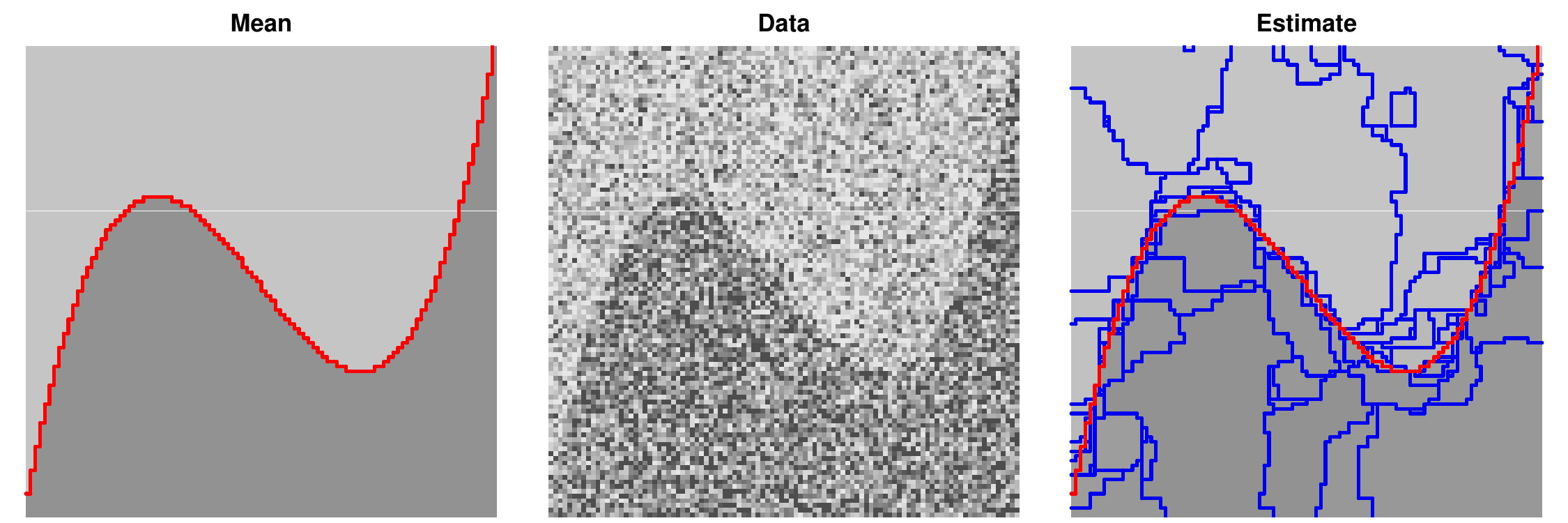}
\caption{\it\small An example on a 2d grid graph, of dimension $100
  \times 100$ (i.e., $n=10,000$). The left panel displays the mean 
  $\theta_0$; the middle panel the data $y$, whose entries were
  generated by 
  adding i.i.d.\ $N(0,1)$ noise to $\theta_0$; and the right panel
  displays the 2d fused lasso estimate, whose tuning parameter
  $\lambda$ was chosen to minimize the $\ell_2$ estimation error in
  retrospect. The mean only has two constant pieces, taking values 0
  and 1, denoted by dark gray and light gray colors, respectively, in
  the left panel.  A consistent color scale is used throughout the
  three plots.  The true changepoint set \smash{$S_{0,G}$} is drawn
  in red, and the estimated changepoint set \smash{$\hS_G$} 
  in blue. In this example, \smash{$d_G(\hS_G \,|\, S_{0,G})=1$}.}
\label{fig:graph}
\end{figure}

\section{Discussion}

We have derived a new $\ell_2$ error bound for the fused 
lasso in a strong sparsity setting, which, to the best of our
knowledge, yields the sharpest available rate in this setting.  We
have also undertaken a detailed study of the manner in which $\ell_2$ 
error bounds for generic estimators \smash{$\ttheta$} can be used to
prove changepoint screening results for \smash{$\ttheta$}, and after
simple post-processing, changepoint recovery results for
\smash{$\ttheta$}.  As a prime example, we have derived new
changepoint screening and recovery results for the fused lasso
estimator, in various settings, based solely on its $\ell_2$ error
guarantees, in these settings. To reiterate, our general technique for 
analyzing changepoint screening and recovery properties is not
specific to the fused lasso, and is potentially much more broadly
applicable, as it only assumes knowledge of the $\ell_2$ error rate of
the estimator \smash{$\ttheta$} in question.  This could be applied
even outside of the typical Gaussian data model.  

We have also presented extensions to the piecewise linear segmentation
and graph changepoint detection problems, as well as detailed
simulations. The code to run all our simulations is located at
\url{https://github.com/linnylin92/fused_lasso}, and relies on the 
R package {\tt genlasso}.

\bibliographystyle{agsm}
\bibliography{bib}

\newpage
\appendix

\section{Proof of \Fref{thm:strong_james}} 
\label{app:strong_james}


Here and henceforth, we write 
$N(r,S,\|\cdot\|)$ to denote the covering
number of a set $S$ in a norm $\|\cdot\|$, i.e., the
smallest number of $\|\cdot\|$-balls of radius $r$  
needed to cover $S$.  We call $\log N(r,S,\|\cdot\|)$ 
the log covering or entropy number.  Recall that we define
the scaled norm \smash{$\|\cdot\|_n = \|\cdot\|_2/\sqrt{n}$}. 

In the proof of \Fref{thm:strong_james}, we will rely on the following
result from \citet{van1990estimating} (which is derived closely from
Dudley's chaining for sub-Gaussian processes).

\begin{theorem}[\textbf{Theorem 3.3 of \citealt{van1990estimating}}] 
\label{thm:van}
Assume that $\epsilon=(\epsilon_1,\ldots,\epsilon_n) \in \R^n$ has
i.i.d.\ components drawn from a
sub-Gaussian distribution, as in \eqref{eq:subgauss}.  Consider a set
$\cX \subseteq \R^n$ with $\|x\|_n \leq 1$ for all $x \in \cX$,
and let $\cK(\cdot)$ be a continuous function upper bounding 
the $\|\cdot\|_n$ entropy of $\cX$, i.e., \smash{$\cK(r) \geq 
  \log N(r,\cX,\|\cdot\|_n)$}.  Then there are constants 
$C_1,C_2,C_3,C_4>0$ (depending only on $M,\sigma$, the parameters of 
the underlying sub-Gaussian distribution 
in \eqref{eq:subgauss}), such that for all $t > C_1$, with  
\begin{equation*}
t > C_2 \int_0^{t_0} \sqrt{\cK(r)} \, dr,
\end{equation*}
where \smash{$t_0 = \inf\{r : \cK(r) \leq C_3t^2 \}$}, we
have   
\begin{equation*}
\P \bigg( \sup_{x \in \cX} \; \frac{|\epsilon^\top x|}{\sqrt{n}} >
t\bigg) \leq 2 \exp(-C_4 t^2). 
\end{equation*}
\end{theorem}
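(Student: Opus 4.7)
The plan is to prove \Fref{thm:van} by a standard Dudley-type chaining argument for sub-Gaussian processes, where the key non-standard feature is the early truncation of the chain at resolution $t_0$ (rather than chaining all the way down to zero). Since the increments $\epsilon^\top(x-y)/\sqrt{n}$ are sub-Gaussian with scale proportional to $\|x-y\|_n$, the contribution of a chaining link between a minimal $r_k$-net and an $r_{k+1}$-net can be controlled by a union bound combined with the tail in \eqref{eq:subgauss}.

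Concretely, I would proceed as follows. First, since $\|x\|_n \leq 1$ for all $x \in \cX$, choose a geometric sequence of resolutions $r_k = 2^{-k}$, and for each $k$ let $\cX_k$ be a minimal $r_k$-net of $\cX$ in $\|\cdot\|_n$, with $|\cX_k| \leq \exp(\cK(r_k))$. Let $\pi_k : \cX \to \cX_k$ be a nearest-point projection, and set $\pi_0(x) = 0$. Second, choose $K = K(t)$ to be the smallest integer with $r_K \leq t_0$, where $t_0 = \inf\{r : \cK(r) \leq C_3 t^2\}$; this truncation level is where the entropy first drops below $C_3 t^2$, and is precisely the place where a single union bound at resolution $r_K$ will cost at most $\exp(-(\text{const})t^2)$.

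The third step is the chaining decomposition: for any $x \in \cX$,
\begin{equation*}
\frac{\epsilon^\top x}{\sqrt{n}} = \frac{\epsilon^\top \pi_K(x)}{\sqrt{n}} + \frac{\epsilon^\top(x - \pi_K(x))}{\sqrt{n}} = \sum_{k=1}^{K} \frac{\epsilon^\top(\pi_k(x) - \pi_{k-1}(x))}{\sqrt{n}} + \frac{\epsilon^\top(x - \pi_K(x))}{\sqrt{n}}.
\end{equation*}
For each $k$, the difference $\pi_k(x) - \pi_{k-1}(x)$ takes at most $|\cX_k||\cX_{k-1}|$ values and has $\|\cdot\|_n$-norm at most $r_k + r_{k-1} \leq 3 r_{k-1}$. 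By the sub-Gaussian tail \eqref{eq:subgauss} together with a union bound, the maximum of $|\epsilon^\top(\pi_k(x) - \pi_{k-1}(x))|/\sqrt{n}$ over $x \in \cX$ is at most $C \, r_{k-1} \sqrt{\cK(r_k)} + C\, r_{k-1}\, u_k$ with probability at least $1 - 2\exp(-u_k^2)$. Picking $u_k^2 = C_3 t^2 + 2\log(1/r_k)$ and summing the geometric tails gives a net failure probability bounded by $\exp(-C_4 t^2)$, and the summed bound on the chain itself is dominated by a constant multiple of the Dudley integral $\int_0^{t_0} \sqrt{\cK(r)}\, dr$, which by hypothesis is at most $t/C_2$. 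The residual $\epsilon^\top(x - \pi_K(x))/\sqrt{n}$ is handled by noting that $\|x - \pi_K(x)\|_n \leq r_K \leq t_0$, and applying a sub-Gaussian bound with the union over $\cX_K$, whose cardinality is at most $\exp(C_3 t^2)$ by the definition of $t_0$; this contributes an additional term of order $t_0 \sqrt{t^2} \lesssim t_0 t$, which is absorbed into $t$ provided $t_0$ is small enough, ensured by the assumption $t > C_1$.

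The main obstacle I expect is keeping the interplay between $t$, $t_0$, and the constants $C_1,\ldots,C_4$ internally consistent: the residual-level union bound, the geometric tail sum over chaining links, and the Dudley integral hypothesis must all be calibrated so that one and the same $t$-threshold governs all three estimates. In particular, the lower bound $t > C_1$ is there to guarantee that $t_0$ is strictly less than the diameter of $\cX$ (so chaining is non-trivial) and that the residual union bound at resolution $t_0$ is dominated by the chaining term rather than the other way around. Once the bookkeeping of constants is arranged, the conclusion $\P(\sup_{x \in \cX}|\epsilon^\top x|/\sqrt{n} > t) \leq 2\exp(-C_4 t^2)$ follows by combining the chain increment and residual bounds with a final union bound over the $K$ chaining levels.
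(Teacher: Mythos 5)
First, a point of orientation: the paper does not prove \Fref{thm:van} at all --- it is imported verbatim as Theorem 3.3 of \citet{van1990estimating} and used as a black box --- so there is no in-paper proof to compare against. Judged on its own terms, your chaining sketch has the architecture of the argument inverted, and this creates a genuine gap rather than a bookkeeping issue. Since $\cK$ is nonincreasing in $r$, the definition $t_0 = \inf\{r : \cK(r) \leq C_3 t^2\}$ means the entropy is at most $C_3 t^2$ at scales $r \geq t_0$ and exceeds $C_3 t^2$ at scales $r < t_0$; hence $\int_0^{t_0}\sqrt{\cK(r)}\,dr$ is the \emph{fine-scale} Dudley integral. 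The correct structure is therefore the opposite of your ``early truncation'': one takes a single net at the \emph{coarse} scale $\approx t_0$, whose log-cardinality is at most $C_3 t^2$ so that a union bound against the sub-Gaussian tail $\exp(-t^2/(8\sigma^2))$ closes once $C_3$ is chosen small, and then chains from $t_0$ all the way down to zero; that downward chain is where the hypothesis $t > C_2\int_0^{t_0}\sqrt{\cK(r)}\,dr$ is consumed.

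Your version chains from scale $1/2$ down to $r_K \approx t_0$ and leaves a residual $x - \pi_K(x)$ at scale $t_0$. Two things go wrong. First, the chain sum $\sum_{k \leq K} r_{k-1}\sqrt{\cK(r_k)}$ is a Riemann sum for $\int_{t_0}^{1}\sqrt{\cK(r)}\,dr$, not for $\int_0^{t_0}\sqrt{\cK(r)}\,dr$, so the stated hypothesis does not bound it (it happens to be of order $\sqrt{C_3}\,t$ because $\cK \leq C_3 t^2$ on $[t_0,1]$, but that is a different argument from the one you give). Second, and fatally, the residual $\sup_{x\in\cX}|\epsilon^\top(x-\pi_K(x))|/\sqrt{n}$ is a supremum over a set indexed by all of $\cX$, not by the net $\cX_K$, so ``a sub-Gaussian bound with the union over $\cX_K$'' does not apply; moreover the definition of $t_0$ yields $\cK(r)\leq C_3t^2$ only for $r\geq t_0$, so it provides no cardinality bound for a net at scale $r_K\leq t_0$. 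Controlling this residual requires exactly the fine-scale chaining below $t_0$ that your argument omits, and that is the only place where the entropy of $\cX$ below scale $t_0$ --- i.e., the hypothesis on $\int_0^{t_0}\sqrt{\cK}$ --- ever enters. As a sanity check, your argument as written never uses the behavior of $\cK$ on $(0,t_0)$, yet without that the conclusion is false: for $\cX$ the full ball $\{x : \|x\|_n\leq 1\}$ the supremum is of order $\sigma\sqrt{n}$. The repair is to flip the chain: union-bound once at scale $\approx t_0$, then chain downward to zero with deviations $u_k^2 = t^2 + O(k)$, using that $\int_0^{t_0}\sqrt{\cK(r)}\,dr \geq \sqrt{C_3}\,t\,t_0$ forces $t_0 \leq 1/(C_2\sqrt{C_3})$, so the summed terms $\sum_k r_k u_k \lesssim t\,t_0$ can be absorbed into a fraction of $t$.
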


Now we give the proof of \Fref{thm:strong_james}.

\begin{proof}[Proof of \Fref{thm:strong_james}]
The proof is given in two parts, one in which we bound 
\smash{$\|\hat\delta\|^2_2$} and the other in which we bound
\smash{$\|\hx \|^2_2$} . 
Recall that \smash{$\hat\delta = P_0(\htheta - \theta_0)$} and  
\smash{$\hx = P_1\htheta$}. 
Each part begins with a different ``basic inequality'', 
established by comparing the fused lasso objective at different
points. First, we define the following events,
\begin{align}   
\label{eq:omega0}
\Omega_0 = &\bigg\{ \sup_{z \in \cM} \; \frac{|\epsilon^\top z|}{\|
  z \|_2} \leq \gamma c_I \sqrt{(\log{s_0}+\log\log{n}) s_0 \log{n}}
\bigg\}, \\  
\label{eq:omega1}
\Omega_1 = &\bigg\{ \sup_{w \in \cR^\perp} \; \frac{|\epsilon^\top
  w|}{\| \op_{-S_0} w \|^{1/2}_1 \|w\|^{1/2}_2} \le \gamma
c_R (n s_0)^{1/4} \bigg\}, \\ 
\label{eq:omega2}
\Omega_2 = &\bigg\{ \sup_{\delta \in \cR} \; \frac{|\epsilon^\top
  \delta|}{\|\delta\|_2} \le \gamma c_S \sqrt{s_0} \bigg\},  
\end{align}
where $\gamma > 1$ is parameter free to vary in our analysis,  
$c_I, c_R>0$ are the constants in Lemmas \ref{lem:GCinterp}, 
\ref{lem:GCresid}, and $c_S > 0$ is a constant to be determined
below. Focusing on the third event, we will lower bound its
probability by applying \Fref{thm:van} to \smash{$\cX=\cR \cap \{
  \delta : \|\delta\|_n \leq 1\}$}. Note that 
\begin{equation*}
\log N(r, \cR \cap \{ \delta : \|\delta\|_n \leq 1\}, \|\cdot\|_n)
\leq (s_0+1) \log(3/r),
\end{equation*}
as $\cR$ is $(s_0+1)$-dimensional, and it is well-known that in
$\R^d$, the number of balls of radius $r$ that are needed to cover 
the unit ball is at most \smash{$(3/r)^d$}.  The quantity
$t_0$ in \Fref{thm:van} may be taken to be
\smash{$t_0 = \inf\{r : (s_0+1) \log(3/r) \leq C_3C_1^2\} = 
  3\exp(-C_3C_1^2/(s_0+1))$}. The restrictions on
$t$ are hence $t > C_1$, as well as  
\begin{equation*}
t > C_2 \int_0^{t_0} \sqrt{(s_0+1) \log(3/r)} \, dr.
\end{equation*}
But, writing $\erf(\cdot)$ for the error function, 
\begin{equation*}
C_2 \int_0^{t_0} \sqrt{(s_0+1) \log(3/r)} \, dr
= (\sqrt{s_0+1}) \cdot 3C_2 \bigg[r \sqrt{\log{\frac{1}{r}}} - \half   
\erf\bigg(\sqrt{\log{\frac{1}{r}}} \bigg) \bigg] \bigg|_0^{t_0/3} \leq
C_2 \sqrt{s_0},
\end{equation*}
where the constant $C_2>0$ is adjusted to be larger, as needed. 
Let us define $c_S=\max\{C_1,C_2\}$ and $C_S=C_4$. Then we have by 
\Fref{thm:van}, for \smash{$t=\gamma c_S \sqrt{s_0}$} and any $\gamma 
> 1$,  
\begin{equation}
\label{eq:prob_omega2}
1 - 2\exp(-C_S \gamma^2 c_S^2 s_0) \leq 
\P \bigg( \sup_{\delta \in \cR} \; 
\frac{|\epsilon^\top \delta|}{\sqrt{n} \|\delta\|_n} \leq \gamma c_S
\sqrt{s_0} \bigg) = \P \bigg( \sup_{\delta \in \cR} \;  
\frac{|\epsilon^\top \delta|}{\|\delta\|_2} \leq \gamma c_S \sqrt{s_0}
\bigg)  = \P(\Omega_2). 
\end{equation}

\bigskip
\noindent
\textbf{Controlling \smash{$\hat\delta$}.} 
Comparing the objective in \eqref{eq:fl} at \smash{$\htheta = \Proj_0
  \htheta + \Proj_1 \htheta = \theta_0 + \hat \delta + \hx$} and at 
\smash{$\theta_0 + \hx$}, we have
\begin{equation*}
\| \hat \delta + \hx - \epsilon \|_2^2 + \lambda \| \op \htheta \|_1
\le \| \hx - \epsilon \|_2^2 + \lambda \| \op (\theta_0 + \hx)\|_1,   
\end{equation*}
and by rearranging terms we obtain our basic inequality,
\begin{equation}
\label{eq:basic_ineq_d}
\| \hat \delta \|_2^2 \le 2 \hat \delta^\top \epsilon + \lambda \Big(
  \|\op (\theta_0 + \hx)\|_1 - \| \op \htheta \|_1 \Big),  
\end{equation}
which follows from the fact that \smash{$\hat\delta^\top \hx = 0$}
(as they lie in orthogonal subspaces).
Furthermore, since \smash{$\htheta =  \theta_0 + \hat \delta + \hx$},
\begin{equation*}
\|\op (\theta_0 + \hx)\|_1 - \| \op \htheta \|_1 
\leq \|\op \hat \delta \|_1 = \| \op_{S_0} \hat \delta \|_1, 
\end{equation*}
where we used the triangle inequality, and the
fact that \smash{$\op_{-S_0} \hat\delta = 0$}. 
So from our basic inequality in \eqref{eq:basic_ineq_d}, we have that 
\begin{equation*}
\| \hat\delta \|_2^2 \le 2 \hat \delta^\top \epsilon + \lambda
\|\op_{S_0} \hat\delta \|_1,
\end{equation*}
and dividing by \smash{$\|\hat\delta \|_2$}, we get 
\begin{equation*}
\|\hat\delta \|_2 \le 2 \frac{|\hat \delta^\top \epsilon|}{\|\hat 
\delta\|_2} + \lambda \frac{\|\op_{S_0} \hat \delta \|_1}{\|\hat
\delta\|_2}. 
\end{equation*}
Now observe that
\begin{equation*}
\| \op_{S_0} \hat \delta \|_1 = \sum_{i=1}^{s_0} |\hat
\delta_{t_{i+1}} - \hat \delta_{t_i}| 
\le 2 \sum_{i=1}^{s_0+1} |\hat \delta_{t_i}| 
\le 2 \sqrt{(s_0+1)
\sum_{i=1}^{s_0+1} {\hat\delta_{t_i}}^2} 
\le 4 \sqrt{s_0 \sum_{i=1}^{s_0+1} \frac{t_i
- t_{i-1}}{W_n} {\hat\delta_{t_i}}^2} = 4\sqrt{\frac{s_0}{W_n}}  
\| \hat\delta \|_2.
\end{equation*}
The second inequality used Cauchy-Schwartz, and the last
equality used that \smash{$\hat\delta$} is piecewise
constant on the blocks \smash{$B_0,\ldots,B_{s_0}$}, as
\smash{$\hat\delta \in \cR = \spa \{\one_{B_0}, \ldots,
  \one_{B_{s_0}}\}$}.  Hence, on the event $\Omega_2$, we have 
\begin{equation}
\label{eq:delta_bound}
\| \hat\delta \|_2 \le 2 \gamma c_S \sqrt{s_0}  + 4 \lambda  
\sqrt{\frac{s_0}{W_n}}. 
\end{equation}

\bigskip
\noindent
\textbf{Controlling $\hx$.}
We can establish our next basic inequality by comparing the objective  
in \eqref{eq:fl} at \smash{$\htheta$} and \smash{$\theta_0 +
  \hat\delta$}, 
\begin{equation*}
\| \hx + \hat\delta - \epsilon \|_2^2 + \lambda \| \op \htheta \|_1 
\le \| \hat\delta - \epsilon \|_2^2 + \lambda \|\op
(\theta_0+\hat\delta) \|_1,  
\end{equation*}
or, rearranged,
\begin{align}
\nonumber
\|\hx\|_2^2 & \le 2 \epsilon^\top \hx + \lambda \Big( \|\op_{S_0} 
(\theta_0+\hat\delta )\|_1 - \|\op_{S_0} \htheta\|_1 - \|\op_{-S_0}
\hx\|_1 \Big) \\    
\label{eq:basic_ineq_x}
&\le 2 \epsilon^\top \hx + \lambda \Big(  
\|\op_{S_0} \hx \|_1 - \|\op_{-S_0} \hx\|_1  \Big),
\end{align}
where the first line used \smash{$\hx^\top \hat\delta=0$} and
\smash{$\op_{-S_0} \theta_0 = \op_{-S_0} \hat\delta=0$}, and the
second used \smash{$\htheta = \theta_0 + \hat\delta + \hx$} and the
triangle inequality.  

Decompose \smash{$\hx=\hz+\hw$}, where
\smash{$\hz \in \cM$} is the lower interpolant to \smash{$\hx$}, as 
defined in \Fref{lem:interp}, and \smash{$\hw = \hx -
  \hz$} is the remainder. Combining the basic inequality in
\eqref{eq:basic_ineq_x} with \eqref{eq:Zcond1} and \eqref{eq:Zcond2}
from \Fref{lem:interp},
\begin{align}
\nonumber
\| \hx \|_2^2 &\leq 2 \epsilon^\top \hz + 2 \epsilon^\top \hw +
\lambda \Big( \|\op_{S_0} \hz \|_1 - \|\op_{-S_0} \hz\|_1
- \|\op_{-S_0} \hw \|_1 \Big) \\  
\label{eq:basic_ineq_x2}
&\le 2 \epsilon^\top \hz + 4 \lambda \sqrt{\frac{s_0}{W_n}} \|\hz \|_2 
+ 2 \epsilon^\top \hw - \lambda \|\op_{-S_0} \hw \|_1.
\end{align}
On the event $\Omega_0$ in \eqref{eq:omega0}
\begin{equation*}
\epsilon^\top \hz \le \gamma c_I 
\sqrt{(\log{s_0}+\log\log{n}) s_0 \log{n}} \| \hz \|_2.  
\end{equation*}
Further, on the event $\Omega_1$ in \eqref{eq:omega1}, 
since \smash{$P_1 \hw \in \cR^\perp$}, \smash{$\| \op_{-S_0} P_1\hw
  \|_1 = \| \op_{-S_0} \hw \|_1$}, and \smash{$\| P_1 \hw \|_2 \le \|
  \hw \|_2$},
\begin{equation*}
  \epsilon^\top P_1 \hw 
  \le \gamma c_R (n s_0)^{1/4} \| \op_{-S_0} \hw 
  \|_1^{1/2} \| \hw \|_2^{1/2},
\end{equation*}
Also, on the event $\Omega_2$ in \eqref{eq:omega2}, since 
$P_0 \hw \in \cR$, 
\begin{equation*}
  \epsilon^\top P_0 \hw \le \gamma c_S \sqrt{s_0} \| \hw \|_2. 
\end{equation*}
Hence, on the event $\Omega_0 \cap \Omega_1 \cap \Omega_2$, combining
the last three displays with \eqref{eq:basic_ineq_x2},
\begin{multline}
\label{eq:basic_ineq_x3}
\| \hx \|_2^2 \le 2\bigg(\gamma c_I 
\sqrt{(\log{s_0}+\log\log{n}) s_0 \log{n}} + 
2\lambda \sqrt{\frac{s_0}{W_n}} \bigg) \| \hz \|_2 + 2\gamma 
c_S \sqrt{s_0} \| \hw \|_2 + {} \\ 
2\gamma c_R (n s_0)^{1/4} \| \op_{-S_0} \hw \|_1^{1/2} \| \hw
\|_2^{1/2} -  \lambda \| \op_{-S_0} \hw \|_1.
\end{multline}
Consider the first case in which
\smash{$2\gamma c_R (n s_0)^{1/4} \|\op_{-S_0} \hw \|_1^{1/2} \|
  \hw \|_2^{1/2} \ge \lambda \|\op_{-S_0} \hw \|_1$}. Then 
\begin{equation*}
\|\op_{-S_0} \hw \|_1 \le 4\bigg( \frac{\gamma c_R}{\lambda} \bigg)^2 
\sqrt{n s_0} \| \hw \|_2,
\end{equation*}
and from \eqref{eq:basic_ineq_x3}, on the event $\Omega_0 \cap
\Omega_1 \cap \Omega_2$, 
\begin{equation}
\label{eq:x_bound}
\| \hx \|_2 \le 2\gamma c_I \sqrt{(\log{s_0}+\log\log{n}) s_0 \log{n}}
+ 4 \lambda \sqrt{\frac{s_0}{W_n}} + 2\gamma c_S\sqrt{s_0} +
\frac{4\gamma^2 c_R^2 \sqrt{n s_0}}{\lambda}.
\end{equation}
where in the above we used \eqref{eq:Zcond3}.
In the case that
\smash{$2\gamma c_R (n s_0)^{1/4} \|\op_{-S_0} \hw \|_1^{1/2} \|
  \hw \|_2^{1/2} < \lambda \|\op_{-S_0} \hw \|_1$}, 
we have from \eqref{eq:basic_ineq_x3}, on the event $\Omega_0 \cap
\Omega_1 \cap \Omega_2$,  
\begin{equation*}
\| \hx \|_2 \le 2\gamma c_I \sqrt{(\log{s_0}+\log\log{n}) s_0 \log{n}}
+ 4 \lambda \sqrt{\frac{s_0}{W_n}} + 2\gamma c_S \sqrt{s_0}. 
\end{equation*}
Therefore, the bound \eqref{eq:x_bound} always holds on the event
$\Omega_0 \cap \Omega_1 \cap \Omega_2$.

\bigskip
\noindent
\textbf{Putting it all together.}
As \smash{$\| \htheta - \theta_0 \|_2 \le \| \hx \|_2 + \| \hat
  \delta \|_2$}, combining \eqref{eq:x_bound} and
\eqref{eq:delta_bound}, we see that  
\begin{equation*}
  \| \htheta - \theta_0 \|_2 \le
4 \gamma c_S \sqrt{s_0} +
8 \lambda \sqrt{\frac{s_0}{W_n}} +
2 \gamma c_I \sqrt{(\log{s_0}+\log\log{n}) s_0 \log{n}} +  
\frac{4\gamma^2 c_R^2 \sqrt{n s_0}}{\lambda},
\end{equation*}
on the event $\Omega_0 \cap \Omega_1 \cap \Omega_2$.
We see that there exists a constant $c>0$,
such that for large enough $n$, and any $\gamma > 1$,  
\begin{equation}
\label{eq:final_bd}
\| \htheta - \theta_0 \|_2^2 \le \gamma^4 c s_0
\Bigg( (\log{s_0}+\log\log{n}) \log{n} + 
\frac{\lambda^2}{W_n} + \frac{n}{\lambda^2} \Bigg),   
\end{equation}
on the event $\Omega_0 \cap \Omega_1 \cap \Omega_2$. Furthermore,
using the union bound along with Lemmas \ref{lem:GCinterp},
\ref{lem:GCresid}, and \eqref{eq:prob_omega2}, we find that
\begin{multline*}
\P \big((\Omega_0 \cap \Omega_1 \cap \Omega_2)^c \big) 
\leq 2\exp\big(-C_I \gamma^2 c_I^2(\log{s_0}+\log\log{n})\big) 
+ {} \\ 2\exp(-C_R \gamma^2 c_R^2 \sqrt{s_0}) +
2\exp(-C_S \gamma^2 c_S^2 s_0) \leq \exp(-C \gamma^2),  
\end{multline*}
for an appropriately defined constant $C>0$.  Optimizing the 
bound in \eqref{eq:final_bd} to choose the tuning parameter
$\lambda$ yields \smash{$\lambda=(nW_n)^{1/4}$}. Plugging this in
gives the final result.
\end{proof}

\section{Proofs of Lemmas \ref{lem:interp}, \ref{lem:GCinterp},
  \ref{lem:GCresid}}  
\label{app:lemma_james}

\begin{proof}[Proof of \Fref{lem:interp}]
We give an explicit construction of a lower interpolant $z \in \cM$  
to $x$, given the changepoints
\smash{$0=t_0<\ldots < t_{s_0+1}=n$}. We will use the notation
\smash{$a_+=\max\{0,a\}$}.  
For $i = 0,\ldots,s_0$, define \smash{$z^{(i+)} \in 
\R^{t_{i+1} - t_{i}}$} by setting \smash{$g_i^+ = \sign(x_{t_i})$} and  
\begin{equation*}
  z^{(i+)}_j = g_i^+ \cdot \min 
\Big\{ (g_i^+  x_{t_i+1})_+, \ldots, (g_i^+  x_{t_i + j})_+ \Big\},
\quad j = 1,\ldots, t_{i+1} - t_i. 
\end{equation*}
Similarly, define \smash{$z^{(i-)}\in \R^{t_{i+1} - t_{i}}$} by
setting \smash{$g_i^- = \sign( x_{t_{i+1}-1})$} and
\begin{equation*}
z^{(i-)}_j = g_i^- \cdot \min 
\Big\{ (g_i^-  x_{t_{i} + j})_+, \ldots, (g_i^- x_{t_{i+1}})_+ \Big\},
\quad j = 1,\ldots, t_{i+1} - t_i.  
\end{equation*}
Note that \smash{$z^{(i+)}_1 = x_{t_i+1}$} and
\smash{$z^{(i-)}_{t_{i+1}-t_i} = x_{t_{i+1}}$}; also, 
$\{|z^{(i+)}_j|\}_{j=1}^{t_{i+1}-t_i}$ is a nonincreasing sequence,
and \smash{$\{|z^{(i-)}_j|\}_{j=1}^{t_{i+1}-t_i}$} is nondecreasing.  
Furthermore,
\[
\sign \big(z^{(i+)}_1\big) \cdot \sign \big(z^{(i+)}_j\big) \geq 0
\quad \text{and} \quad 
\sign \big(z^{(i-)}_{t_{i+1}-t_i}\big) \cdot \sign
\big(z^{(i-)}_j\big) \geq 0, \qquad j = 1,\ldots, t_{i+1} - t_i. 
\]
Lastly, notice that there exists a point \smash{$j'\in
  1,\ldots,t_{i+1}-t_i-1$} (not necessarily unique) such that 
\begin{align}
\min_{k\in \{1,\ldots, t_{i+1}-t_i\}}\big|z^{(i+)}_{k}\big| &= 
\big|z^{(i+)}_{j'+1}\big| = \big|z_{j}^{(i+)}\big|, \qquad j = j'+1,
\ldots, t_{i+1}-t_i, \label{eq:minz1}\\ 
\min_{k\in \{1,\ldots, t_{i+1}-t_i\}}\big|z^{(i-)}_{k}\big| &= 
\big|z^{(i-)}_{j'}\big| = \big|z_{j}^{(i-)}\big|, \qquad j = 1,
\ldots, j'. \label{eq:minz2} 
\end{align}
We construct by \smash{$z_{t_i + j} = z_j^{(i+)}$} for $j =
1,\ldots,j'$, and \smash{$z_{t_i + j} = z_j^{(i-)}$} for $j =
j'+1,\ldots,t_{i+1}-t_i$. Letting $t'_i = t_i + j'$ and repeating 
this process for $i = 0,\ldots, s_0$, we have constructed $z \in 
\cM$. 

We now verify the claimed properties for the constructed lower
interpolant $z$. For $i=0,\ldots, s_0$, and any 
$j = 1,\ldots, t_{i+1} - t_i$, we have 
\begin{align}
\label{eq:zcond1}
&\sign( z_{j}^{(i+)}) \cdot \sign( x_{t_i + j}) \ge 0, \\ 
\label{eq:zcond2}
&|  z_{j}^{(i+)} | \le |  x_{t_i + j} |,
\end{align}
Further, for any $j = 1,\ldots,t_{i+1} - t_i - 1$,
\begin{align}
\label{eq:diffzcond1}
&\sign\Big( (\op  z^{(i+)})_j \Big) \cdot \sign \big( (\op
  x)_{t_i + j} \big) \ge 0,\\ 
\label{eq:diffzcond2}
&\Big| (\op  z^{(i+)})_j \Big| \le \big| (\op  x)_{t_i + j}
\big|. 
\end{align}
To see why \eqref{eq:diffzcond1} holds, note that
\smash{$\sign (\op z^{(i+)})_j \in \{-1,0\}$}, \smash{$(\op
  z^{(i+)})_j < 0$} 
imply \smash{$(\op (g_i^+ x)_+)_{t_i + j} < 0$}. 
To see why \eqref{eq:diffzcond2} holds, 
if \smash{$(\op z^{(i+)})_j \ne 0$}, then we know that
\begin{equation*}
|z^{(i+)}_{j+1} - z^{(i+)}_{j}| \leq 
\Big|\min\Big\{(g_i^+x_{t_i+j+1})_+, (g_i^+x_{t_i+j})_+\Big\}
- (g_i^+x_{t_i+j})_+\Big| \leq |x_{t_i+j+1} - x_{t_i+j}|,
\end{equation*}
where we used the observation that 
\smash{$|\min\{a,b\} - b| \geq |\min\{a,b,c\} - \min\{b,c\}|$}. 

It can be shown by nearly equivalent steps that \smash{$z^{(i-)}$},
and $z$ both satisfy properties analogous to 
\eqref{eq:zcond1}--\eqref{eq:diffzcond2}. 
Using \eqref{eq:zcond1} and \eqref{eq:zcond2} on $z$ gives  
\eqref{eq:Zcond3}.
Using \eqref{eq:diffzcond1} and \eqref{eq:diffzcond2} on
$z$ gives \eqref{eq:Zcond1} (note that
if $\sign(a) = \sign(b)$ and $|a|>|b|$, then $|a| = |b| +
  |a-b|$). Because \smash{$z_{t_i+1} =  x_{t_i+1}$} and 
\smash{$z_{t_{i+1}} =  x_{t_{i+1}}$} for 
all $i=0,\ldots,s_0$, we have the equality in \eqref{eq:Zcond2} (since  
\smash{$\op_{t_i}z = z_{t_i+1} - z_{t_i} = x_{t_i+1} - x_{t_i} =
  \op_{t_i}x$}). 

Finally, for each $i = 0,\ldots,s_0$, define
\smash{$t''_i = t'_i$ if $|z_{t'_i}| \geq |z_{t'_i+1}|$} and $t''_i =
t'_i+1$ otherwise.  Observe that by \eqref{eq:minz1} and
\eqref{eq:minz2}, it holds that \smash{$|z_{t''_i}| = \min_{j =
    1,\ldots, t_{i+1}-t_i} |z_{t_i+j}|$}.
The inequality in \eqref{eq:Zcond2} is finally established by the 
following chain of inequalities:
\begin{align*}
\| \op_{S_0}  z\|_1 &= \sum_{i=1}^{s_0} | z_{t_i+1} -  z_{t_i}| \le
\sum_{i=1}^{s_0} | z_{t_i+1}| + | z_{t_{i}}| \\ 
  &= \sum_{i=1}^{s_0} \big(| z_{t_i+1}| - | z_{t''_i}|\big) + \big(|
  z_{t_{i}}| - | z_{t''_{i-1}}|\big) + | z_{t''_{i-1}}|+ |
  z_{t''_i}| \\ 
& \le \| \op_{-S_0}  z \|_1 + 2 \sum_{i=0}^{s_0} | z_{t''_i}| 
\le \| \op_{-S_0}  z \|_1 + 4\sqrt{\frac{s_0}{W_n}} \|  z\|_2, 
\end{align*}  
where in the second inequality, we used $|a| - |c| \leq |a-c| \leq
|a-b| + |b-c|$, and in the last inequality, we used the above property
of $z_{t''_i}$ and 
\[
\sum_{i=0}^{s_0} | z_{t''_i}| \leq 2\sqrt{s_0} \sqrt{\sum_{i=0}^{s_0} |
z_{t''_i}|^2}
\leq 2\sqrt{s_0 \sum_{i=0}^{s_0} \frac{t_{i+1}-t_i}{W_n} z_{t''_i}^2}
\leq 2\sqrt{\frac{s_0}{W_n}}\|z\|_2.
\]
This completes the proof.
\end{proof}

\begin{proof}[Proof of \Fref{lem:GCinterp}]
We consider $\epsilon \in \R^n$, an i.i.d.\ sub-Gaussian vector as 
referred to in the statement of the lemma, and arbitrary $z \in \cM$. 
In this proof, we will also consider $E(t)$ and $Z(t)$,
real-valued functions over $[0,n]$, constructed so that
\smash{$E(t) = \epsilon_{\lceil t \rceil}$} for all $t$ (i.e.,
$E(t)$ is a step function), $Z(t) = z_t$ for
$t=1,\ldots,n$, and $Z(t)$ is smooth and monotone over $(t_i, t'_i]$  
and $(t'_i,t_{i+1}]$ for $i=0,\ldots,s_0$.  These functions will also
satisfy the boundary conditions $E(0) = \epsilon_1$ and
$Z(0) = z_1$.  

Let \smash{$F(t) = \int_0^t E(u) \, du$}. 
As $\epsilon$ is random, $E(t)$ and $F(t)$ are also random.  
It can be shown that there exists 
constants $c_I,C_I>0$ such that for any $\gamma > 1$,
\begin{multline}
\label{eq:claim}
\P \left( \frac{|F(t) - F(t_i)|}{\sqrt{|t - t_i|}} \le\gamma  c_I
\sqrt{\log{s_0}+\log\log{n}}, \;
\text{for $t \in (t_i,t_{i+1}]$, $i=0,\ldots,s_0$} \right) \\ \geq 1 -
2 \exp \big(-C_I \gamma^2 c_I^2 (\log{s_0}+\log\log{n}) \big).    
\end{multline}
So as not to distract from the main flow of ideas, we now proceed 
to prove \Fref{lem:GCinterp}, and we later provide a proof of 
\eqref{eq:claim}. Let $\Omega_3$ denote the event in consideration on 
the left-hand side of \eqref{eq:claim}. By integration by parts, 
\begin{equation*}
\int_{t_i}^{t'_i} E(t) Z(t) \, dt = Z(t'_i) (F(t'_i) - F(t_i))  -
\int_{t_i}^{t'_i} Z'(t) (F(t) - F(t_i)) \, dt
\end{equation*}
where \smash{$Z'(t) = \frac{d}{dt}Z(t)$}.
Thus, on the event $\Omega_3$,
\begin{equation}
\label{eq:inner1}
\left|\int_{t_i}^{t_i'} E(t) Z(t) \, dt \right| \leq \gamma c_I 
\sqrt{\log{s_0}+\log\log{n}}
\left(|Z(t_i')| \sqrt{t_i' - t_i} + \left| \int_{t_i}^{t_i'} Z'(t)
    \sqrt{t - t_i} \, dt \right| \right),  
\end{equation}
since $Z'$ does not change sign within the intervals $(t_i,t_i'], 
(t_i',t_{i+1}]$ (as $z \in \cM$). 
For $n$ large enough, we can upper bound the last term in 
\eqref{eq:inner1} as follows
\begin{equation}
\label{eq:decomp}
\left|\int_{t_i}^{t_i'} Z'(t) \sqrt{t - t_i} \, dt \right| =  
\left|\int_{t_i}^{t_i+n^{-1}} Z'(t) \sqrt{t - t_i} \, dt \right|
+ \left|\int_{t_i+n^{-1}}^{t_i'} Z'(t)\sqrt{t - t_i} \, dt \right|. 
\end{equation}
Using integration by parts and the triangle inequality, on the
second term in \eqref{eq:decomp},
\begin{equation}
\label{eq:inner2}
\left|\int_{t_i + n^{-1}}^{t_i'} Z'(t) \sqrt{t - t_i} \, dt \right| = 
|Z(t_i')| \sqrt{t_i' - t_i} + \left| \frac{Z(t_i + n^{-1})}{\sqrt n}
\right| +  \half \left| \int_{t_i
  + n^{-1}}^{t_i'} \frac{Z(t)}{\sqrt {t - t_i}} \, dt \right|.
\end{equation}
By Cauchy-Schwartz on the last term in \eqref{eq:inner2}, 
\begin{align}
\nonumber
 \left| \int_{t_i + n^{-1}}^{t_i'} \frac{Z(t)}{\sqrt {t - t_i}} \, dt 
 \right| &\le \left( \int_{t_i + n^{-1}}^{t_i'} Z(t)^2 \, dt
 \right)^{1/2} \left( \int_{t_i + n^{-1}}^{t_i'} \frac{1}{t - t_i} \,
   dt \right)^{1/2} \\ 
\label{eq:inner3} &\le \left( \int_{t_i + n^{-1}}^{t_i'} Z(t)^2 \, dt   
\right)^{1/2} \sqrt{2 \log n}.    
\end{align}
Now examining the first term in \eqref{eq:decomp},
\begin{equation*}
  \left| \int_{t_i}^{t_i+n^{-1}} Z'(t) \sqrt{t - t_i} \, dt \right| \le
  n^{-1/2} \left| \int_{t_i}^{t_i+n^{-1}} Z'(t) \, dt \right| =
  \frac{|Z(t_i+n^{-1}) - Z(t_i)|}{\sqrt n}.
\end{equation*}
But because we only require $Z$ to be a piecewise monotonic and smooth
interpolant then we are at liberty to make $Z(t_i + n^{-1}) = Z(t_i)$, forcing
this term to be $0$. 
In order to bound $Z(t_i')$, notice that because $|Z(t)|$ is
non-increasing over the interval $(t_i,t_i']$ we have that  
\begin{equation}
\label{eq:inner4}
  Z(t'_i)^2 |t_i' - t_i| \le \int_{t_i}^{t_i'} Z(t)^2 \, dt.
\end{equation}
Combining \eqref{eq:inner1}--\eqref{eq:inner4}, we have that on the
event $\Omega_3$, 
\begin{equation}
\label{eq:first_half_bd}
\left|\int_{t_i}^{t_i'} E(t) Z(t) \, dt \right| \leq \alpha_n
\left(2 + \sqrt{ \frac{\log n}{2} } \right) \left(\int_{t_i}^{t_i'}
  Z(t)^2 \, dt \right)^{1/2} + \alpha_n \frac{|Z(t_i)|}{\sqrt n}.
\end{equation}
where we have abbreviated \smash{$\alpha_n=\gamma
  c_I\sqrt{\log{s_0}+\log\log{n}}$}. 
Through nearly identical steps we can show that on the event
$\Omega_3$, 
\begin{equation}
\label{eq:second_half_bd}
\left|\int_{t_i'}^{t_{i+1}} E(t) Z(t) \, dt \right| \leq \alpha_n
\left(2 + \sqrt{ \frac{\log n}{2} } \right)
\left(\int_{t'_i}^{t_{i+1}} Z(t)^2 \, dt \right)^{1/2} + \alpha_n
\frac{|Z(t_{i+1})|}{\sqrt n}. 
\end{equation}
Therefore
\begin{align}
\nonumber
\left|\int_{0}^{n} E(t) Z(t) \, dt \right| &\le \sum_{i=0}^{s_0}  
\left( \, \left|\int_{t_i}^{t_i'} E(t) Z(t) \, dt \right| +
\left|\int_{t_i'}^{t_{i+1}} E(t) Z(t) \, dt \right| \, \right) \\ 
\label{eq:all_bd}
&\leq \alpha_n \sqrt{2s_0 + 2} \left(2 + 
\sqrt{ \frac{\log n}{2}} \right) \left( \int_0^n Z(t)^2 \, dt
\right)^{1/2} + 2 \alpha_n \frac{\| z \|_1}{\sqrt n}, 
\end{align}
where in the second line we applied \eqref{eq:first_half_bd},
\eqref{eq:second_half_bd}, and the Cauchy-Schwartz inequality.  
Because we can choose $Z(t)$ to be arbitrarily close to
\smash{$z_{\lceil t \rceil}$} over all $t$, the integral 
\smash{$( \int_0^n Z(t)^2 \, dt)^{1/2}$} is approaching $\|z\|_2$ and  
\smash{$\int_0^n E(t) Z(t) \, dt$} is approaching $\epsilon^\top z$.
Furthermore, because \smash{$\| z \|_1 \le \sqrt n \| z \|_2$}, the
first term in \eqref{eq:all_bd} dominates. Hence on the event
$\Omega_3$, we have established yet 
\begin{equation*}
|\epsilon^\top z | \leq \gamma c_I \sqrt{(\log{s_0}+\log\log{n})
s_0 \log{n}} \|z\|_2,
\end{equation*}
where the constant $c_I$ is adjusted to be larger, as needed. Noting
that the event $\Omega_3$ does not depend on $z$, the result follows.  
\end{proof}

\begin{proof}[Proof of claim \eqref{eq:claim}]
We will construct a covering for \smash{$\cV = \cup_{i=0}^{s_0}
  \cV_i$}, where for each $i = 0,\ldots,s_0$, 
\begin{equation*}
\cV_i = \bigg\{ \sqrt{\frac{n}{|A|}} \one_A
: A = \{t_i, \ldots, t\}, \; t = t_i + 1, \ldots, n \bigg\} 
\;\cup\; \bigg\{ \sqrt{\frac{n}{|A|}} \one_A
: A = \{t, \ldots, t_i\}, \; t = 1, \ldots, t_i-1 \bigg\}.
\end{equation*}
Note that our scaling is such that, for any 
\smash{$a = \sqrt{n/|A|}  \one_A$}, where $A \subseteq \{1,\ldots,n\}$,
we have $\|a\|_n=1$. Further, for any other 
\smash{$b = \sqrt{n/|B|}\one_B$}, where $B \subseteq \{1,\ldots,n\}$,
we have  
\begin{equation}
\label{eq:ab_norm}
\| a - b \|_n^2 = \frac{|A \cap B|}{(\sqrt{|A|}-\sqrt{|B|})^2}
+ \frac{|A\setminus B|}{|A|} + \frac{|B\setminus A|}{|B|}
= 2\bigg(1 - \frac{|A \cap B|}{\sqrt{|A||B|}}\bigg),
\end{equation}

We first construct a covering for each set $\cV_i$, $i=0,\ldots,s_0$,
restricting our attention to a radius \smash{$0<r<\sqrt{2}$}.  
Let \smash{$\alpha = \lceil (1 - r^2/2)^{-2} \rceil$}, and consider
the set   
\begin{multline*}
\cC_i = \left\{
\sqrt{\frac{n}{|A|}} \one_A : A = \big\{t_i,\ldots,
\min\{t_i + \alpha^j,n\}\big\}, \; j = 1,\ldots, 
\lceil \log n / \log \alpha\rceil\right\} \\ 
\;\cup\; \left\{
\sqrt{\frac{n}{|A|}} \one_A : A = \big\{\max\{t_i-\alpha^j,1\},\ldots,    
t_i\big\},\;  j = 1,\ldots, \lceil \log n / \log \alpha\rceil
\right\}. 
\end{multline*}
Here, the set $\cC_i$ has at most \smash{$2 \lceil 
  \log{n}/\log{\alpha} \rceil \leq 4 \log n/\log \alpha$} elements,
and by \eqref{eq:ab_norm}, balls of radius $r$ around elements
in $\cC_i$ cover the set $\cV_i$.  This establishes that 
\begin{equation}
\label{eq:vi_cover_bd}
N(r,\cV_i,\|\cdot\|_n) \leq \frac{-2\log n}{\log (1 - r^2/2)}.
\end{equation}

For a radius \smash{$0<r<\sqrt{2}$}, the covering number 
for \smash{$\cV = \cup_{i=0}^{s_0} \cV_i$} can be obtained
by just taking a union of the covers in \eqref{eq:vi_cover_bd} over
$i=0,\ldots,s_0$, giving 
\begin{equation} 
\label{eq:v_cover_bd}
N(r,\cV,\|\cdot\|_n) \leq \sum_{i=0}^{s_0} N(r,\cV_i,\|\cdot\|_n)  
\leq 2(s_0+1)\bigg(\frac{-\log n}{\log(1-r^2/2)} \bigg).
\end{equation}
Using \eqref{eq:ab_norm} once more, the diameter of the set $\cV$ 
is \smash{$\sqrt{2}$}, hence if \smash{$r\geq  1/\sqrt{2}$}, then    
we need only 1 ball to cover $\cV$.  Combining this fact with
\eqref{eq:v_cover_bd}, we obtain
\begin{equation} 
\label{eq:v_cover_bd_all_r}
N(r,\cV,\|\cdot\|_n) \leq 
\begin{cases}
\displaystyle
2(s_0+1)\bigg(\frac{-\log n}{\log(1-r^2/2)} \bigg)
& \text{if $0 < r <  1/\sqrt{2}$} \\  
1 & \text{if $r \geq  1/\sqrt{2}$} 
\end{cases}.
\end{equation}

Now let us apply \Fref{thm:van}, with $\cX=\cV$.  First, we remark 
that the quantity $t_0$ in \Fref{thm:van} may be taken to be
\smash{$t_0=1/\sqrt{2}$}. The bounds on $t$ in the theorem are $t > 
C_1$, as well as  
\begin{equation*}
t > C_2 \int_0^{1/\sqrt{2}} \sqrt{\log\left(
2(s_0+1)\frac{-\log{n}}{\log(1 - r^2/2)}\right)} \, dr. 
\end{equation*}
Next, we know that the right-hand side above is upper bounded by
\begin{multline*}
C_2 \int_0^{1/\sqrt{2}} \left[\sqrt{\log \big(2(s_0+1)\log{n}\big) }  
+  \sqrt{\log\left(\frac{-1}{\log(1 - r^2/2)}\right)} \; \right] \,
dr \\ =  C_2\sqrt{\frac{\log \big(2(s_0+1)\log{n}\big)}{2}} \, 
+ C_2 \sqrt{2} \int_0^{1/2} \sqrt{ \log \left(
\frac{1}{\log\big(\frac{1}{1-x^2}\big)}\right)} \, dx.
\end{multline*}
One can verify that the the integral in the second term above
converges to a finite constant (upper bounded by 1 in fact).
Thus the entire expression above is upper bounded by 
\smash{$C_2 \sqrt{\log{s_0}+\log\log{n}}$}, where the constant $C_2>0$ 
is adjusted to be larger, as needed. Therefore, letting
$c_I=\max\{C_1,C_2\}$, we may restrict our attention to 
\smash{$t > c_I \sqrt{\log{s_0}+\log\log{n}}$} in \Fref{thm:van},
and letting $C_I=C_4$, the conclusion reads, for $t=\gamma c_I$ and  
$\gamma > 1$,       
\begin{equation*}
\P \bigg( \sup_{a\in \cV} \;
\frac{\epsilon^\top a}{\sqrt{n}} > \gamma c_I
\sqrt{\log{s_0}+\log\log{n}} \bigg) \leq 2\exp\big(-C_I\gamma^2c_I^2
(\log{s_0}+\log\log{n}) \big).
\end{equation*}
Recalling the form of
\smash{$a=\sqrt{n/|A|} \one_A \in \cV$}, the above may be rephrased as  
\begin{multline}
\label{eq:v_bd}
\P \bigg(\frac{\sum_{j=t_i}^t \epsilon_j}{\sqrt{|t - t_i|}} 
> \gamma c_I \sqrt{\log{s_0}+\log\log{n}}, \;
\text{for $t=1,\ldots,n$, $i=0,\ldots,s_0$} \bigg) \\
\leq 2\exp\big(-C_I\gamma^2c_I^2
(\log{s_0}+\log\log{n}) \big).
\end{multline}

Finally, consider the following event
\begin{equation*}
\Omega_4 = \left\{ \frac{|F(t) - F(t_i)|}{\sqrt{|t - t_i|}} \leq
  \gamma c_I \sqrt{\log{s_0}+\log\log{n}}, 
\text{for $t=1,\ldots,n$, $i=0,\ldots,s_0$} \right\}.   
\end{equation*}
Recalling that \smash{$E(t) = \epsilon_{\lceil t \rceil}$} for all 
$t\in[0,1]$, we have \smash{$F(t) = \int_{0}^{t}E(u)\,du =
  \sum_{j=0}^{t} \epsilon_j$} for $t=1,\ldots,n$.
In \eqref{eq:v_bd},
we have thus shown \smash{$\P(\Omega_4) \geq
1- 2\exp(-C_I\gamma^2 c_I^2 (\log{s_0}+\log\log{n}))$}.  
Note that $|F(t) - F(t_i)|$ is piecewise linear with knots at
$t=1,\ldots,n$ and \smash{$\sqrt{|t - t_i|}$} is concave in between
these knots, so if \smash{$|F(t) - F(t_i)|/ \sqrt{|t - t_i|} \leq
  \gamma c_I \sqrt{\log{s_0}+\log\log{n}}$} for $t=1,\ldots,n$, then
the same bound must hold over all $t \in [0,n]$.  This shows that
$\Omega_4 \supseteq \Omega_3$, where $\Omega_3$ is the event in
question in the left-hand side of \eqref{eq:claim}; in other words, we
have verified \eqref{eq:claim}.  
\end{proof}

For the proof of \Fref{lem:GCresid}, we will need the following result
from \citet{van1990estimating}.

\begin{lemma}[\textbf{Lemma 3.5 of \citealt{van1990estimating}}]   
\label{lem:localentropy}
Assume the conditions in \Fref{thm:van}, and additionally, 
assume that for some $\zeta \in (0,1)$ and $K>0$,
\[
\cK(r) \leq Kr^{-2\zeta},
\]
where, recall, $\cK(r)$ is a continuous function
upper bounding the entropy number
\smash{$\log N(r,\cX,\|\cdot\|_n)$}. 
Then there exists constants $C_0,C_1$ (depending only on $M,\sigma$ in  
\eqref{eq:subgauss}) such that for any $t \geq C_0$,  
\[
\P\bigg( \sup_{x \in \mathcal{X}} \; \frac{|\epsilon^\top x|}
{\sqrt{n}\|x\|_n^{1-\zeta}} > t \sqrt{K} \bigg) \leq
 \exp (-C_1 t^2 K).
\]
\end{lemma}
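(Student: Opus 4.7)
The plan is to reduce the desired self-normalized bound to the basic Theorem~3.3 by a standard peeling (slicing) argument. Theorem~3.3 directly bounds only $\sup_{\cX}|\epsilon^\top x|/\sqrt n$, not the self-normalized quantity $\sup |\epsilon^\top x|/(\sqrt n \|x\|_n^{1-\zeta})$ that we want; slicing $\cX$ into dyadic shells in $\|x\|_n$ and applying Theorem~3.3 on each shell is the standard device for inserting such a normalization.

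Concretely, I would define shells $\cX_j = \{x \in \cX : 2^{-(j+1)} < \|x\|_n \leq 2^{-j}\}$ for $j=0,1,2,\ldots$ and the rescaled sets $\mathcal{Y}_j = 2^j \cX_j \subseteq \{y : \|y\|_n \leq 1\}$. The polynomial entropy hypothesis transfers cleanly:
\[
\log N(r, \mathcal{Y}_j, \|\cdot\|_n) = \log N(2^{-j}r, \cX_j, \|\cdot\|_n) \leq K(2^{-j}r)^{-2\zeta} = K\,2^{2j\zeta} r^{-2\zeta},
\]
furnishing a valid entropy bound $\cK_j(\cdot)$ for Theorem~3.3 on $\mathcal{Y}_j$. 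Since $\|x\|_n^{1-\zeta} > 2^{-(j+1)(1-\zeta)}$ on $\cX_j$, and $\epsilon^\top x = 2^{-j}\epsilon^\top y$ under the bijection $y = 2^j x$, the shell-wise goal $|\epsilon^\top x|/(\sqrt n \|x\|_n^{1-\zeta}) \leq t\sqrt K$ is implied by $\sup_{y \in \mathcal{Y}_j}|\epsilon^\top y|/\sqrt n \leq s_j$, where $s_j = 2^{-(1-\zeta)} 2^{j\zeta} t \sqrt K$. Feeding $\cK_j$ and $s_j$ into Theorem~3.3, the quantity $t_{0,j} = \inf\{r : \cK_j(r) \leq C_3 s_j^2\}$ turns out to be \emph{independent of $j$} (the $2^{2j\zeta}$ factors cancel) and the integral $C_2 \int_0^{t_{0,j}} \sqrt{\cK_j(r)}\,dr$ is a constant multiple of $s_j$; both restrictions in Theorem~3.3 therefore reduce to a single lower bound $t \geq C_0$ for a threshold $C_0$ depending only on $C_1, C_2, C_3, \zeta$. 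Theorem~3.3 then produces
\[
\P\bigg(\sup_{y \in \mathcal{Y}_j} \frac{|\epsilon^\top y|}{\sqrt n} > s_j\bigg) \leq 2\exp(-C_4 s_j^2) = 2\exp\big(-C_4' 2^{2j\zeta} t^2 K\big),
\]
where $C_4' = C_4\, 2^{-2(1-\zeta)}$.

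The proof finishes by a union bound over $j \geq 0$:
\[
\P\bigg(\sup_{x \in \cX} \frac{|\epsilon^\top x|}{\sqrt n\, \|x\|_n^{1-\zeta}} > t\sqrt K\bigg) \leq 2\sum_{j=0}^\infty \exp\big(-C_4' 2^{2j\zeta} t^2 K\big).
\]
Because $2^{2j\zeta}$ grows exponentially in $j$, the summands decay super-geometrically, so the sum is dominated by a constant multiple of the $j=0$ term, producing the claimed bound $\exp(-C_1 t^2 K)$ after adjusting the constant. The main obstacle is bookkeeping: checking that the threshold $C_0$ and the geometric-series constant can be absorbed uniformly in $j$. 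This works precisely because the slicing is chosen so that $s_j^2$ and the ``effective entropy budget'' $K 2^{2j\zeta}$ of shell $j$ grow at matching rates, making Theorem~3.3's applicability $j$-invariant. No lower cutoff on $\|x\|_n$ is needed, since the super-geometric decay in $j$ harmlessly absorbs arbitrarily small shells.
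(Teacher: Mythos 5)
The paper does not prove this lemma at all: it is imported verbatim as Lemma~3.5 of \citet{van1990estimating} and used as a black box in the proof of \Fref{lem:GCresid}, so there is no in-paper argument to compare against. Your peeling proof is correct and is essentially the canonical derivation of such self-normalized bounds from a uniform bound like \Fref{thm:van} (and is in the spirit of how the result is obtained in the original reference). The key computations all check out: on the shell $\cX_j$ the rescaled entropy budget $K2^{2j\zeta}$ and the squared threshold $s_j^2 = 2^{-2(1-\zeta)}2^{2j\zeta}t^2K$ grow at the same rate, so $t_{0,j}$ and the entropy-integral condition are $j$-invariant, and the union bound over shells converges super-geometrically. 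Two small points worth making explicit in a careful write-up: (i) covering $\cX_j$ by balls centered in $\cX_j$ rather than in $\cX$ costs at most a factor of $2$ in the radius, which only perturbs constants; and (ii) both the requirement $s_j > C_1$ from \Fref{thm:van} and the absorption of the geometric-series constant into $\exp(-C_1 t^2 K)$ implicitly require $t^2 K$ to be bounded away from zero --- this is inherent in the statement itself (otherwise the claimed bound is vacuous or false) and is harmless in the paper's application, where $K$ is a fixed universal constant, but it deserves a sentence. Also, your threshold $C_0$ and the final constant depend on $\zeta$ in addition to $M,\sigma$; the statement suppresses this, presumably treating $\zeta$ as fixed.
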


\begin{proof}[Proof of Lemma \ref{lem:GCresid}]
Recall that for $i=0,\ldots,s_0$, we define $B_i = \{t_i+1, \ldots,  
t_{i+1}\}$. For $i=0,\ldots,s_0$, also define $n_i = |B_i|$, the 
scaled norm \smash{$\|\cdot\|_{n_i} = \|\cdot\|_2/\sqrt{n_i}$},
and
\[
\mathcal{X}_i = \Big\{w^{(i)} \in \R^{n_i} : 
(\one^{(i)})^\top w^{(i)} = 0, \; \| \op^{(i)} w^{(i)}  \|_1 \le 1, \;      
\|w^{(i)} \|_{n_i} \leq 1 \Big\}. 
\]
Here, we write \smash{$\one^{(i)} \in \R^{n_i}$} for the vector of all
1s, and \smash{$D^{(i)} \in \R^{(n_i-1) \times n}$} for the difference
operator, as in \eqref{eq:diff} but of smaller dimension. The set
$\cX_i$ is the discrete total variation space 
in \smash{$\R^{n_i}$}, where all elements are centered and have scaled  
norm at most 1.  From well-known results on entropy bounds for
total variation spaces (e.g., from Lemma 11 and Corollary
12 of \citet{wang2016trend}), we have
\begin{equation*}
\log N(r, \cX_i, \|\cdot\|_{n_i}) \leq \frac{C}{r},
\end{equation*}
for a universal constant $C>0$. Hence we may apply 
\Fref{lem:localentropy} with $\cX=\cX_i$ and $\zeta=1/2$: 
for the random variable
\begin{equation*} 
 M_i = \sup \left\{ \frac{|\epsilon_{B_i}^\top w^{(i)}|} 
{\sqrt{n_i}\|w^{(i)}\|_{n_i}^{1/2} } : w^{(i)} \in \mathcal{X}_i
\right\},  
\end{equation*}
we may take $t=\gamma C_0$ in the lemma, for any $\gamma > 1$, 
and conclude that
\begin{equation*}
\P\Big( M_i > \gamma C_0 \sqrt{C} \Big) \le    
\exp(-C_1 \gamma^2C_0^2 C).
\end{equation*}
Notice that we may rewrite $M_i$ as 
\begin{equation*}
M_i = \sup \left\{
\frac{|\epsilon_{B_i}^\top w^{(i)}|}{n_i^{1/4} \| \op^{(i)} w^{(i)} 
  \|_1^{1/2} \| w^{(i)} \|_2^{1/2}} : w^{(i)} \in \R^{n_i}, \;
(\one^{(i)})^\top w^{(i)} = 0 \right\}, 
\end{equation*}
and therefore 
\begin{equation*}
\P\bigg( \sup_{w^{(i)} \in \R^{n_i}, \,
(\one^{(i)})^\top w^{(i)}=0} \; 
\frac{|\epsilon_{B_i}^\top w^{(i)} |}
{\| \op^{(i)}  w^{(i)} \|_1^{1/2} \| w^{(i)} \|_2^{1/2}} > 
\gamma C_0 \sqrt{C} n_i^{1/4} \bigg) \le   
\exp(-C_1 \gamma^2C_0^2 C).
\end{equation*}
Using the union bound, 
\begin{equation*}
\P\left( \sup_{\substack{w^{(i)} \in \R^{n_i}, \,
(\one^{(i)})^\top w^{(i)}=0 \\ i=0,\ldots,s_0}} \; 
\frac{|\epsilon_{B_i}^\top w^{(i)} |}
{\| \op^{(i)}  w^{(i)} \|_1^{1/2} \| w^{(i)} \|_2^{1/2}} > 
\gamma C_0 \sqrt{C} n_i^{1/4} \right) \le   
(s_0+1)\exp(-C_1 \gamma^2C_0^2 C).
\end{equation*}
Define the constants \smash{$c_R=\max\{C_0\sqrt{C},1\}$} and
$C_R=\max\{C_1/2,1\}$. Then this ensures that we have 
\smash{$2C_R \gamma^2 c_R^2 \sqrt{s_0} \geq \log(s_0+1)$} for any 
$\gamma>1$ and any $s_0$, thus  
\begin{equation*}
\P\left( \sup_{\substack{w^{(i)} \in \R^{n_i}, \,
(\one^{(i)})^\top w^{(i)}=0 \\ i=0,\ldots,s_0}} \; 
\frac{|\epsilon_{B_i}^\top w^{(i)} |}
{\| \op^{(i)}  w^{(i)} \|_1^{1/2} \| w^{(i)} \|_2^{1/2}} > 
\gamma c_R (n_i s_0)^{1/4} \right) \leq \exp(-C_R \gamma^2 c_R^2 
\sqrt{s_0}).
\end{equation*}
The proof is completed by noting the following: if $w \in \cR^\perp$,
then \smash{$(\one^{(i)})^\top w_{B_i}=0$} for all $i=0,\ldots,s_0$,
and so on the event in consideration in the last display,   
\begin{align*}
|\epsilon^\top w| &\leq \sum_{i=0}^{s_0} 
|\epsilon_{B_i}^\top w_{B_i}| \le \gamma c_R s_0^{1/4}
\sum_{i=0}^{s_0} n_i^{1/4} \| \op^{(i)} w_{B_i} \|_1^{1/2} \|
w_{B_i}\|_2^{1/2} \\ 
&\le \gamma c_R s_0^{1/4} \left( \sum_{i=0}^{s_0} \| \op^{(i)}
  w_{B_i} \|_1 \right)^{1/2} \left( \sum_{i=0}^{s_0} n_i^{1/2}\|
  w_{B_i} \|_2 \right)^{1/2} \\   
&= \gamma c_R s_0^{1/4} \| \op_{-S_0} w \|_1^{1/2}  
\left( \sum_{i=0}^{s_0} n_i^{1/2} \| w_{B_i} \|_2 \right)^{1/2} \\ 
  &\le \gamma c_R s_0^{1/4} \| \op_{-S_0} w \|_1^{1/2} \left(
    \sum_{i=0}^{s_0} \| w_{B_i} \|_2^2 \right)^{1/4} \left(
    \sum_{i=0}^{s_0} n_i \right)^{1/4} \\
&= \gamma c_R s_0^{1/4} \| \op_{-S_0} w \|_1^{1/2} 
\| w\|_2^{1/2} n^{1/4}, 
\end{align*}
by two successive uses of Cauchy-Schwartz.
\end{proof}

\section{Proofs of Lemmas \ref{lem:how_big}, \ref{lem:how_small}}  
\label{app:how_big_small}

Both proofs follow from standard techniques in convex analysis.

\begin{proof}[Proof of \Fref{lem:how_big}]
We first consider the convex optimization problem 
\begin{equation}
\label{eq:how_big_min}
\min_{x \in \R^m} \; a^\top x \;\;\st\;\; \|x-c\|_2 \leq r,
\end{equation}
whose Lagrangian may be written as, for a dual variable $\lambda \geq
0$, 
\begin{equation*}
L(x,\lambda) = a^\top x + \lambda(\|x-c\|_2^2 - r^2).
\end{equation*}
The stationarity condition is
$a + \lambda(x-c) = 0$, thus $x = c - a/\lambda$.
By primal feasibility, $\|x-c\|_2 \leq r$, we see that we can take  
$\lambda = \|a\|_2/r$, which gives a solution $x=c -r a / \|a\|_2$.
The optimal value in \eqref{eq:how_big_max} is therefore 
$a^\top x = a^\top c -r \|a\|_2$.  By the same logic, the optimal
value of the convex problem
\begin{equation}
\label{eq:how_big_max}
\max_{x \in \R^m} \; a^\top x \;\;\st\;\; \|x-c\|_2 \leq r  
\end{equation}
is $a^\top c + r\|a\|_2$.  Now we can read off the optimal value of 
\eqref{eq:how_big} from those of \eqref{eq:how_big_min},
\eqref{eq:how_big_max}: its optimal value is 
\begin{equation*}
\max\big\{ -\big(a^\top c - r\|a\|_2^2\big),
\; a^\top c + r\|a\|_2 \big\} = |a^\top c| + r\|a\|_2^2, 
\end{equation*}
completing the proof.
\end{proof}

\begin{proof}[Proof of \Fref{lem:how_small}]
The proof is nearly immediate from the proof of Lemma
\ref{lem:how_big}, above. Notice that the optimal value of
\eqref{eq:how_small} is lower bounded by that of
\eqref{eq:how_big_min},    
which we already know is $a^\top c - r\|a\|_2^2$.  But when  
the latter is nonnegative, this is also the optimal value of
\eqref{eq:how_small}. Repeating the argument with $-a$ in place of $a$  
gives the result as stated in the lemma. 
\end{proof}

\section{Proof of \Fref{lem:local_max}}
\label{app:local_max}

To facilitate the proof, we define the concept of a 
{\it local maximum} among the absolute filter values:
a location $i$ is a local maximum if its absolute filter value
$|F_i(\ttheta)|$ is be greater than or equal
to the absolute values at  
neighboring locations, and strictly greater than at least one of these 
values (where the boundary points are treated as having just one
neighboring location). Specifically, a location $i$ must satisfy one
of the following conditions  
\begin{align}
&|F_{i-1}(\ttheta)| < |F_{i}(\ttheta)|, \;
|F_{i+1}(\ttheta)| \leq |F_{i}(\ttheta)|, \quad 
&\text{if $i \in \{b_n+1,\ldots,n-b_n-1\}$},
\label{eq:localmax1} \\
&|F_{i-1}(\ttheta)| \leq |F_{i}(\ttheta)|, \;
|F_{i+1}(\ttheta)| < |F_{i}(\ttheta)|, \quad 
&\text{if $i \in \{b_n+1,\ldots,n-b_n-1\}$},
 \label{eq:localmax2} \\
&|F_{i+1}(\ttheta)| < |F_{i}(\ttheta)| \quad 
&\text{if $i=b_n$}, \label{eq:localmax3} \\
&|F_{i-1}(\ttheta)| < |F_{i}(\ttheta)| \qquad 
&\text{if $i=n-b_n$}. \label{eq:localmax4}
\end{align}
Let \smash{$L(\ttheta)$} denote the set of local maximums derived from 
the filter with bandwidth $b_n$, i.e., the set of locations $i$
satisfying one of the four conditions
\eqref{eq:localmax1}--\eqref{eq:localmax4}.  

We first show that  
\smash{$L(\ttheta) \subseteq I_C(\tilde{\theta})$}. 
Fix \smash{$i \in L(\ttheta)$}. The
boundary cases, $i=b_n$ or $i=n-b_n$, are handled directly by 
the definition of \smash{$I_C(\ttheta)$}.  Hence, we may assume that
$i \in \{b_n+1,\ldots,n-b_n-1\}$, and
without a loss of generality, 
\begin{equation*}
|F_i(\ttheta)| > |F_{i-1}(\ttheta)| 
\quad \text{and} \quad 
|F_i(\ttheta)| \geq |F_{i+1}(\ttheta)|,
\end{equation*}
as well as \smash{$F_i(\ttheta)>0$}.  This means that  
\begin{equation*}
F_i(\ttheta) > |F_{i-1}(\ttheta)| 
\quad \text{and} \quad 
F_i(\ttheta) \geq |F_{i+1}(\ttheta)|,
\end{equation*}
which of course implies  
\begin{equation*}
F_i(\ttheta) > F_{i-1}(\ttheta) 
\quad \text{and} \quad
F_i(\ttheta) \geq F_{i+1}(\ttheta).
\end{equation*}
Applying the
definition of the filter in \eqref{eq:filter} gives
\begin{align*}
\bigg(\sum_{j=i+1}^{i+b_n}\ttheta_j -
\sum_{j=i-b_n+1}^{i}\ttheta_j\bigg) -
\bigg(\sum_{j=i}^{i+b_n-1}\ttheta_j -
\sum_{j=i-b_n}^{i-1}\ttheta_j\bigg) &> 0\\
\bigg(\sum_{j=i+1}^{i+b_n}\ttheta_j -
\sum_{j=i-b_n+1}^{i}\ttheta_j\bigg) -
\bigg(\sum_{j=i+2}^{i+b_n+1}\ttheta_j -
\sum_{j=i-b_n+2}^{i+1}\ttheta_j\bigg) &\geq 0,
\end{align*}
or, after simplification,
\[
\ttheta_{i+b_n} - 2\ttheta_{i} + \ttheta_{i-b_n} > 0 \quad
\text{and} \quad 
-\ttheta_{i+b_n+1} + 2\ttheta_{i+1} - \ttheta_{i-b_n+1} \geq 0.
\]
Adding the above two equations together, we get
\[
-\big(\ttheta_{i+b_n+1}-\ttheta_{i+b_n}\big) +
2\big(\ttheta_{i+1} - \ttheta_{i}\big)
- \big(\ttheta_{i-b_n+1}- \ttheta_{i-b_n}\big) > 0,
\]
which implies at least one of the three bracketed pairs of terms must
be nonzero, i.e., a changepoint must occur at one of the locations
$i$, $i+b_n$, or $i-b_n$.  The proves that 
\smash{$L(\ttheta) \subseteq I_C(\tilde{\theta})$}. 

Now we show the intended statement. Let $j \in
\{b_n,\ldots,n-b_n\}$, and \smash{$i \in L(\ttheta)$} be in the
direction of ascent from $j$ with respect to \smash{$F(\ttheta)$},
where $j \leq i$, without a loss of generality (for the case $i<j$, 
replace $\ell+b_n$ below by $\ell-b_n$). That is, the
location $i$ is a local maximum where   
\begin{equation} 
\label{eq:direction_of_localmax} 
|F_j(\ttheta)| \leq |F_{j+1}(\ttheta)| \leq \ldots \leq
|F_{i-1}(\ttheta)| \leq |F_i(\ttheta)|. 
\end{equation}
If $|i-j| \leq b_n$, then we have the desired result, due
to \eqref{eq:direction_of_localmax}. If $|i-j| > b_n$, then there must 
be at least one location \smash{$\ell  
  \in S(\ttheta)$} such that $|\ell-j|\leq b_n$. (To see this, note
that if \smash{$\ttheta_{j-b_n+1} = \ldots = \ttheta_{j+b_n}$}, then  
\smash{$F_j(\ttheta) = 0$}.)  Thus, at least one of $\ell,\ell+b_n$
lies in between $j$ and $i$, and then again
\eqref{eq:direction_of_localmax} implies the result, completing the  
proof.

\section{Proof of \Fref{lem:lower_bd_linear}}
\label{app:lower_bd_linear}

Our optimization problem may be rewritten as
\begin{equation*}
(\tilde{a},\tilde{b}) = \argmin_{a,b \in \R} \; 
b^2 + \sum_{x=1}^r \Big( (ax+b-a_1 x)^2 + (ax -b-a_2 x)^2 \Big). 
\end{equation*}
Taking a derivative of the criterion with respect to $b$ and setting
this equal to 0 gives 
\begin{equation*}
0 = b + \sum_{x=1}^r ( ax+b-a_1x - ax+ b + a_2x),
\end{equation*}
i.e., we see that the optimal value is
\begin{equation*}
\tilde{b} = (a_1-a_2) \frac{\sum_{x=1}^r x}{2r+1} =  
(a_1-a_2) \frac{r(r+1)}{2(2r+1)}.
\end{equation*}
Taking a derivative of the criterion with respect to $a$ and setting
this equal to 0 gives
\begin{equation*}
0 = b + \sum_{x=1}^r ( ax+b-a_1x + ax-b - a_2x),
\end{equation*}
i.e., we see that the optimal value is
\begin{equation*}
\tilde{a} = \frac{a_1+a_2}{2}.
\end{equation*}
Plugging in \smash{$\tilde{a},\tilde{b}$} into the criterion, and
abbreviating $c_r=r(r+1)/(2(2r+1))$, we can compute the optimal
criterion value: 
\begin{multline*}
(a_1-a_2)^2 c_r^2 + \sum_{x=1}^r \bigg[\bigg( \frac{a_2-a_1}{2} x +
(a_2-a_1) c_r \bigg)^2 + \bigg( \frac{a_1-a_2}{2} x - (a_1 - a_2) c_r
\bigg)^2 \bigg] \\
\begin{aligned}
&= (a_1-a_2)^2 c_r^2 + \frac{1}{2}(a_1-a_2)^2 \sum_{x=1}^r x^2 + 
2r(a_1-a_2)^2 c_r^2 + 2(a_1-a_2)^2 c_r \sum_{x=1}^r x \\
&= (a_1-a_2)^2 \bigg( \frac{r^2(r+1)^2}{4(2r+1)^2}
+ \frac{r(r+1)(2r+1)}{12} 
+ \frac{r^3(r+1)^2}{2(2r+1)^2}
+ \frac{r^2(r+1)^2}{2(2r+1)} \bigg) \\
&\geq (a_1-a_2)^2 \bigg( \frac{r^2}{16} 
+ \frac{r(r+1)(2r+1)}{12}
+ \frac{r^3}{8} 
+ \frac{r^2(r+1)}{4} \bigg) \\
&\geq (a_1-a_2)^2 r^3
\bigg( \frac{1}{6}+\frac{1}{8}+\frac{1}{4} \bigg) \\
&= (a_1-a_2)^2 \frac{13 r^3}{24}.
\end{aligned}
\end{multline*}
\hfill\qedsymbol
\end{document}